\numberwithin{theorem}{section}
\def\bar{\overline}\def\tilde{\widetilde}
\def\RR{\mathbb R} \def\sig{ \varphi }
\def\b{\boldsymbol}\def\set{\mathit\Omega}
\def\blue#1{{\textcolor{black}{#1}}}
\newenvironment{rev}{\color{black}}{\color{black}}
\title{A Nonlinear Spectral Method for Core--Periphery Detection in Networks\thanks{Submitted to the editors on \today \funding{The work of F.T. is supported by the Marie Curie Individual Fellowship ``MAGNET'' n.\ 744014. The work of D.J.H. is supported by 
grant EP/M00158X/1 from the EPSRC/RCUK Digital Economy Programme.
}}}
\author{Francesco Tudisco\thanks{Department of Mathematics and Statistics, University of Strathclyde, G11XH Glasgow, UK (\email{f.tudisco@strath.ac.uk}), (\email{d.j.higham@strath.ac.uk})} \and Desmond J. Higham\footnotemark[2]}
\begin{document}
	
	\maketitle
	
\begin{abstract}
We derive and analyse a new iterative algorithm for detecting 
network core--periphery structure.
Using techniques in nonlinear Perron-Frobenius theory,
we prove global convergence to the unique solution of a relaxed version of a natural discrete 
optimization problem.
On sparse networks, the cost of each iteration scales linearly with the number of nodes, making the  algorithm feasible for large-scale problems.
We give an alternative interpretation of the algorithm from the perspective of maximum likelihood reordering of a 
new logistic core--periphery  
random graph model. This viewpoint also gives a new basis for quantitatively 
judging a core--periphery detection algorithm.
   We illustrate the algorithm on a range of synthetic and real networks, and show that it 
offers advantages over the current state-of-the-art.
\end{abstract}

\begin{keywords} 
Core--periphery, meso--scale structure, networks, nonlinear Perron--Frobenius, nonlinear eigenvalues, spectral method.
\end{keywords}

\begin{AMS}
		05C50, % Graphs and linear algebra (matrices, eigenvalues, etc.) (in ``combinatorics and graph theory'')
		05C70, % Factorization, matching, partitioning, covering and packing (same group as before)
		68R10, % Graph theory (in ``discrete math in relation to computer science'')
		62H30, % Classification and discrimination; cluster analysis (in ``statistics'')
		91C20, % Clustering (in ``Game theory, economics, social and behavioral sciences'')
		91D30, % Social Networks (same group as above)
		94C15  % Applications of graph theory (in ``Information and communication, circuits'')
\end{AMS}

\section{Motivation}
\label{sec:mot}
Large, complex networks record pairwise interactions between components in a system.
In many circumstances, we wish to
summarize this wealth of information by 
 extracting high-level information or visualizing key features.
Two of the most important and well-studied tasks are
\begin{itemize}
\item \emph{clustering}, also known as \emph{community detection},
where we attempt to subdivide a network into smaller modules such that
nodes within each module share many connections and nodes in distinct 
modules share few connections, and
\item determination of \emph{centrality} or \emph{rank}, where we assign a nonnegative value to 
each node such that a larger value indicates a higher level of importance.
\end{itemize}
A distinct but closely related problem is to assign each node to either the \emph{core} or
\emph{periphery} in such a way that core nodes are strongly connected across the whole network whereas peripheral nodes are strongly connected only to core nodes; hence there are relatively 
weak 
periphery--periphery connections.
More generally, we may wish to assign a non-negative value to each node, with a larger value indicating greater ``coreness.''
The images in the centre and right of Figure~\ref{fig:sigmoid} indicate the two-by-two block pattern
associated with a core--periphery structure.

The core--periphery concept emerged implicitly in the study of  
economic, social and scientific citation networks, and was formalized in a seminal paper
of Borgatti and Everett \cite{borgatti2000models}. 
A review of recent work on modeling and analyzing  
core--periphery
structure, and 
related ideas in 
degree assortativity, 
 rich-clubs and nested/bow-tie/onion networks, can be found in 
\cite{csermely2013structure}. 
We focus here on the issue of \emph{detection}: given a large complex network with 
nodes appearing in arbitrary order, can we discover, quantify and visualize any inherent  
core--periphery organization? 

In the next section, we set up our notation and discuss background material.
Many detection algorithms can be motivated from an 
optimization perspective. In 
section~\ref{sec:log} we use such an approach to define and justify  
the logistic core--periphery detection problem. We also show how it relates to a new random graph model that generates core--periphery networks.
In section~\ref{sec:nonlinear_PF} we prove that a suitably relaxed version of 
this discrete optimization problem may be solved efficiently using a nonlinear spectral method. 
The resulting algorithm is 
described in subsection~\ref{subsec:alg}. 
Experiments on real and synthetic networks are performed in section~\ref{sec:experiments},
and some conclusions are given in 
section~\ref{sec:disc}.

\section{Background}
\label{sec:bg}
\subsection{Notation}
We use bold letters to denote vectors and capital letters to denote matrices. The respective entries are denoted with lower case, non-bold symbols; 
for example $\b x$ denotes the vector with $i$th entry $x_i$ and $A$ denotes the matrix with $i,j$th entries $a_{ij}$, $i,j=1, \dots, n$.  We use standard entry-wise notation and operations, so for instance $\b x\geq 0$ denotes a vector with nonnegative entries, $|\b x|$ the vector with entries $(|\b x|)_i=|x_i|$, $e^{\b x}$ the vector with entries $(e^{\b x})_i=e^{x_i}$,  
and 
$\b x \b y$ the vector with entries $(\b x\b y)_i=x_iy_i$. For $p \ge 1$ we denote by $\|\b x\|_p = (x_1^p+\cdots +x_n^p)^{1/p}$ the $p$-norm, with $\mathcal S_p = \{\b x:\|\b x\|_p=1\}$ the $p$-unit sphere, and by $\RR_+^n = \{\b x: x_i \geq 0, \forall i\}$ the cone of vectors with nonnegative entries. 

We use $A\in \RR^{n\times n}$  to represent the adjacency matrix of a network $G=(V,E)$,
with vertex set $V$ and edge set $E$. 
We consider undirected networks, so  
$A$ 
is symmetric.
Nonnegative weights are allowed, with a larger value of $a_{ij}$ 
indicating a stronger connection between nodes $i$ and $j$. 
We assume that the network is connected; that is, every pair of nodes may be joined by a path of
edges having nonzero weight.
For a disconnected network 
we could simply 
consider each connected component separately. 

\subsection{Core--periphery Quality Functions}\label{sec:methods}\label{subsec:quality}
Several models for core--periphery detection are based on the definition of various 
\textit{core--periphery quality functions} $f$ and their optimization over certain discrete or continuous sets of vectors. In this setting,
node $i$ is assigned a value $x^\star_i$, where $\b x^\star$ solves an
 optimization problem of the form 
\begin{equation}\label{eq:quality-function}
\max_{\b x \, \in\, \set}\,  f(\b x),\quad f(\b x) = \sum_{i,j=1}^n a_{ij}\, \kappa(x_i,x_j), 	
\end{equation} 
for some choice of kernel function $\kappa$ and constraint set $\set$. 
A larger value of $x^\star_i$ indicates greater ``coreness'', and the 
overall core--periphery structure may be examined  by visualizing the adjacency matrix 
with nodes ordered  
according to the magnitude of the entries of $\b x^\star$.
We mention below some concrete examples.

The influential work of Borgatti and Everett \cite{borgatti2000models} proposed  a discrete notion of core--periphery structure based on comparing 
the given network with 
a block model that consists of a fully connected core and a periphery that has no
internal edges but is fully connected to the core. Their method aims to find an indicator vector
$\b x$ with binary entries. So $x_i = 1$ assigns nodes to the core and 
$x_i=0$ assigns nodes to the periphery. 
By defining the matrix $C=(c_{ij})$ as $c_{ij} = 1$ if $x_i = 1$ or $x_j = 1$ and $c_{ij} = 0$ otherwise, they look at the quantity $
\rho_C = \sum_{ij} a_{ij}c_{ij}
$
and aim to compute the binary vector $\b x$ that maximizes $\rho_C$ among all possible reshufflings of $C$ such that the number of 1 and 0 entries is
preserved.  Clearly this method corresponds to \eqref{eq:quality-function} with $\kappa(x,y) = \mathrm{sign}(x+y)$ and $\set = \{\b x\in \{0,1\}^n : \sum_i x_i = m\}$, for a fixed positive integer $m\leq n$. 

Another popular technique, used for instance in UCINET \cite{ucinet}, is based  on the best rank-one approximation of the off-diagonal entries of $A$. In other words, this method seeks $\b x\in \RR^n$ that  minimizes $\sum_i\sum_{j\neq i}(a_{ij}-x_ix_j)^2$. This is done via the MINRES algorithm, as discussed, for instance, in \cite{minres}.
Writing
 $$ 
A = \lambda_1 \b v_1 \b v_1^T + \lambda_2 \b v_2 \b v_2^T +\cdots + \lambda_n \b v_n \b v_n^T\, ,
 $$
where 
 $\lambda_1>0$ is the largest eigenvalue of $A$ and $\b v_1$ the corresponding eigenvector, it follows that the 
 the optimal rank-one matrix $\b x\b x^T$ we are looking for is strictly related to $\lambda_1 \b v_1 \b v_1^T$. Therefore the least-squares problem is equivalent to maximizing the Rayleigh quotient of $A$; that is,  the following optimization problem
 \begin{equation}
\max_{\b x \neq 0}\frac{\b x^T A \b x}{\b x^T \b x}.
\label{eq:RQ} 
\end{equation}
This, in turn, coincides with \eqref{eq:quality-function} for 
$\kappa(x,y) = xy$ and $\set = \{\b x : \b x^T \b x=1\}=\mathcal S_2$. Moreover, as the matrix $A$ is symmetric, nonnegative and irreducible, by the Perron-Frobenius theorem, the maximizer  $\b v_1$ is unique and entrywise positive and the corresponding eigenvalue $\lambda_1$ coincides with the spectral radius of $A$. Following a different construction, the use of the spectral radius and the associated Perron eigenvector of $A$ for detecting core--periphery is also considered in \cite{mondragon2016network}. Note that, thanks to the Perron-Frobenius theorem, it follows that the constraint set in \eqref{eq:quality-function} can be chosen as $\set = \mathcal S_2^+=\mathcal S_2\cap \RR^n_+$. This observation has practical importance 
 because it constrains the solution space. As we discuss in Section~\ref{sec:nonlinear_PF}, 
this feature is shared by our nonlinear core--periphery model, where existence and uniqueness are 
proved using a customized nonlinear Perron-Frobenius-type theorem. Moreover, note that having a nonnegative solution $\b x$ to \eqref{eq:quality-function} not only allows for a core--periphery assignment or ranking, but also implicitly produces a continuous core--periphery score for the nodes.
We note that the Perron--Frobenius eigenvector of $A$ is also a well-known nodal centrality measure
\cite{EH10}.

The concept of core--periphery quality measure with general kernel function, as formulated in \eqref{eq:quality-function},  was introduced by Rombach et al.\ in \cite{rombach2014core}. Those authors focus on the choice $\kappa(x,y)=xy$ and introduce a novel continuous constraint set defined in terms of two parameters $0\leq \alpha, \beta \leq 1$ as follows
\begin{equation}\label{eq:C_alpha_beta}
	\set = C_{\alpha, \beta} = \left\{\b x \in \RR^n : \begin{array}{ll}x_i = \frac{i(1-\alpha)}{2\lfloor\beta n\rfloor} & \text{for } i=1,\dots,\lfloor\beta n\rfloor,\\  x_i = \frac{(i-\lfloor\beta n\rfloor)(1-\alpha)}{2(n-\lfloor\beta n\rfloor)}+\frac{1+\alpha}{2} &\text{for } i = \lfloor\beta n\rfloor +1, \dots, n \end{array}\right\}.
\end{equation}
Here $\alpha$ is used to tune the score jump between the peripheral node with highest score and the core node with lowest score, whereas $\beta$ is used to set the size of the core set. Note that,  as $0\leq \alpha, \beta \leq 1$, we have $C_{\alpha,\beta}\subseteq \RR^n_+$ and thus, as for the Perron--Frobenius eigenvector of $A$, the maximizer of \eqref{eq:quality-function} with $\kappa(x,y)=xy$ and $\set = C_{\alpha,\beta}$  is a nonnegative vector whose entries define a core--periphery score value, called the \textit{aggregate core score} in \cite{rombach2017core}. %also this one gives a score and not only a ranking

\subsection{The Optimization Problem}\label{ssubsec:opt}
The models proposed in \cite{borgatti2000models} and 
\cite{rombach2014core} lead to
discrete optimization problems whose global solution cannot be computed for large graphs. 
Both papers propose computational methods that deliver 
approximate solutions
but do not come with guarantees of accuracy. The combinatorial optimization problem of \cite{borgatti2000models} is solved via random reshuffling. For the model proposed in \cite{rombach2014core}
a simulated--annealing algorithm is used.
The presence of the two parameters, $\alpha$ and $\beta$, 
adds a complication, which is addressed there by considering all $(\alpha,\beta)$ 
values on a discrete uniform lattice in $[0,1]^2$.  Clearly, refining the discretization level 
improves the approximation to the solution but  raises the computational cost.

For the model used in UCINET based on MINRES
\cite{ucinet,minres},
recalling 
(\ref{eq:RQ}) we note that 
an efficient approach is to 
recast the optimization problem into the computation of a matrix eigenvector, for which well-established algorithms are available.

Since our approach fits into the core--periphery quality function optimization approach
of 
\cite{rombach2014core,rombach2017core}, we will use the method developed there, 
with $\kappa(x,y)=xy$ and $\Omega=C_{\alpha,\beta}$, as a baseline for  comparison in our experiments in Section~\ref{sec:experiments}. 

Although algorithms based on other choices of the kernel function $\kappa$ have not been considered in the literature so far, both in Section 2.2.1 of \cite{rombach2014core} and Section 4.2.1 of \cite{rombach2017core} it is pointed out that an ideal core--periphery kernel function
is 
\begin{equation}\label{eq:mu_alpha}
 	\kappa(x,y)=\mu_\alpha(x,y)=(|x|^\alpha+|y|^\alpha)^{1/\alpha}
 \end{equation} 
for $\alpha>0$ large. In fact this function is related to core--periphery structure in a very natural way, as we discuss in the next section. 

\section{Logistic Core--Periphery Detection Problem}\label{sec:log}
We propose a new model based on the kernel $\kappa(x,y) = \max\{|x|,|y|\}$.
Note that this kernel function arises as the $\alpha \to \infty$ limit of \eqref{eq:mu_alpha}.
Focusing for now on the ranking problem, our goal is to 
determine a core--periphery ranking vector that assigns to each node a distinct integer 
between $1$ and $n$; with a lower rank denoting a more peripheral  node. Clearly any such  ranking vector is nothing but a permutation vector $\b \pi$, where  $i\mapsto \pi_i$ is a permutation of the set $\{1,\dots,n\}$. Therefore,  if $\mathcal P_n$ is the set of permutation vectors of $n$ entries, we formulate our core--periphery detection problem as follows
\begin{equation}\label{eq:our_model}
\max_{\b \pi \, \in \, \mathcal P_n}f_{\infty}(\b \pi), \quad \text{where}\quad f_\infty(\b \pi) = \sum_{i,j=1}^n a_{ij}\max\{\pi_i,\pi_j\}. 
\end{equation} 

 We will see in Section \ref{sec:nonlinear_PF} that, in practice, finite but large enough values of $\alpha$ 
in \eqref{eq:mu_alpha} 
provide an accurate approximation of $\max\{x,y\}$.
Moreover, relaxing from $\mathcal P_n$ to 
 $\mathcal S_p^+ = \mathcal S_p \cap \RR^n_+$ 
 allows for a globally convergent, easily implementable and computationally 
feasible algorithm.

We will refer to \eqref{eq:our_model} as the \textit{logistic core--periphery detection problem}.
In order to motivate this name and the model itself,  we discuss in the next section
a random graph model that provides a natural and flexible
 model for core--periphery structure. 

\subsection{Logistic Core--periphery Random Graph Model}\label{sec:logistic_random_model}
We now consider random graph models 
that generate core--periphery structure.
For this subsection only, we restrict to the case of unweighted, or binary, networks.
We focus on models where the nodes can be placed in a natural ordering, represented by a permutation vector, so $i\mapsto \pi_i$. In this natural ordering, for every pair of nodes
$i$ and $j$ the 
probability of an edge will be a function 
of $\pi_i$ and $\pi_j$. Moreover, these events will be independent.
We note that such models have been studied in other contexts; for example, in an early reference Grindrod 
\cite{Grin02} 
used this framework to define a class of range-dependent graphs 
that captures features of the classic Watts-Strogatz model.

A simple core--periphery model of this type arises when 
edges are present with probability one within the core and between core and periphery, and with probability zero among peripheral nodes. This model is considered for instance in \cite{borgatti2000models,rombach2014core}.  
In this model  there exists a permutation of the indices $i\mapsto \pi_i$ such that an edge connecting two different nodes $i$ and $j$ exists with independent probability 
$
\mathbb P(i\sim j) = H_t\left(\frac 1 n\max\{\pi_i,\pi_j\}\right), 
$
where, for $t\in (0,1)$, $H_t$ is the Heaviside function $H_t(x)=1$ if $x\geq t$ and $H_t(x)=0$ otherwise. The parameter $t$ allows us to tune the size of the core and of the periphery. 
Figure~\ref{fig:sigmoid} (center) shows an example matrix  whose $ij$-th entry is the probability $\mathbb P(i\sim j)$ from this model, for $t=1/2$ and $\pi_i=n-i$ for any $i$. 
\begin{figure}[t]
\includegraphics[width=.26 \textwidth]{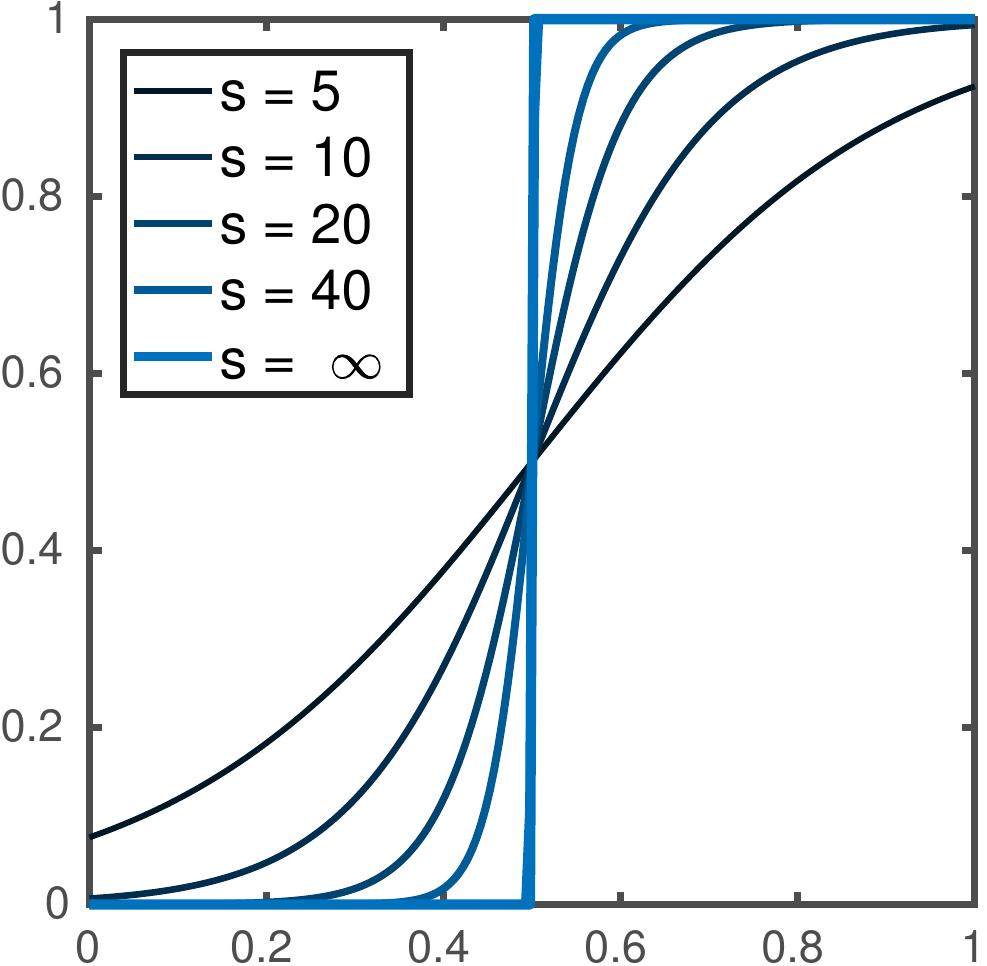}\hfill
\includegraphics[width=.25\textwidth]{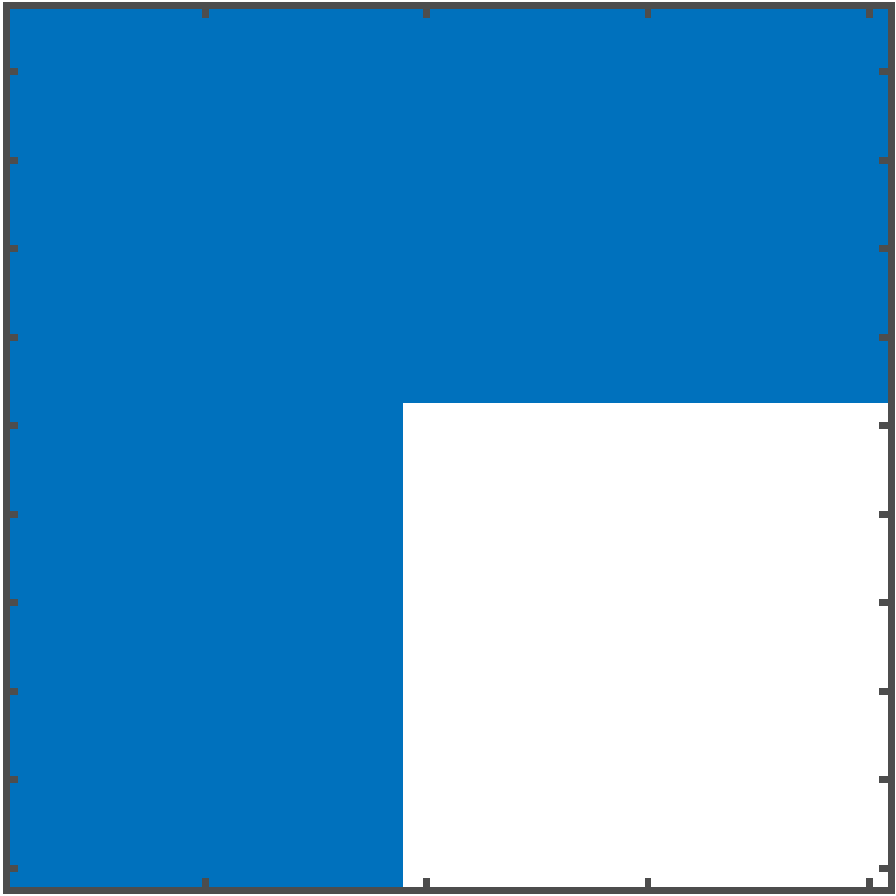}\hfill
\includegraphics[width=.305\textwidth]{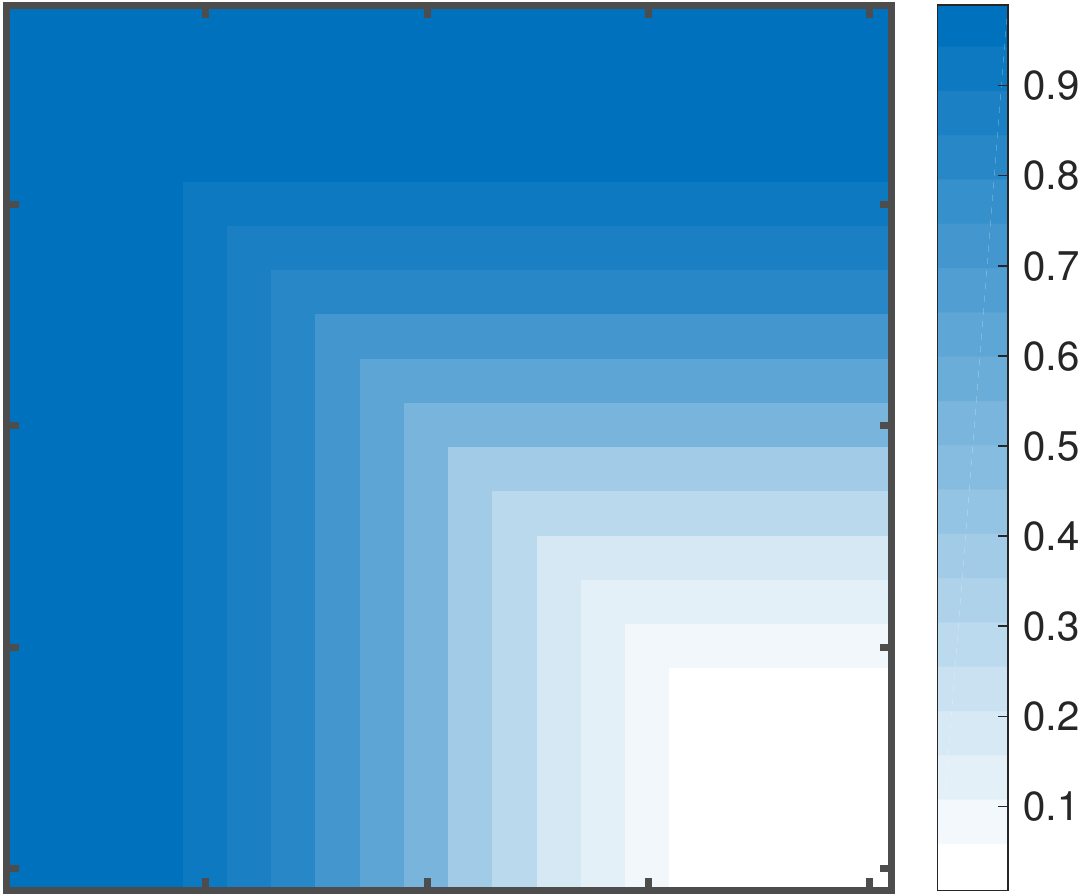}
\caption{Left: $\sigma_{s,t}(x)$ for $t=1/2$ and $s \in \{5, 10, 20, 40\}$. The
piecewise linear plot is the Heaviside function $H_{t}(x)$ which corresponds to $\lim_{s\to\infty}\sigma_{s,t}(x)$. Center: heatmap of $A$ with entries $a_{ij} = H_{1/2}(\max\{1-i/n, 1-j/n\})$. Right: heatmap of $A$ with entries $a_{ij}=\sigma_{10,1/2}(\max\{1-i/n, 1-j/n\})$. In these heatmaps, in order to emphasize the overall structure,
 the diagonal entries have been colored with the probabilities $\mathbb P(i\sim i)=H_{1/2}(1-i/n)$ and $\mathbb P(i\sim i)=\sigma_{10,1/2}(1-i/n)$, respectively. However, the associated random graphs have no self-loops and  so
the actual diagonal probabilities are $\mathbb P(i\sim i) = 0$.}\label{fig:sigmoid}
\end{figure}
The Heaviside function $H_t$ is a discontinuous step function, and 
it leads to a idealized all-or-nothing structure. 
Instead, we may consider a family of continuous approximations to $H_t$ based on the logistic sigmoid function. 
For $s,t\in \RR$, $s\geq 0$ we define
$$
\sigma_{s,t}(x) = \frac{1}{1+e^{-s(x-t)}} \, .
$$
Note that, for any fixed $x,t\in \RR$ we have $\lim_{s\to\infty}\sigma_{s,t}(x)=H_t(x)$. 
Examples are plotted in Figure~\ref{fig:sigmoid} (left). 

We now introduce the  random graph model where an edge connecting two different nodes $i$ and $j$ exists with independent probability
$$
\textstyle{\mathbb P(i\sim j) = \sigma_{s,t}\left(\frac 1 n \max\{n-i,n-j\}\right) \, .}
$$ 
We refer to this as the \emph{logistic core--periphery random graph model}. 
The right-most plot on Figure~\ref{fig:sigmoid} shows a $20\times 20$ example matrix whose $ij$-th entry 
is the corresponding probability $\mathbb P(i\sim j)$, for $s=10$ and $t=1/2$. 
We see that, relative to the Heaviside version, this model gives a smoother transition from 
core to periphery, and has a built-in notion of ranking within each group.  The relevance of this model to capture core and perhipheral nodes has been also recently pointed out in \cite{jia2018detecting}.

\begin{rev}We are interested in the circumstance where a core--periphery structure is present in the graph, but must be discovered. In practice, 
our task is to find a suitable reordering of the nodes that highlights the presence of core and periphery.  
A natural approach is then to 
find the permutation of indices $\b \pi \in \mathcal P_n$ that maximizes the 
likelihood, under the assumption of a logistic core--periphery structure.\end{rev}
This likelihood is given by
\begin{equation}\label{eq:LCP_probability}
\nu(\b \pi)=\prod_{i\sim j}\sig(\pi_i,\pi_j)\, \prod_{i\not\sim j}\left(1-\sig(\pi_i,\pi_j)\right), 
\end{equation} 
where,  for the sake of brevity, we let $\sig(x,y)=\sigma_{s,t}(\frac 1 n \max\{x,y\})$.
We now show that solving the proposed logistic core--periphery detection problem \eqref{eq:our_model} 
is equivalent to solving this maximum likelihood reordering problem.

\begin{theorem} \label{thm:mlp}
	$\b \pi^\star\in \mathcal P_n$ is a permutation that maximizes $\nu(\b \pi)$ if and only if $\b \pi^\star$ is a solution of \eqref{eq:our_model}.
\end{theorem}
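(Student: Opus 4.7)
The plan is to show the equivalence by taking the logarithm of $\nu(\b\pi)$ and reducing it, modulo terms that do not depend on $\b\pi$, to a positive multiple of $f_\infty(\b\pi)$. The key algebraic input is the identity
\begin{equation*}
\log\sigma_{s,t}(x) - \log(1-\sigma_{s,t}(x)) = s(x-t),
\end{equation*}
which follows immediately from the definition of the logistic sigmoid. This lets us replace the two kinds of log-probabilities in $\log\nu$ by a common $\log(1-\sigma)$ term plus a linear correction proportional to $\max\{\pi_i,\pi_j\}$.

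Concretely, I would write $\log\nu(\b\pi)$ as a sum over unordered pairs $\{i,j\}$, using the adjacency indicator $a_{ij}\in\{0,1\}$:
\begin{equation*}
\log \nu(\b\pi) = \sum_{i<j}\Bigl[a_{ij}\log \sig(\pi_i,\pi_j) + (1-a_{ij})\log(1-\sig(\pi_i,\pi_j))\Bigr].
\end{equation*}
Using the sigmoid identity applied to $x = \tfrac{1}{n}\max\{\pi_i,\pi_j\}$, this rearranges to
\begin{equation*}
\log\nu(\b\pi) = \frac{s}{n}\sum_{i<j} a_{ij}\max\{\pi_i,\pi_j\} - st \sum_{i<j} a_{ij} + \sum_{i<j}\log\bigl(1-\sig(\pi_i,\pi_j)\bigr).
\end{equation*}
The middle term depends only on the number of edges and so is a constant of $\b\pi$.

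The main conceptual step, and the only point that needs justification beyond manipulation, is that the third sum $\sum_{i<j}\log(1-\sig(\pi_i,\pi_j))$ is also independent of $\b\pi$. This holds because, as $\{i,j\}$ ranges over all unordered pairs in $\{1,\dots,n\}$, the multiset $\{\{\pi_i,\pi_j\}:i<j\}$ coincides with the collection of all unordered pairs of $\{1,\dots,n\}$, regardless of the permutation $\b\pi$. Hence the multiset $\{\max\{\pi_i,\pi_j\}:i<j\}$ is the same for every $\b\pi\in \mathcal P_n$ (each value $k$ appears with multiplicity $k-1$), so the sum of the $\b\pi$-dependent quantities $\log(1-\sig(\pi_i,\pi_j))$ is a permutation-invariant constant.

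Combining these observations, $\log\nu(\b\pi) = \frac{s}{n}\sum_{i<j} a_{ij}\max\{\pi_i,\pi_j\} + C$ for some $C$ not depending on $\b\pi$, and using the symmetry of $A$ together with the absence of self-loops yields $\sum_{i<j}a_{ij}\max\{\pi_i,\pi_j\} = \tfrac{1}{2}f_\infty(\b\pi)$. Since $s>0$, maximizing $\nu$ over $\mathcal P_n$ is equivalent to maximizing $f_\infty$ over $\mathcal P_n$, giving the desired statement. I do not anticipate any genuine obstacle; the only easy-to-miss point is the invariance of the normalizing sum, which rests on the permutation-invariance of the multiset of pairwise maxima.
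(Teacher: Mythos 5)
Your proof is correct and is essentially the paper's argument in additive form: the paper multiplies and divides $\nu(\b\pi)$ by $\prod_{i,j}(1-\sig(\pi_i,\pi_j))$ (Grindrod's trick) and then applies the same identity $\log(\sig/(1-\sig))=s(\tfrac{1}{n}\max\{x,y\}-t)$, which is exactly your adding and subtracting of $\log(1-\sig)$ over all pairs. Your explicit justification that $\sum_{i<j}\log(1-\sig(\pi_i,\pi_j))$ is permutation-invariant (via the invariance of the multiset of pairwise maxima) is a welcome clarification of a step the paper passes over by saying only that this factor ``does not depend on the graph.''
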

\begin{proof}
  Our proof exploits a very useful trick that Grindrod 
\cite{Grin02}
used in the case of a range-dependent random graph:
 The likelihood  $\nu(\b \pi)$ can be equivalently written as 
	\begin{equation*}\label{eq:nu_2}
	\nu(\b \pi) = \prod_{i\sim j} \frac{\sig(\pi_i,\pi_j)}{1-\sig(\pi_i,\pi_j)}\, \prod_{i,j=1}^n 1-\sig(\pi_i,\pi_j)\, .
	\end{equation*}
	As the right-hand factor does not depend on the graph, maximizing $\nu(\b \pi)$ is equivalent to maximizing the left-hand factor. Thus, taking the logarithm on both sides we observe that $\b \pi^\star$ maximizes $\nu$ if and only if it maximizes 
	$$\sum_{ij=1}^n a_{ij}\log\left(\frac{\sig(\pi_i,\pi_j)}{1-\sig(\pi_i,\pi_j)}\right)\,.$$  Now, using the definition of $\sig(x,y)$ in terms of the logistic sigmoid function, a short computation shows that $\log(\sig(x,y)/(1-\sig(x,y))=s(\frac 1 n \max\{x,y\}-t)$, for any $x,y,t\in \RR$, $s\geq 0$. Therefore, $\b \pi^\star$ maximizes $\nu$ if and only if it maximizes the core--periphery quality function $\sum_{ij}a_{ij}\max\{\pi_i,\pi_j\}$, which concludes the proof.
\end{proof}

In words, Theorem~\ref{thm:mlp} shows that in the case of unweighted networks, solving the 
logistic core--periphery detection problem
\eqref{eq:our_model} 
is equivalent to solving the maximum likelihood reordering problem
(\ref{eq:LCP_probability}) under the assumption that the network was generated from
the logistic core--periphery random graph model. \begin{rev} This is somewhat analogous to a known phenomenon in the community detection case \cite{newman2016equivalence}. \end{rev}

We mention that  core--periphery detection via
likelihood maximization on a random graph model was also proposed in 
\cite{Zhang15}.  
There, the authors used a stochastic block model where nodes are independently
assigned to the core with probability $\gamma_1$ and to the periphery with probability
$1- \gamma_1$. 
 Core--core, core--periphery and periphery--periphery connections then appear with
independent probabilities $p_{11}$, $p_{12}$ and $p_{22}$, with 
$p_{11} > p_{12} > p_{22}$. 
Infering model parameters  by maximizing the likelihood over all possible node bi--partitions
leads to a core--periphery assignment.  
Because solving this discrete optimization problem is not practicable for large networks, the 
authors develop an approximation technique based on expectation maximization and 
belief propagation.
We emphasize that this random graph reordering/partitioning framework applies to unweighted 
(binary) networks.

\section{Nonlinear Spectral Method for Core--periphery Detection}\label{sec:nonlinear_PF}
 In this section we introduce an iterative method for the logistic core--periphery detection
problem \eqref{eq:our_model}
and prove that it converges globally to the solution of a relaxed problem. We 
refer to this as a \textit{nonlinear spectral method} for two reasons.
First, its derivation and analysis 
are inspired by recent work in nonlinear Perron-Frobenius theory \cite{gautier2016globally,mhpfpaper,tudisco2017node,gautier2018contractivity}. Second, 
as shown in Lemma~\ref{lem:eigenvalue_problem}, there is an 
equivalence between \eqref{eq:our_model} and a nonlinear eigenvalue problem. 
Recall that the network is assumed to be (nonnegatively) weighted, connected and undirected.

The logistic core--periphery model \eqref{eq:our_model} is a combinatorial optimization problem whose exact solution is not feasible for large scale networks. We therefore introduce two relaxations that lead to a new ``smooth'' logistic core--periphery problem whose solution may be computed efficiently with a new nonlinear spectral method.

Given $\alpha>1$, we replace the discontinuous kernel function $\max\{|x|,|y|\}$ with
\begin{equation}
\mu_\alpha(x,y) = (|x|^\alpha + |y|^\alpha)^{1/\alpha}.
\label{eq:mualpha}
\end{equation}
 As mentioned 
at the end of subsection~\ref{subsec:quality},
$\max\{|x|,|y|\}$ is the limit of $\mu_\alpha(x,y)$ for $\alpha\to\infty$. More precisely, 
letting
\begin{equation}
 f_\alpha(\b x) = \sum_{ij}a_{ij}\mu_\alpha(x_i,x_j), 
\label{eq:falpha}
\end{equation}
a simple computation using the H\"older inequality reveals that 
\begin{equation}\label{eq:alpha_approx} 
f_\infty(\b x)\leq f_\alpha(\b x)\leq 2^{1/\alpha} f_\infty(\b x)\, ,
\end{equation}
for any $\alpha>1$. Therefore when $\alpha$ is large enough, using $f_\alpha$ in place of $f_\infty$ in \eqref{eq:our_model} provides a very accurate approximation. 

Second, we relax the discrete constraint set $\mathcal P_n$ into a continuous one. In doing this we note that every vector in $\mathcal P_n$ is entry-wise nonnegative and has fixed length. For instance, $\|\b x\|_1 = \frac 1 2 n (n+1)$, for any $\b x \in \mathcal P_n$. 
Note that the normalization constant $\frac 1 2 n (n+1)$ can be chosen arbitrarily. 
In fact, the function $f_\alpha$ we are considering is positively $1$-homogeneous; that is, for any $\lambda>0$ we have $f_\alpha(\lambda \b x)  = \lambda f_\alpha(\b x)$. This implies that if $\b x$ maximizes $f_\alpha$ among all the vectors of norm exactly $1$ then, for any $a>0$, $a\b x$ maximizes $f_\alpha$ among all the vectors of norm exactly $a$. We therefore relax $\mathcal P_n$ into a sphere of nonnegative vectors. For convenience, we choose the $p$-sphere $\mathcal S_p=\{\b x : \|\b x\|_p=1\}$ and let $\mathcal S_p^+ = \mathcal S_p \cap \RR^n_+$.

Overall, for $\mu_\alpha(x,y)$ in
(\ref{eq:mualpha})  and 
$f_\alpha(\b x) $ in (\ref{eq:falpha}), 
we modify the original logistic core--periphery problem \eqref{eq:our_model} into 
\begin{equation}\label{eq:maximization_problem}
\max_{\b x\in \mathcal S_p^+} f_\alpha(\b x), \quad\text{where}\quad f_\alpha(\b x)=\sum_{i,j=1}^n a_{ij}\, \mu_\alpha(x_i,x_j) \, .
\end{equation}

We devote the remainder of  this section to proving that, for any $\alpha>0$ and any $p>\alpha$, the relaxed logistic core--periphery model (\ref{eq:maximization_problem}) has a unique,
entry-wise positive
 solution that can be  efficiently computed via a globally convergent iterative method.

\def\epsilon{\varepsilon}
\subsection{Existence and Uniqueness of a Solution to the Relaxed Problem}
We begin by observing that  the function $f_\alpha$ attains its maximum on a positive vector. 
\begin{lemma}\label{lem:the_solution_is_positive}
The problem \eqref{eq:maximization_problem} is solved by a vector 
$\b x^\star$ such that $\b x^\star>0$.
\end{lemma}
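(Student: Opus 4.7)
The plan is to separate the proof into two steps: existence, by a routine compactness argument, and strict positivity, by a local perturbation argument at the boundary that exploits the irreducibility of $A$. For existence, $\mathcal S_p^+$ is closed and bounded in $\RR^n$ and $f_\alpha$ is continuous, so a maximizer $\b x^\star \in \mathcal S_p^+$ exists.

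For positivity I argue by contradiction. Suppose $I := \{i : x_i^\star = 0\}$ is non-empty, and let $J := \{j : x_j^\star > 0\}$. Since $\|\b x^\star\|_p = 1$ we have $J \neq \emptyset$. Connectedness of $G$ (equivalently, irreducibility of $A$) forces the existence of $i_0 \in I$ and $j_0 \in J$ with $a_{i_0 j_0} > 0$. I would then perturb $\b x^\star$ by activating the coordinate $x_{i_0}^\star$: consider $\b y(\epsilon) = \b x^\star + \epsilon\, \b e_{i_0}$ for small $\epsilon > 0$ and its normalization $\b z(\epsilon) = \b y(\epsilon)/\|\b y(\epsilon)\|_p \in \mathcal S_p^+$, aiming to show $f_\alpha(\b z(\epsilon)) > f_\alpha(\b x^\star)$ for some small $\epsilon$.

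Since $\|\b y(\epsilon)\|_p^p = 1 + \epsilon^p$ and $f_\alpha$ is $1$-homogeneous,
\[
f_\alpha(\b z(\epsilon)) = f_\alpha(\b y(\epsilon))\bigl(1 - \epsilon^p/p + O(\epsilon^{2p})\bigr).
\]
Only edges incident to $i_0$ contribute to $\Delta := f_\alpha(\b y(\epsilon)) - f_\alpha(\b x^\star)$, and each such contribution is non-negative. Using symmetry of $A$ together with the Taylor expansion
\[
\mu_\alpha(\epsilon, x_{j_0}^\star) - x_{j_0}^\star = \frac{\epsilon^\alpha}{\alpha\,(x_{j_0}^\star)^{\alpha-1}} + O(\epsilon^{2\alpha}),
\]
the $(i_0,j_0)$ term alone yields $\Delta \ge c\,\epsilon^\alpha$ for some constant $c > 0$ depending only on $a_{i_0 j_0}$ and $x_{j_0}^\star$. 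Plugging this in,
\[
f_\alpha(\b z(\epsilon)) - f_\alpha(\b x^\star) \;\ge\; c\,\epsilon^\alpha \;-\; f_\alpha(\b x^\star)\,\epsilon^p/p \;+\; O(\epsilon^{2\alpha}+\epsilon^{p+\alpha}+\epsilon^{2p}),
\]
which is strictly positive for $\epsilon$ small enough because $p > \alpha$, contradicting the optimality of $\b x^\star$.

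The main obstacle is identifying the correct perturbation and tracking the leading-order balance. The decisive point is that activating a zero coordinate produces an $O(\epsilon^\alpha)$ gain, because $\mu_\alpha(\cdot,y)$ is only $\alpha$-H\"older at the origin for fixed $y > 0$, while the renormalization on $\mathcal S_p$ costs only $O(\epsilon^p)$. The standing assumption $p > \alpha$ is exactly what lets the gain dominate, and without it this perturbation argument would break down.
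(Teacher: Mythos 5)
Your proof is correct and follows the same basic strategy as the paper: existence by compactness, then a contradiction argument that activates a zero coordinate $x_{i_0}^\star \mapsto \epsilon$ and uses connectedness to find a neighbour $j_0$ with $a_{i_0 j_0}>0$ and $x_{j_0}^\star>0$, so that the kernel term $\mu_\alpha(\epsilon,x_{j_0}^\star)$ strictly increases. Where you go beyond the paper is in handling feasibility: the paper compares $f_\alpha(\b y)$ directly with $f_\alpha(\b x^\star)$ for the perturbed vector $\b y$, which does \emph{not} lie on $\mathcal S_p^+$, and never renormalizes. Your version projects back onto the sphere and balances the $O(\epsilon^\alpha)$ gain from $\mu_\alpha(\epsilon,x_{j_0}^\star)-x_{j_0}^\star$ against the $O(\epsilon^p)$ renormalization cost, which is exactly the step that makes the standing hypothesis $p>\alpha$ enter (and it must enter: for $p\le\alpha$ the maximizer on $\mathcal S_p^+$ of, say, a single-edge graph is a coordinate vector, so the lemma fails). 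In short, your argument is a tightened version of the paper's proof that closes a real gap; no objections.
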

\begin{proof}
 As  $f_\alpha(\b x) = f_\alpha(|\b x|)$ for any $\b x\in \RR^n$, we easily deduce that the maximum is attained on a vector $\b x^\star\geq 0$. Now suppose that there exists $1\leq k\leq n$ such that $x^\star_k=0$. As the graph is connected, there exists $\ell$ such that $a_{k\ell}>0$. Then the vector $\b y$ defined by $y_i = x^\star_i$ for $i\neq k$ and $y_k=\epsilon >0$ would be such that 
\begin{align*}
f_\alpha(\b y) &= \sum_{i,j\neq k}a_{ij}\mu_\alpha(x^\star_i,x^\star_j)+2\sum_{j}a_{kj}\mu_\alpha(x_j^\star,\epsilon)\\
 &\geq \sum_{i,j\neq k}a_{ij}\mu_\alpha(x^\star_i,x^\star_j)+2a_{k\ell}\mu_\alpha(x_\ell^\star,\epsilon) >f_\alpha(\b x^\star), 
\end{align*}
which contradicts the maximality of $\b x^\star$. We conclude that the solution
of  \eqref{eq:maximization_problem} is attained on an entry-wise positive vector. 
\end{proof}

  Now, by using the positive $1$-homogeneity of $f_\alpha$, we show that the constrained optimization problem \eqref{eq:maximization_problem} is equivalent to an unconstrained problem for the normalized function $f_\alpha(\b x)/\|\b x\|_p$. 
\begin{lemma}\label{lem:balls_equivalence}
For any $p>1$ and any $\alpha>1$ we have
$$
\max_{\b x\in \mathcal S_p^+}f_\alpha(\b x) = \max_{\b x\in\RR^n}\frac{f_\alpha(\b x)}{\|\b x\|_p}. 
$$
\end{lemma}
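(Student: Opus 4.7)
The plan is to exploit two symmetry properties of $f_\alpha$ that are already observed in the paper: positive $1$-homogeneity, $f_\alpha(\lambda\b x)=\lambda f_\alpha(\b x)$ for $\lambda>0$, and invariance under entrywise absolute value, $f_\alpha(\b x)=f_\alpha(|\b x|)$, which follows directly from the definition of $\mu_\alpha$ in \eqref{eq:mualpha}. The $p$-norm $\|\cdot\|_p$ shares both of these properties (in particular $\|\b x\|_p=\||\b x|\|_p$), so the ratio $f_\alpha(\b x)/\|\b x\|_p$ is scale-invariant and absolute-value-invariant on $\RR^n\setminus\{0\}$.

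First I would establish the easy inequality $\leq$: any $\b x\in\mathcal S_p^+$ has $\|\b x\|_p=1$, so $f_\alpha(\b x)=f_\alpha(\b x)/\|\b x\|_p$, and therefore the supremum over $\mathcal S_p^+$ is bounded above by the supremum of the ratio over all nonzero vectors. Next, for the reverse inequality, I would take any $\b x\in\RR^n$ with $\b x\neq 0$ and form $\b z=|\b x|/\|\b x\|_p$. By construction $\b z\in\mathcal S_p^+$, and combining the two invariances yields
\begin{equation*}
f_\alpha(\b z)=\frac{f_\alpha(|\b x|)}{\|\b x\|_p}=\frac{f_\alpha(\b x)}{\|\b x\|_p},
\end{equation*}
so the value of the ratio at $\b x$ is attained by some point of $\mathcal S_p^+$. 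Taking the supremum gives $\geq$, completing the identity.

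The only subtlety is the domain of the right-hand side: the expression $f_\alpha(\b x)/\|\b x\|_p$ is undefined at $\b x=0$, so strictly the maximum must be interpreted over $\RR^n\setminus\{0\}$, which I would state explicitly at the start of the proof. Existence of the maximum on $\mathcal S_p^+$ is by Lemma~\ref{lem:the_solution_is_positive} (and continuity plus compactness of $\mathcal S_p^+$), and by the argument above this value is then also attained by the ratio, so both maxima are genuine maxima rather than mere suprema. I do not anticipate any real obstacle — the proof is essentially a one-paragraph homogeneity argument, and the main thing to get right is making the absolute-value step explicit so that the constraint $\b x\geq 0$ on the left is not lost when one starts from an arbitrary vector on the right.
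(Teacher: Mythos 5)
Your proof is correct and rests on the same two facts the paper uses, namely positive $1$-homogeneity of $f_\alpha$ and the invariance $f_\alpha(\b x)=f_\alpha(|\b x|)$; the paper merely packages the argument as a chain of inequalities passing through the closed unit ball, while you give a direct two-sided inequality. Your explicit remarks about excluding $\b x=0$ from the right-hand maximum and about attainment of the maxima are sensible refinements of what the paper leaves implicit.
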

\begin{proof}
By the $1$-homogeneity of $f_\alpha$ we have the following chain of inequalities
\begin{align*}
\max_{\|\b x\|_p\leq 1 }f_\alpha(\b x) &\geq \max_{\|\b x\|_p=1}f_\alpha(\b x) = \max_{\b x\in\RR^n}\,f_\alpha\!\left(\frac{\b x}{\|\b x\|_p}\right) =\max_{\b x\in\RR^n}\frac{f_\alpha(\b x)}{\|\b x\|_p}\\
& \geq \max_{\|\b x\|_p\leq 1 } \frac{f_\alpha(\b x)}{\|\b x\|_p} \geq \max_{\|\b x\|_p\leq 1}f_\alpha(\b x)\, .
\end{align*} 
This implies that the inequalities above are all identities. Together with $f_\alpha(\b x)=f_\alpha(|\b x|)$ this  shows the claim. 
\end{proof} 
We have the following consequence. 
\begin{lemma}\label{lem:eigenvalue_problem}
Let $F_\alpha=\nabla f_\alpha:\RR^n\to\RR^n$ be the gradient of $f_\alpha$, that is,  
$$
F_\alpha(\b x)_i = 2\, \sum_{j=1}^na_{ij}\, |x_i|^{\alpha-2}x_i(|x_i|^\alpha+|x_j|^\alpha)^{1/\alpha -1}\, , \quad i=1,\dots,n\, .
$$
Then, for any $p>1$, the following statements are equivalent:
\begin{enumerate}
\item $\b x$ is a solution of \eqref{eq:maximization_problem}, 
\item $\b x$ satisfies the eigenvalue equation $F_\alpha(\b x) = \lambda\,  |\b x|^{p-2}\b x$ with $\lambda >0$,
\item $\b x$ is a fixed point of the map $G_\alpha(\b x)= |F_\alpha(\b x)|^{q-2}F_\alpha(\b x)/\|F_\alpha(\b x)\|_q^{q-1}$, where $q$ is the H\"older conjugate of $p$, i.e. $1/p+1/q=1$. 
\end{enumerate}
\end{lemma}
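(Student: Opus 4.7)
The plan is to prove the three-way equivalence by establishing (1) $\Rightarrow$ (2) via a first-order optimality condition on the unconstrained reformulation from Lemma~\ref{lem:balls_equivalence}, (2) $\Leftrightarrow$ (3) by an algebraic manipulation enabled by the positivity guaranteed in Lemma~\ref{lem:the_solution_is_positive}, and finally (2) $\Rightarrow$ (1) by appealing to the uniqueness of the positive solution of the eigenvalue equation, to be developed in the remainder of Section~\ref{sec:nonlinear_PF}. Two structural ingredients will be used throughout: the positive $1$-homogeneity of $f_\alpha$, which via Euler's identity yields $\b x^T F_\alpha(\b x) = f_\alpha(\b x)$, and the consequent $0$-homogeneity of $F_\alpha$ and of $G_\alpha$.

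For (1) $\Rightarrow$ (2), I would recast the problem as $\max_{\b x > 0} f_\alpha(\b x)/\|\b x\|_p$ via Lemmas~\ref{lem:the_solution_is_positive} and~\ref{lem:balls_equivalence}. Differentiating the quotient and using $\nabla\|\b x\|_p = \|\b x\|_p^{1-p}\,|\b x|^{p-2}\b x$ for $\b x > 0$, the critical point condition becomes $F_\alpha(\b x)\,\|\b x\|_p^p = f_\alpha(\b x)\,|\b x|^{p-2}\b x$, which on $\mathcal S_p^+$ is precisely (2) with $\lambda = f_\alpha(\b x) > 0$ (positivity of $\lambda$ follows since the graph is connected and $\b x > 0$). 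For (2) $\Leftrightarrow$ (3), the positivity of $\b x$ forces $F_\alpha(\b x) > 0$ componentwise, so the eigenvalue equation reduces to $F_\alpha(\b x)_i = \lambda\, x_i^{p-1}$. Raising both sides to the H\"older-conjugate exponent $q-1 = 1/(p-1)$ and using $(p-1)(q-1) = 1$ gives $|F_\alpha(\b x)|^{q-2}F_\alpha(\b x) = \lambda^{q-1}\b x$; taking the $q$-norm of the original equation and using $(p-1)q = p$ identifies $\|F_\alpha(\b x)\|_q = \lambda$ on $\mathcal S_p^+$, and dividing yields exactly $G_\alpha(\b x) = \b x$. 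The same chain runs backwards, and the $0$-homogeneity of $G_\alpha$ together with $\|G_\alpha(\b x)\|_p = 1$ places any fixed point automatically on $\mathcal S_p^+$.

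The most delicate step will be closing the loop with (2) $\Rightarrow$ (1), since a positive solution of the eigenvalue equation is a priori only a critical point of the Rayleigh-type quotient rather than a global maximizer. I expect to handle this by combining three ingredients: Lemma~\ref{lem:the_solution_is_positive}, which produces a positive maximizer $\b x^\star$; the forward direction (1) $\Rightarrow$ (2) established above, which shows $\b x^\star$ satisfies the eigenvalue equation; and the uniqueness of the positive eigenvector, to be proven by the nonlinear Perron--Frobenius contractivity argument in the remainder of Section~\ref{sec:nonlinear_PF}. Together these force any $\b x$ satisfying (2) to coincide with $\b x^\star$, and hence to solve~\eqref{eq:maximization_problem}.
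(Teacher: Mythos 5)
Your computations for $(1)\Rightarrow(2)$ and $(2)\Leftrightarrow(3)$ follow essentially the same route as the paper: differentiate the Rayleigh-type quotient $f_\alpha(\b x)/\|\b x\|_p$ using Lemma~\ref{lem:balls_equivalence} to get the eigenvalue equation with $\lambda=f_\alpha(\b x)>0$ on $\mathcal S_p^+$, and then pass between (2) and (3) by inverting the power nonlinearity via $(p-1)(q-1)=1$ and the identification $\|F_\alpha(\b x)\|_q=\lambda$. The paper carries out this second step with the signed map $\psi(\b y)=|\b y|^{q-2}\b y$ so that it works for arbitrary $\b x$, whereas you raise componentwise to the power $q-1$ after invoking positivity; given Lemma~\ref{lem:the_solution_is_positive} this restriction is harmless for how the lemma is used, and your observation that $\|G_\alpha(\b x)\|_p=1$ forces any fixed point onto the sphere is a correct and worthwhile detail. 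Where you genuinely depart from the paper is the direction $(2)\Rightarrow(1)$. The paper's proof simply reads the first-order condition $\nabla r_\alpha(\b x)=0$ as equivalent to optimality (``Together with Lemma~\ref{lem:balls_equivalence} this proves $(1)\iff(2)$''), which silently conflates critical points with global maximizers; you correctly flag this as the delicate point and close it by forward reference to the uniqueness of the positive fixed point delivered by the contraction argument of Theorem~\ref{thm:convergence}. That deferral is not circular, since the proof of Theorem~\ref{thm:convergence} only consumes the direction $(1)\Rightarrow(3)$ together with the Banach fixed-point argument, so your organization is logically sound and in fact more scrupulous than the paper's.

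One caveat you should acknowledge: the lemma is stated for every $p>1$, but the contraction ratio $(\alpha-1)/(p-1)$ is below one only when $p>\alpha$, so your uniqueness-based argument for $(2)\Rightarrow(1)$ only establishes the equivalence in that regime. Since Theorem~\ref{thm:convergence} assumes $p>\alpha$ anyway, nothing downstream is affected, but as a proof of the lemma as literally stated your argument is narrower than the claim (and, to be fair, the paper's own one-line justification of this direction is the weakest step of its proof as well).
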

\begin{proof}
For convenience, let us write  $r_\alpha(\b x)=f_\alpha(\b x)/\|\b x\|_p$.  By differentiating $r_\alpha(\b x)$ we see that 
$$
\nabla r_\alpha(\b x) =0 \iff \frac{1}{\|\b x\|_p}\Big\{ F_\alpha(\b x) - \lambda \, |\b x|^{p-2}\b x\Big\}=0,
$$
with $\lambda = f_\alpha(\b  x)/\|\b x\|_p^p>0$. Together with Lemma \ref{lem:balls_equivalence} this proves $(1)\iff (2)$. Now note that the map $\b x\mapsto \psi(\b x)=|\b x|^{q-2}\b x$ is such that $\psi(|\b x|^{p-2}\b x)=\b x$. In fact 
$$
\psi(|\b x|^{p-2}\b x)=|\, |\b x|^{p-2}\b x|^{q-2}\, |\b x|^{p-2}\b x = |\b x|^{(p-1)(q-2)+p-2}\b x=\b x\, .
$$
As $\psi$ is bijective we have $F_\alpha(\b x)=\lambda\,|\b x|^{p-2}\b x\iff |F_\alpha(\b x)|^{q-2}F_{\alpha}(\b x) =\psi(\lambda)\b x$. Therefore, recalling that $\|\b x\|_p=1$ and $\lambda>0$ we have  
\begin{align*}
F_\alpha(\b x)=\lambda\,|\b x|^{p-2}\b x \iff \frac{|F_\alpha(\b x)|^{q-2}F_{\alpha}(\b x)}{\|F_\alpha(\b x)\|_q^{q-1}}=\frac{|F_\alpha(\b x)|^{q-2}F_{\alpha}(\b x)}{\||F_\alpha(\b x)|^{q-2}F_{\alpha}(\b x)\|_p}=\frac{\psi(\lambda)\b x}{\|\psi(\lambda)\b x\|_p}=\b x, 
\end{align*}
where the first identity follows by $\||\b y|^{q-2}\b y\|_p = \|\b y^{q-1}\|_p =\|\b y\|_q^{q-1}$. This shows $(2)\iff(3)$ and concludes the proof.
\end{proof}

 We need one final rather technical lemma that, for the sake of completeness, we state for the  case where $\alpha$ may attain both positive and negative values.
\begin{lemma}\label{lem:lip_constant}
For $\alpha\in\RR$ let $G_\alpha$ be defined as in Lemma \ref{lem:eigenvalue_problem} above, and let $g_i:\RR^n\to\RR$ be the scalar functions such that  $G_\alpha(\b x)=(g_1(\b x), \dots, g_n(\b x))$. Then
$$\left\| \frac{\nabla g_k(\b x)\b x}{g_k(\b x)}\right\|_1 \leq \frac{|1-\alpha|}{p-1}$$
for any vector $\b x\geq 0$ and any $k=1,\dots,n$. 
\end{lemma}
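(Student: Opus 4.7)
The plan is to reduce the claim to an explicit computation of the logarithmic derivatives of $g_k$ and then bound the resulting $\ell^1$-sum using two key ingredients: the $0$-homogeneity of $F_\alpha$, and an elementary monotonicity estimate ensuring that certain nonnegative weights sum to at most $1$. Since $\b x\geq 0$ forces $F_\alpha(\b x)\geq 0$ entrywise, the absolute values appearing in $G_\alpha$ drop out, leaving $g_k(\b x)=F_\alpha(\b x)_k^{q-1}/\|F_\alpha(\b x)\|_q^{q-1}$, where $q$ is the H\"older conjugate of $p$. Taking logarithms and differentiating in the log-coordinate $\log x_j$ gives
$$
\frac{x_j\,\partial_j g_k(\b x)}{g_k(\b x)} = (q-1)\Big[A_{kj} - \sum_{m=1}^n \mu_m A_{mj}\Big],\qquad A_{mj}:=\frac{x_j\,\partial_j F_\alpha(\b x)_m}{F_\alpha(\b x)_m},
$$
where $\mu_m := F_\alpha(\b x)_m^q/\|F_\alpha(\b x)\|_q^q$ is a probability distribution on $\{1,\dots,n\}$. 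Since $q-1=1/(p-1)$, it suffices to bound $\sum_j |A_{kj} - \sum_m\mu_m A_{mj}|$ by $|\alpha-1|$.

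The second step is an explicit computation of the $A_{mj}$ from the formula for $F_\alpha$ in Lemma~\ref{lem:eigenvalue_problem}. A direct differentiation yields
$$
A_{mj}=\begin{cases}(\alpha-1)\,S_m,& j=m,\\ (1-\alpha)\,w_{mj},& j\neq m,\end{cases}\qquad w_{mj}:=\frac{a_{mj}\,x_j^\alpha\,(x_m^\alpha+x_j^\alpha)^{1/\alpha-2}}{\sum_\ell a_{m\ell}(x_m^\alpha+x_\ell^\alpha)^{1/\alpha-1}},
$$
with $S_m:=\sum_{\ell\neq m}w_{m\ell}$. The nonnegativity of $A$ and $\b x$ ensures $w_{mj}\geq 0$, and the pointwise inequality $x_j^\alpha(x_m^\alpha+x_j^\alpha)^{1/\alpha-2}\leq (x_m^\alpha+x_j^\alpha)^{1/\alpha-1}$ immediately yields the key estimate $S_m\leq 1$. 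Moreover, the $0$-homogeneity of $F_\alpha$ forces $\sum_j A_{mj}=0$ for every $m$, so each row $A_{m\cdot}$ is a zero-sum signed vector whose positive mass is concentrated in the single entry $(\alpha-1)S_m$ at position $m$.

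The third step is to bound $\sum_j |A_{kj}-\bar A_j|$, where $\bar A_j:=\sum_m\mu_m A_{mj}$. Because $A_{k\cdot}-\bar A$ also sums to zero, its $\ell^1$-norm equals twice its total positive part. Splitting by $j=k$ and $j\neq k$ and substituting the explicit expressions for $A_{kj}$ and $\bar A_j$, the positive contributions can be reorganized as a weighted combination of the $S_m$'s with weights controlled by $\mu$. Plugging in $S_m\leq 1$ together with $\sum_m\mu_m=1$ yields the bound $|\alpha-1|$, and multiplication by the outer factor $(q-1)=1/(p-1)$ concludes.

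The main obstacle is this final sign-pairing step. A naive triangle-inequality estimate of the form $|A_{kj}-\bar A_j|\leq \sum_m\mu_m|A_{kj}-A_{mj}|$ double-counts the positive/negative cancellations between rows and produces a constant off by a small multiplicative factor. Extracting the sharp constant requires treating $A_{k\cdot}-\bar A$ as a single zero-sum signed vector and bounding only its positive part using the uniform estimate $S_m\leq 1$; the symmetric and nonnegative structure of $A$ enters precisely through these two ingredients, together with the $0$-homogeneity of $F_\alpha$.
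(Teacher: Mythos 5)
Your reduction, the explicit formulas for $A_{mj}$ and $w_{mj}$, the estimate $S_m\le 1$, and the zero-row-sum observation are all correct --- indeed more careful than the paper's own argument, which writes down only the off-diagonal formula for $x_j\partial_j F_\alpha(\b x)_m/F_\alpha(\b x)_m$ and then treats all of these quantities as having a single sign. But your final ``sign-pairing'' step is a genuine gap, and it cannot be repaired: the quantity $\sum_j\bigl|A_{kj}-\sum_m\mu_m A_{mj}\bigr|$ is \emph{not} bounded by $|\alpha-1|$. From your own formulas, the positive part of the zero-sum vector $v=A_{k\cdot}-\bar A$ already contains the single term $v_k=(\alpha-1)\bigl[(1-\mu_k)S_k+\sum_{m\neq k}\mu_m w_{mk}\bigr]$, which can approach $\alpha-1$ on its own, so that $\|v\|_1=2\sum_j(v_j)_+$ approaches $2(\alpha-1)$. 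The two-node graph with $a_{12}=a_{21}=1$ makes this explicit: there $g_1(\b x)=x_1^{(\alpha-1)(q-1)}\bigl(x_1^{(\alpha-1)q}+x_2^{(\alpha-1)q}\bigr)^{-(q-1)/q}$, and a one-line computation gives
$$
\left\|\frac{\nabla g_1(\b x)\b x}{g_1(\b x)}\right\|_1=\frac{2(\alpha-1)}{p-1}\cdot\frac{x_2^{(\alpha-1)q}}{x_1^{(\alpha-1)q}+x_2^{(\alpha-1)q}}\,,
$$
which exceeds $(\alpha-1)/(p-1)$ whenever $x_2>x_1$ and tends to $2(\alpha-1)/(p-1)$ as $x_1/x_2\to 0$; since the quantity is $0$-homogeneous this also occurs on $\mathcal S_p^+$. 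The root cause is exactly the feature you isolated: the diagonal entry $A_{kk}=(\alpha-1)S_k$ has the opposite sign to the off-diagonal entries, so the positive mass of $v$ has size of order $\alpha-1$ rather than $(\alpha-1)/2$, and doubling it overshoots the target.

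So no reorganization of the positive part can produce the constant $|1-\alpha|$; what your computation honestly delivers is a constant of at least $2|1-\alpha|/(p-1)$ (attained in the limit in the example above), and a crude triangle-inequality bound gives at most $4|1-\alpha|/(p-1)$ in general. It is worth saying plainly that the paper's proof reaches the stated constant only by asserting that all the logarithmic derivatives $x_j\partial_j F_\alpha(\b x)_m/F_\alpha(\b x)_m$ are nonnegative --- which fails at $m=j$ when $\alpha>1$ --- and by letting the maximizing index $\tilde m$ depend on $i$ before summing over $i$; your more explicit bookkeeping exposes the problem rather than resolving it. The contraction argument of Theorem~\ref{thm:convergence} survives with the doubled constant provided $p>2\alpha-1$ (the experiments use $p=2\alpha$), but under the stated hypothesis $p>\alpha$ alone your approach, correctly executed, does not close the proof of the lemma as written.
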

\begin{proof}
	First note that $\b y\geq 0$ implies that both $F_\alpha(\b y)$ and $G_\alpha(\b y)$ are nonnegative. Now, let $i,k\in \{1,\dots,n\}$. By the definition of $g_k$ and using the chain rule, we obtain 
	\begin{align}
	\begin{aligned}\label{eq:derivative}
	\left|\frac{\partial_i \{g_k(\b y)\}}{g_k(\b y)}\right| &= \left|\frac{\partial_i \Big\{\big(F_\alpha(\b y)_k\big)^{q-1}\Big\}}{\,\,\,\big(F_\alpha(\b y)_k\big)^{q-1}} - \frac{\partial_i\Big\{ \|F_\alpha(\b y)\|_q^{q-1}\Big\}}{\,\,\,\, \|F_\alpha(\b y)\|_q^{q-1}}\right|\\
	&= (q-1) \left| \frac{\partial_i F_\alpha(\b y)_k}{F_\alpha(\b y)_k}- \frac{\sum_m \big(F_\alpha(\b y)_m\big)^{q-1}\partial_i F_\alpha(\b y)_m}{\|F_\alpha(\b y)\|_q^q}\right|\\
	&= (q-1) \left| \frac{\partial_i F_\alpha(\b y)_k}{F_\alpha(\b y)_k}- \sum_m\left(\frac{F_\alpha(\b y)_m}{\|F_\alpha(\b y)\|_q}\right)^q \frac{\partial_i F_\alpha(\b y)_m}{F_\alpha(\b y)_m}\right|\, .
	\end{aligned}
	\end{align}
Now note that for any  nonnegative $\b y\in\RR^n$ we have 
$$
\frac{\partial_i F_\alpha(\b y)_m y_i}{F_\alpha(\b y)_m} = (1-\alpha)\,\frac{y_i^\alpha}{y_i^\alpha+y_m^\alpha}\,\frac{a_{mi}(y_m^\alpha+y_i^\alpha)^{1/\alpha-1}} {\sum_j a_{mj}(y_m^\alpha+y_j^\alpha)^{1/\alpha-1}} \geq 0 \, .
$$
Let $\tilde m\in\{1,\dots, n\}$ be the index for which the quantity above is maximal. From \eqref{eq:derivative} we deduce that $|\partial_i \{g_k(\b y)\}y_i/g_k(\b y)|$ is of the form $|\beta_k - \sum_m \gamma_m \beta_m|$, where $\beta_m, \gamma_m\geq 0$ for all $m=1,\dots,n$ and $\sum_m \gamma_m=1$. This implies that $|\beta_k - \sum_m \gamma_m \beta_m|\leq \beta_{\tilde m}$ and, as $y_i^\alpha/(y_i^\alpha+y_{\tilde m}^\alpha)\leq 1$, we find 
$$
\left|\frac{\partial_i \{g_k(\b y)\}y_i}{g_k(\b y)}\right| \leq (q-1)|1-\alpha|\left( \frac{a_{\tilde m,i}(y_{\tilde m}^\alpha+y_i^\alpha)^{1/\alpha-1}} {\sum_j a_{\tilde m,j}(y_{\tilde m}^\alpha+y_j^\alpha)^{1/\alpha-1}} \right)\, .
$$
Summing this formula over $i$ and recalling that $q-1=1/(p-1)$ concludes the proof.
\end{proof}
This leads to our main result.
\begin{theorem}\label{thm:convergence}
Assume $\alpha >1$ and $p>\alpha$. Then \eqref{eq:maximization_problem} has a unique solution $\b x^\star$ that is entry-wise positive. Moreover, for any $\b x_0>0$, if $G_\alpha$ is defined as in Lemma \ref{lem:eigenvalue_problem}, then   the sequence  defined by $\b x_{k+1}=G_\alpha(\b x_k)$ belongs to $\mathcal S_p^+$ and converges to  $\b x^\star$. 
\end{theorem}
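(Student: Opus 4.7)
The plan is to identify the iteration $\b x_{k+1}=G_\alpha(\b x_k)$ as a Banach fixed-point iteration for a self-map of a suitable complete metric space, and to extract existence, uniqueness, and convergence from a single contraction estimate. The metric of choice is Thompson's part metric on the interior of the positive cone,
$$d_T(\b x,\b y)=\max_i\bigl|\log(x_i/y_i)\bigr|,$$
which is tailored to $0$-homogeneous maps of positive cones and is precisely dual, via H\"older, to the $\ell^1$-type bound on logarithmic derivatives provided by Lemma~\ref{lem:lip_constant}.

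The preparatory step is to check that $G_\alpha$ restricts to a self-map of $\mathcal S_p^{++}:=\mathcal S_p^+\cap\{\b x:\b x>0\}$ and that this space is complete under $d_T$. Because the network is connected, every row of $A$ is nonzero, so the explicit formula for $F_\alpha$ in Lemma~\ref{lem:eigenvalue_problem} gives $F_\alpha(\b x)>0$ whenever $\b x>0$, and the normalization built into $G_\alpha$ then produces a vector in $\mathcal S_p^{++}$. Completeness follows from a standard argument: Cauchy sequences in $(\mathcal S_p^{++},d_T)$ converge entrywise to strictly positive vectors, and continuity of $\|\cdot\|_p$ preserves the constraint $\|\b x\|_p=1$.

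The heart of the argument is the contraction estimate. For $\b x,\b y>0$ I will interpolate along the geometric path $z_i(t):=x_i^{1-t}y_i^{t}$, which stays in the strict positive orthant and satisfies $\dot z_i(t)=z_i(t)\log(y_i/x_i)$. Differentiating $\log g_k$ along this path and integrating gives
$$\log g_k(\b y)-\log g_k(\b x)=\int_0^1\sum_i \frac{\partial_i g_k(\b z(t))\,z_i(t)}{g_k(\b z(t))}\,\log(y_i/x_i)\,dt.$$
Dualizing $\ell^1$ against $\ell^\infty$ and invoking Lemma~\ref{lem:lip_constant} pointwise in $t$ yields
$$\bigl|\log g_k(\b y)-\log g_k(\b x)\bigr|\leq \frac{|1-\alpha|}{p-1}\,d_T(\b x,\b y),$$
and taking a maximum over $k$ produces $d_T(G_\alpha(\b x),G_\alpha(\b y))\le L\,d_T(\b x,\b y)$ with $L=(\alpha-1)/(p-1)$. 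The hypothesis $p>\alpha>1$ is exactly what forces $L<1$.

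Banach's fixed-point theorem on $(\mathcal S_p^{++},d_T)$ then delivers a unique fixed point $\b x^\star\in\mathcal S_p^{++}$ of $G_\alpha$, with $\b x_k\to\b x^\star$ in Thompson's metric and hence entrywise. By Lemma~\ref{lem:eigenvalue_problem}, $\b x^\star$ solves \eqref{eq:maximization_problem}, and the proof of Lemma~\ref{lem:the_solution_is_positive} in fact shows that every maximizer is strictly positive, so uniqueness on $\mathcal S_p^{++}$ upgrades automatically to uniqueness on $\mathcal S_p^+$. The step I expect to be the most delicate is the contraction estimate: one has to justify that the log-path $\b z(t)$ remains strictly positive so that Lemma~\ref{lem:lip_constant} is applicable at every $t$, and to recognize that the $\ell^1$ form of that lemma is precisely what converts a logarithmic-gradient bound into a Thompson-metric contraction. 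Everything else reduces to bookkeeping and to previously established lemmas.
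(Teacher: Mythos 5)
Your proposal is correct and follows essentially the same route as the paper: both prove that $G_\alpha$ is a contraction on the positive part of $\mathcal S_p$ in the Thompson metric with ratio $(\alpha-1)/(p-1)$ via the logarithmic-derivative bound of Lemma~\ref{lem:lip_constant}, then apply Banach's fixed-point theorem and invoke Lemmas~\ref{lem:the_solution_is_positive} and~\ref{lem:eigenvalue_problem} to identify the fixed point with the unique positive maximizer. Your geometric path $z_i(t)=x_i^{1-t}y_i^{t}$ is exactly the exponential image of the straight segment in log-coordinates used in the paper's mean-value argument, so the integral form is only a cosmetic variant (and your explicit check of self-mapping and completeness is a minor tightening, not a different idea).
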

\begin{proof}
The fact that $\b x_k\in \mathcal S_p^+$ for any $k$ is an obvious consequence of the identities $\||\b z|^{q-2}\b z\|_p = \|\b z^{q-1}\|_p =\|\b z\|_q^{q-1}$. Now we show that the map $G_\alpha$ defined in Lemma \ref{lem:eigenvalue_problem} is Lipschitz contractive which, 
due to the Banach fixed point theorem, 
gives convergence of the sequence and uniqueness of the solution.
To this end we use the Thomson metric $d_T$ defined for $\b x,\b y\in \mathcal S_p^+$ as $d_T(\b x,\b y) = \|\log(\b x)-\log(\b y)\|_\infty$.  

As before, for $i=1,\dots,n$,  let $g_i:\RR^n\to\RR$ be the scalar functions such that $G_\alpha(\b x)=(g_1(\b x), \dots, g_n(\b x))$. By the Mean Value Theorem we have 
$$
\phi(\b x)-\phi(\b y)=\nabla \phi(\b \xi)^T(\b x-\b y)
$$
for any differentiable function $\phi:\RR^n \to\RR$ and with $\b \xi$ being a point in the segment joining $\b x$ and $\b y$. Consider the function $\phi(\b x) = \log(g_i(e^{\b x}))$. Then $\nabla \phi(\b \xi) = \nabla g_i(e^{\b \xi})e^{\b \xi}/g_i(e^{\b \xi})$ and we obtain 
\begin{equation*}
|\phi(\b x)-\phi(\b y)|=|\log(g_i(e^{\b x}))-\log(g_i(e^{\b y}))| = \left| \frac{(\nabla g_i(e^{\b \xi})e^{\b \xi})^T(\b x-\b y)}{g_i(e^{\b \xi})}  \right|. 
\end{equation*}
As the exponential function maps positive vectors into positive vectors bijectively, the previous equation implies that for any two positive vectors $\bar {\b x} = e^{\b x}$ and $\bar {\b y} = e^{\b y}$ we have 
$$ |\log(g_i(\bar {\b x}))-\log(g_i(\bar {\b y}))| \leq \left\| \frac{\nabla g_i(e^{\b \xi})e^{\b \xi}}{g_i(e^{\b \xi})}\right\|_1  \!\|\log(\bar {\b x})-\log(\bar {\b y})\|_\infty \!= \left\| \frac{\nabla g_i(e^{\b \xi})e^{\b \xi}}{g_i(e^{\b \xi})}\right\|_1 \!d_T(\bar {\b x}, \bar {\b y}).
$$
Together with Lemma \ref{lem:lip_constant} we have 
$d_T(G_\alpha(\b x),G_\alpha(\b y))\leq C d_T(\b x,\b y)$ with $C=|1-\alpha|/(p-1)<1$. Thus $G_\alpha$ is a contraction and $\b x_k\to \b x^\star\in \mathcal S_p^+$ as $k\to \infty$. Finally, Lemmas \ref{lem:the_solution_is_positive} and \ref{lem:eigenvalue_problem} imply that 
$\b x^\star$ is entry-wise positive and solves \eqref{eq:maximization_problem}, concluding the proof.
\end{proof}

\subsection{Algorithm}\label{subsec:alg}
Theorem~\ref{thm:convergence} leads naturally to the following algorithm. 

\SetKwBlock{Repeat}{For $k=0,1,2,3,\dots$ repeat}{{until $\|\b x_{k}-\b x_{k+1}\|/\|\b x_{k+1}\|< \mathrm{tolerance}$}}
\begin{algorithm}[H]
		\caption{Nonlinear spectral method for core--periphery detection}\label{alg:1}
		 \DontPrintSemicolon
		\KwIn{Adjacency matrix $A$, initial guess $\b x_0 >0$\\
		\hspace{3.6em}Fix $\alpha \gg 1$, $p>\alpha$, $q=p/(p-1)$\\
		\hspace{3.6em}Let $F_\alpha(\b x)_i = 2\, \sum_{j=1}^na_{ij}\, |x_i|^{\alpha-2}x_i(|x_i|^\alpha+|x_j|^\alpha)^{1/\alpha -1}$, for $i=1, \dots, n$}
		\Repeat{
		$\b y_{k+1} = F_\alpha(\b x_k)$ \;
		$\b x_{k+1} = \|\b y_{k+1}\|_q^{1-q}|\b y_{k+1}|^{q-2}\b y_{k+1}$\;
		}
		$\b c = \b x_{k+1}/\max(\b x_{k+1})$\;
		Reorder the network nodes according to the magnitude of the entries of $\b c$\;
		\KwOut{Core--periphery score $\b c$ and approximate maximizer $\b x_{k+1}$ of \eqref{eq:maximization_problem}.}
\end{algorithm}
Recall that, since $\b x_0>0$, each element of the sequence $\b x_k$ generated by the algorithm is a positive vector, due to Theorem \ref{thm:convergence}. Each iteration requires the computation of a vector norm, at step 3, and the computation of the action of the nonlinear map $F_\alpha$ on a nonnegative vector $\b x$, at step 2. Thus, if $m\geq n$ is the number of edges in the network (or equivalently half the number of nonzero entries of $A$), the order of complexity per iteration of Algorithm~\ref{alg:1} is $O(m)+O(n)$. 
For large-scale, sparse, real-world networks $m$ is typically linearly proportional to $n$ or $n\log n$. \begin{rev}In this setting the method is scalable to high dimensions, as  confirmed by Figure \ref{fig:NSM_timing}. \end{rev}
\begin{figure}[t]
\centering \includegraphics[width=.9\textwidth,clip,trim=0 0 0 0]{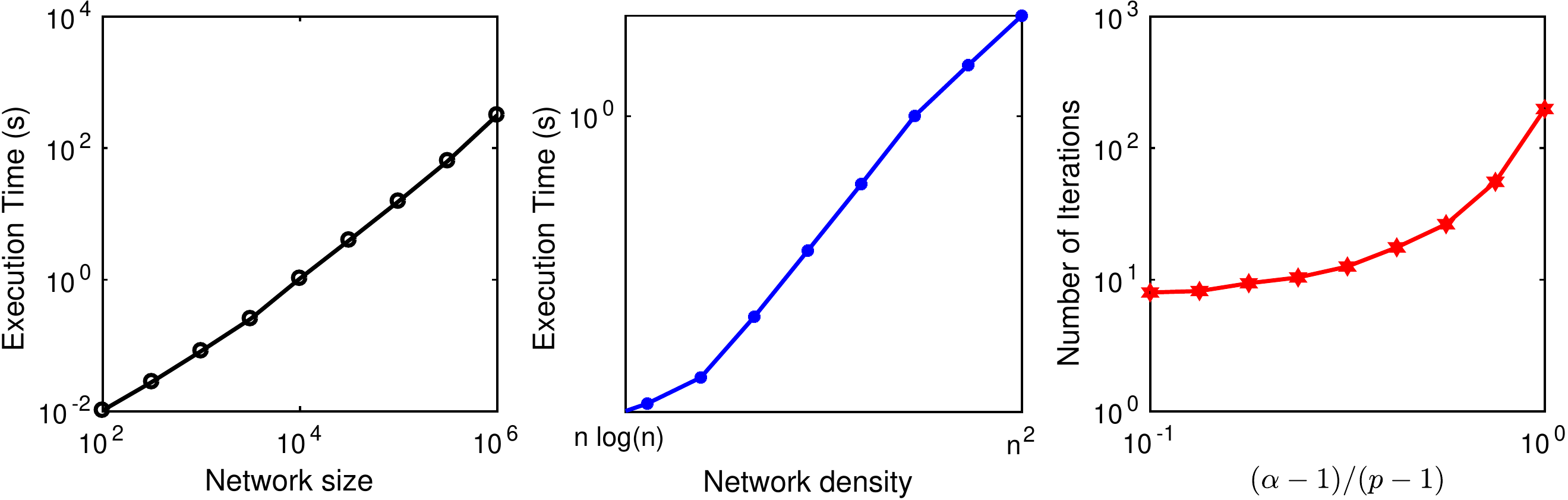}
\caption{\blue{All plots show mean values over 10 runs.  Left and center: Time required by Alg.\ 1 to converge to a tolerance of $10^{-8}$, with $\alpha=p/2=10$, for random Erd\H{o}s--Renyi graphs. Left: $n$ nodes and $m=O(n\log n)$ edges, with $n\in[10^2,10^6]$; Center: $n=1000$ nodes and $m \in [n\log n, n^2]$ edges. Right: Number of iterations required by Alg.\ 1 when the ratio $(\alpha-1)/(p-1)$ varies.} }\label{fig:NSM_timing}
\end{figure}

 Further comments on the algorithm above are in order. 
 First, recall that the convergence 
is independent of the starting point, $\b x_0$, provided that $\b x_0$ is entry-wise positive. 
In practice, we use a uniform vector. Concerning the choice of $\alpha$, recall that we want $\alpha$ large enough to give a good approximation to the original kernel $\max\{|x|,|y|\}$. 
As quantified in \eqref{eq:alpha_approx}, the approximation error is bounded by a factor 
$2^{1/\alpha}$.
Thus, in practice, moderate values of the parameter are sufficient. In order to avoid numerical issues, in the experiments presented in Section \ref{sec:experiments} we use $\alpha = 10$. 
As for the choice of $p$, from the proof of Theorem \ref{thm:convergence} it 
follows that the larger $p>\alpha$, the smaller the contraction ratio $C= (\alpha-1)/(p-1)$ and thus the faster the convergence of Algorithm~\ref{alg:1}. This is made more precise in
 Corollary~\ref{cor:conv_bound} below, where we explicitly bound  
$\|\b x_k -\b x_{k+1}\|$ and $\|\b x_k - \b x^\star\|$ in terms of $C$.
  Finally, the choice of the norm in the stopping criterion is not critical. 
We typically use the $p$-norm because the sequence $\b x_k$ is designed so that $\|\b x_k\|_p=1$ for any $k$. Hence in the stopping criterion we require one norm computation fewer at each step. However, this reduction in cost is likely to be negligible and thus we expect that
other distance functions would work equally well. Moreover, we point out that, due 
to Corollary~\ref{cor:conv_bound}, 
a computationally cheaper stopping criterion is available from the contraction ratio $(\alpha-1)/(p-1)$ and its integer powers. However, in our experience, this upper bound on the iteration error can be far from sharp.
  \begin{corollary}\label{cor:conv_bound}
For $\b x_0>0$, let $\b x_k$ be the sequence defined by Algorithm~\ref{alg:1} and let $\gamma = \|\log(\b x_1)-\log(\b x_0)\|_\infty$. For any $k=0,1,2,\dots$ we have 
$$
\|\b x_{k+1} -\b x_k\|_\infty\leq \gamma\, \left(\frac{\alpha-1}{p-1}\right)^k \quad \text{and}\quad \|\b x_{k} -\b x^\star\|_\infty\leq \gamma\, \left(\frac{p-1}{p-\alpha}\right)\left(\frac{\alpha-1}{p-1}\right)^k,	
$$
where $\b x^\star =\lim_k\b x_k$ is the unique positive solution of \eqref{eq:maximization_problem}.  
\end{corollary}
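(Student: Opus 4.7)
The plan is to derive both bounds from the Banach fixed-point estimates already obtained for the map $G_\alpha$ in the Thompson metric $d_T(\b x,\b y) = \|\log \b x - \log \b y\|_\infty$, and then convert these into $\infty$-norm bounds using a one-line scalar inequality together with the fact that every iterate lies in the unit $p$-sphere.

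First I would appeal to the proof of Theorem~\ref{thm:convergence}, which already shows that $G_\alpha$ is a strict contraction relative to $d_T$ with Lipschitz constant $C := (\alpha-1)/(p-1) < 1$. Setting $\gamma = d_T(\b x_1,\b x_0)$ and iterating the contraction inequality $k$ times gives
\[
d_T(\b x_{k+1},\b x_k) \leq C\, d_T(\b x_k,\b x_{k-1}) \leq \cdots \leq C^k \gamma,
\]
and summing the resulting geometric tail yields
\[
d_T(\b x_k,\b x^\star) \leq \sum_{j\geq k} d_T(\b x_{j+1},\b x_j) \leq \frac{C^k\gamma}{1-C} = \gamma\,\frac{p-1}{p-\alpha}\,C^k.
\]
These are exactly the quantities claimed in the corollary, but measured in the Thompson metric instead of the $\infty$-norm.

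The remaining step is the conversion $d_T \to \|\cdot\|_\infty$. For any two positive scalars $a,b$ with $\max(a,b)\leq 1$, the elementary inequality $t-1 \leq t\log t$ for $t\geq 1$ (equality at $t=1$, then compare derivatives) rearranges to
\[
|a - b| \leq \max(a,b)\,|\log a - \log b| \leq |\log a - \log b|.
\]
Every iterate $\b x_k$ for $k\geq 1$ belongs to $\mathcal S_p^+$, and since $\|\b z\|_\infty \leq \|\b z\|_p$ we have $x_{k,i}\leq 1$ for each $i$; the same holds for $\b x^\star \in \mathcal S_p^+$. Applying the scalar inequality entry by entry therefore yields
\[
\|\b x - \b y\|_\infty \leq d_T(\b x,\b y) \qquad \text{for all } \b x,\b y \in \mathcal S_p^+,
\]
and inserting this into the two Thompson-metric bounds above gives the stated estimates.

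There is no real obstacle: the hard analytic work, namely the contraction of $G_\alpha$, is already packaged in Theorem~\ref{thm:convergence}, and the conversion to $\|\cdot\|_\infty$ is pointwise. The only mild subtlety is ensuring that the prefactor $\max(a,b)$ can be absorbed into $1$, which is precisely what the normalization $\|\b x_k\|_p = 1$ provides; without that entry-wise upper bound one would pick up an extra diameter factor in the $\infty$-norm estimate.
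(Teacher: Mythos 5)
Your proposal is correct and follows essentially the same route as the paper: both exploit the Thompson-metric contraction with constant $C=(\alpha-1)/(p-1)$ established in the proof of Theorem~\ref{thm:convergence}, and both convert $d_T$ to $\|\cdot\|_\infty$ via the scalar estimate $|a-b|\leq \max(a,b)\,|\log a-\log b|$ combined with the entrywise bound $x_i\leq 1$ on $\mathcal S_p^+$ (the paper derives this from the Mean Value Theorem applied to $e^x$, you from $t-1\leq t\log t$; the paper obtains the tail bound $d_T(\b x_k,\b x^\star)\leq C^k\gamma/(1-C)$ by a telescoping rearrangement rather than your direct geometric-series summation, but these are the same standard Banach fixed-point estimate). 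The only shared blemish, present in the paper as well, is that the $k=0$ instance of the first inequality tacitly uses the entrywise bound on $\b x_0$, which need not lie in $\mathcal S_p^+$.
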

\begin{proof}
From the Mean Value Theorem we have $|e^a -e^b|\leq |a-b|\max\{e^a,e^b\}$. Thus, for any $\b x,\b y >0$ with $\|\b x\|_p=\|\b y\|_p=1$, we have 
$$\|\log(\b x) -\log(\b y)\|_\infty \geq \|\b x - \b y\|_\infty \Big(\max_i (\max\{x_i,y_i\})\Big)^{-1} \geq \|\b x - \b y\|_\infty, 
$$ as both $x_i$ and $y_i$ are not larger than one, for any $i=1,\dots, n$.  With the notation of the proof of Theorem \ref{thm:convergence}, this implies that $d_T(\b x, \b y)\geq \|\b x - \b y\|_\infty$ for any $\b x, \b y\in \mathcal S_p^+$. Moreover, from the proof of that theorem we have that $d_T(G_\alpha(\b x), G_\alpha(\b y))\leq C d_T(\b x, \b y)$, with $C=(\alpha-1)/(p-1)$. Therefore, as $\b x_k\in \mathcal S_p^+$ for any $k$, we have 
\begin{align*}
\|\b x_{k+1} -\b x_k\|_\infty&\leq d_T(\b x_{k+1},\b x_k) = d_T(G_\alpha(\b x_{k}),G_\alpha(\b x_{k-1}))\\
&\leq C d_T(\b x_k, \b x_{k-1}) \leq C^k d_T(\b x_1, \b x_0)\, .
\end{align*}
This proves the first inequality. As for the  second one, first note that it is enough to show that $d_T(\b x_k, \b x^\star)\leq (1-C)^{-1}d_T(\b x_{k+1},\b x_k)$ as we then can argue as before to obtain $\|\b x^\star-\b x_k\|_\infty\leq (1-C^{-1})C^kd_T(\b x_1,\b x_0)$ which is the right-most inequality in the statement. Now, observe that adding $d_T(\b x_{i+1}, \b x_{i})$ to both sides of the inequality $d_T(\b x_{i+2}, \b x_{i+1})\leq C d_T(\b x_{i+1}, \b x_{i})$ and rearranging terms leads to  
$$
d_T(\b x_{i+1}, \b x_{i})\leq \frac 1 {1-C}\Big(d_T(\b x_{i+1}, \b x_{i})- d_T(\b x_{i+2}, \b x_{i+1})\Big)\, ,
$$ 
for any $i=0,1,2,\dots$. 
Therefore, using the triangle inequality for $d_T$, for any $k, h$ with $h>k$, we obtain 
$$
d_T(\b x_{h+1}, \b x_{k})\leq \sum_{i=k}^h d_T(\b x_{i+1}, \b x_{i})\leq \frac{1}{1-C}\Big(d_T(\b x_{k+1}, \b x_{k})- d_T(\b x_{h+2}, \b x_{h+1})\Big)\, .
$$ 
Finally, letting $h$ grow to infinity in the previous inequality gives the desired bound and concludes the proof.
\end{proof}

\section{Experiments}\label{sec:experiments}
In this section we describe results obtained when the logistic core--periphery score computed via Algorithm \ref{alg:1} is used to rank nodes in some example networks. 
All experiments were performed using MATLAB Version
9.1.0.441655 (R2016b) on a laptop running Ubuntu 16.04 LTS with a  3.2 GHz Intel Core i5 processor and 8 GB of RAM.
The experiments can be reproduced using the code available at \verb+https://github.com/ftudisco/nonlinear-core-periphery+.

\begin{rev}
We compare results with those obtained from other core-quality function optimization approaches: the degree vector, the Perron eigenvector of the adjacency matrix (eigenvector centrality) and the simulated--annealing algorithm proposed in \cite{rombach2014core}. Furthermore, we compare with the $k$-core decomposition coreness score \cite{kitsak2010identification}, computed as the limit of the $H$-index operator sequence discussed in \cite{lu2016h}. 

The use of the degree vector $\b d$ and the eigenvector centrality $\b v$ may be regarded as  linear counterparts of our method.
If $\alpha = 1$, then for any $\b x \geq 0$ the functional $f_\alpha(\b x) = \sum_{ij}a_{ij}\mu_\alpha(x_i,x_j)$ is linear and has the form  $f_1(\b x) = \sum_{ij}a_{ij}(x_i+x_j) = 2\,\b x^T \b d$. Thus the maximum is attained when $\b x$ is the degree vector $\b d$. The eigenvector centrality $A\b v = \rho(A)\b v$, $\|\b v\|_2=1$,  instead, somewhat corresponds to the case where $\alpha$ goes to $0$. To obtain $\b v$, however, we need to slightly modify the approximate kernel $\mu_\alpha$ from \eqref{eq:falpha} to 
$$
\tilde \mu_\alpha(x,y) = \left(\frac{|x|^\alpha + |y|^\alpha}{2}\right)^{1/\alpha} \, .
$$
This is because $\mu_\alpha$ diverges when $\alpha\to 0$, whereas $\lim_{\alpha\to 0}\tilde \mu_\alpha(x,y) = \sqrt{|xy|}$. On the other hand, notice that both $\mu_\alpha$ and $\tilde \mu_\alpha$ coincide with the maximum operator when $\alpha\to \infty$ and for any fixed $\alpha>0$, a vector that maximizes $\sum_{ij}A_{ij}\tilde \mu_\alpha(x_i,x_j)$ maximizes $f_\alpha$ as well. Replacing $\mu_\alpha$ with $\tilde \mu_\alpha$ we have $f_0(\b x) = \sqrt{\b x}^T A \sqrt{\b x}$. Thus, if we choose $p=1$, the maximum is attained when $\b x = \b v^2$, the square of the entry-wise positive eigenvector centrality. Note that with this choice of $p$, the solution $\b v$ is constrained on the Euclidean sphere $\|\b v\|_2=1$. Notice moreover that this is confirmed by Theorem \ref{thm:mlp} as,  for $\alpha\to 0$ and $p=1$, the nonlinear operator $F_\alpha$ boils down to the matrix $A$ and Algorithm 1 is equivalent to the standard linear power method. 
\end{rev}

As for the simulated--annealing method, recall that it aims to 
maximize the core-quality function $\sum_{ij}a_{ij}x_ix_j$ over $C_{\alpha,\beta}$, defined as in \eqref{eq:C_alpha_beta}. To this end the method requires a uniform discretization of the square $[0,1]^2$. In all our experiments below we choose the discretization $\{1/h,2/h,\dots,1\}^2$ with $h=50$. 

Algorithm \ref{alg:1} 
requires the selection of two positive 
scalars, $\alpha$ and $p>\alpha$, and the norm in the stopping criterion. In all our experiments 
we set $\alpha = 10$, $p = 2 \alpha$, 
and terminate when
$$
\frac{\|\b x_k-\b x_{k+1}\|_p }{\|\b x_{k+1}\|_p}=\|\b x_k-\b x_{k+1}\|_p < 10^{-8}.
$$

\begin{rev}
For the sake of brevity, we refer to the nonlinear spectral method,  simulated--annealing method,  
degree--based method, eigenvector centrality method, and the $H$-index $k$-core decomposition method as NSM, Sim-Ann, Degree, Eig, and Coreness respectively.  
We point out that, in order to reduce the computing time, we implement Sim-Ann in parallel on four cores, whereas all other methods are run on a single computing core. 
\end{rev}
% --------------------------------------------------------------------------------------------------

\subsection{Synthetic Networks} 
In practice, of course, it is typically not known ahead of time whether a given network
contains any inherent core--periphery structure. 
However, in order to conduct a set of controlled tests, we begin with 
two classes of random networks that have a built-in core--periphery structure. 
The first takes the form of a stochastic block model, a widespread benchmark where community structure is imposed in block form. 
We then consider the new logistic core--periphery random model discussed in 
section~\ref{sec:logistic_random_model}.
\begin{rev}
For the sake of brevity, we only compare NSM, Sim-Ann and Degree
in these synthetic tests, noting that  
Eig and Coreness were comparable to or less effective than Sim-Ann.
\end{rev}

\begin{figure}[t]
\centering
\includegraphics[width=.3\textwidth]{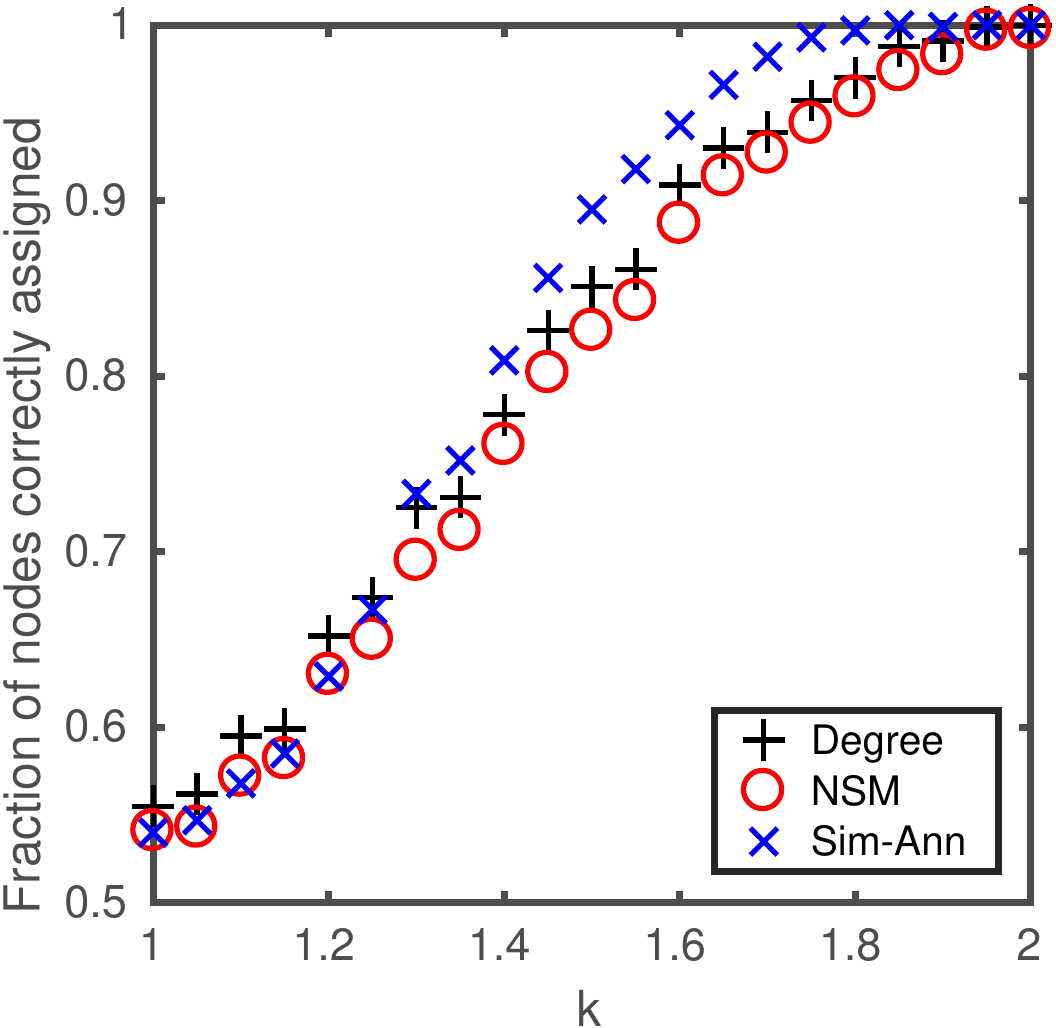}
\includegraphics[width=.3\textwidth]{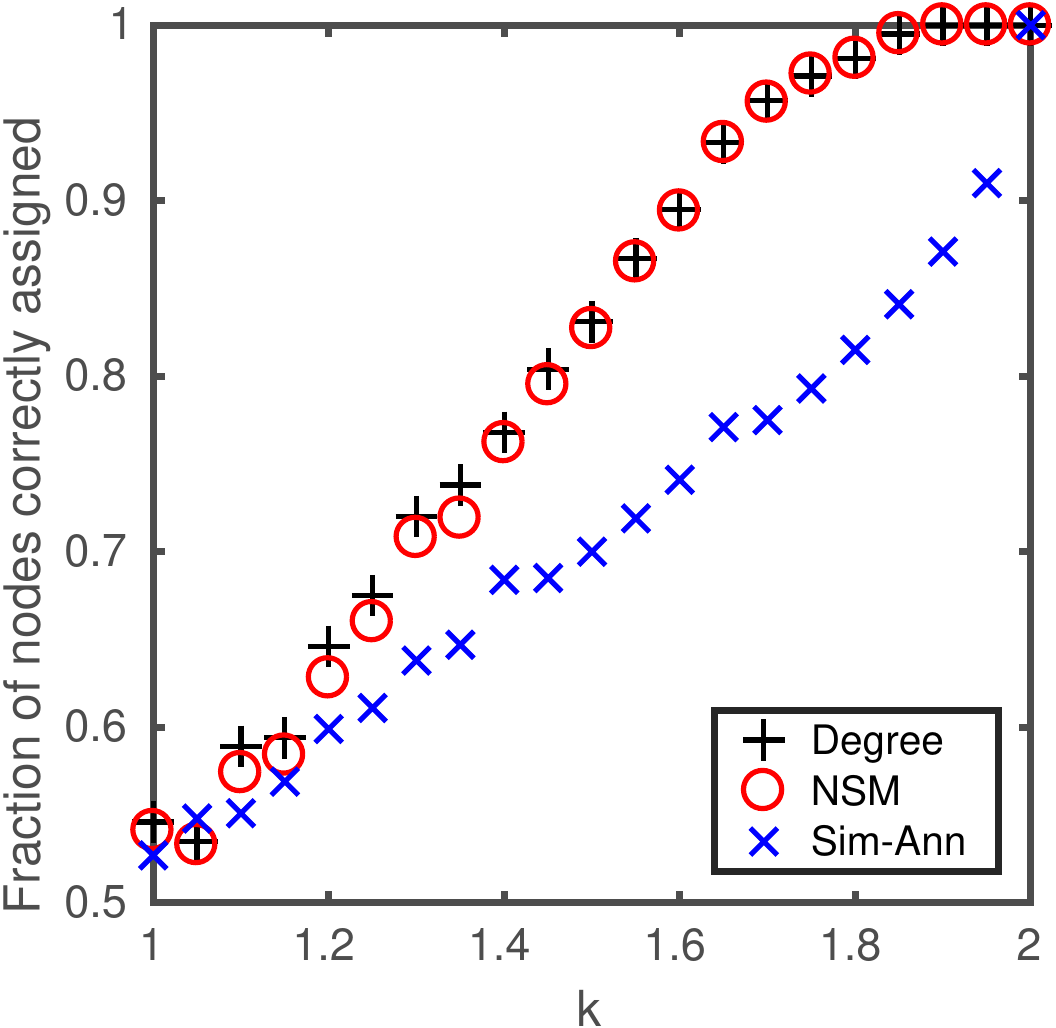}
\includegraphics[width=.315\textwidth]{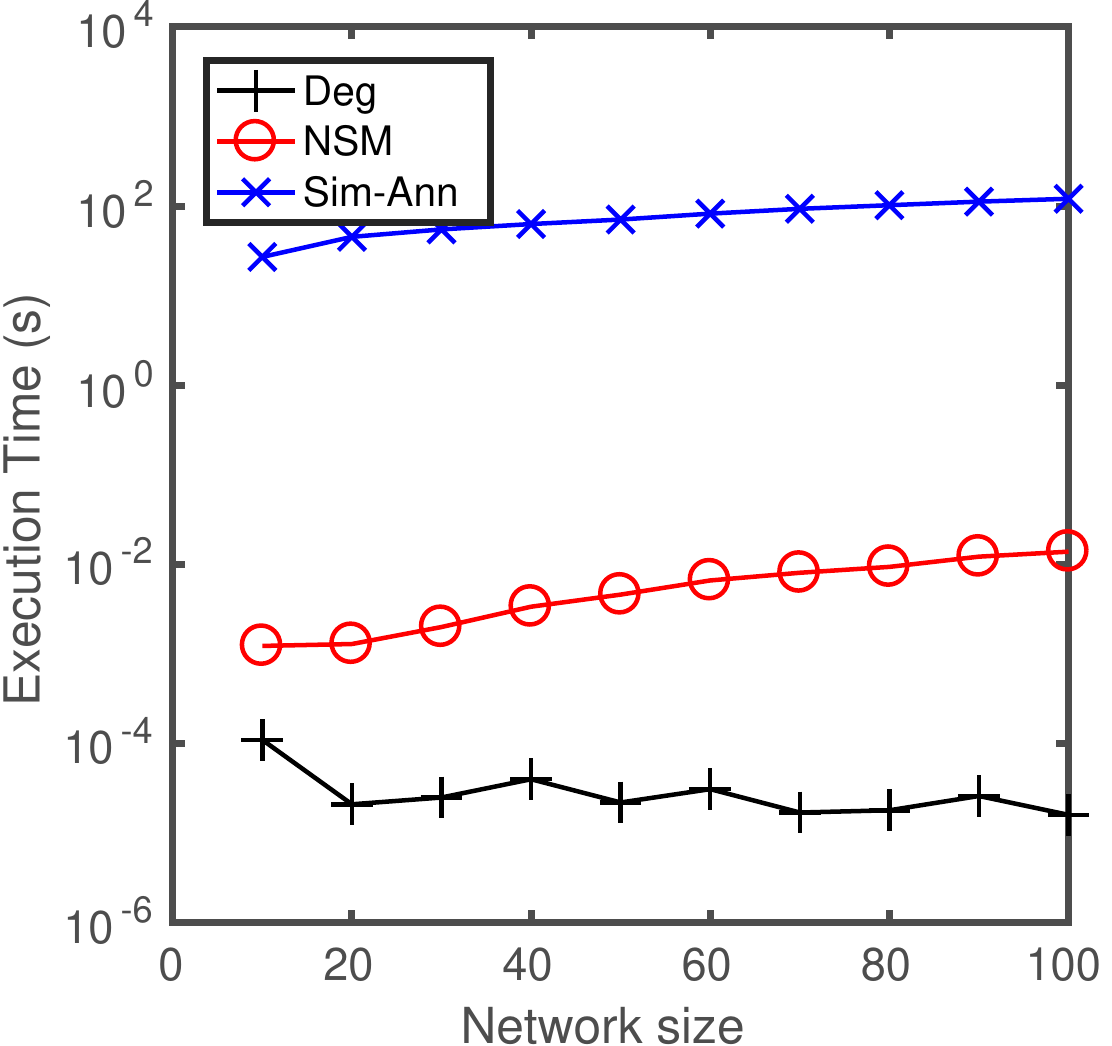}
	\caption{(Color online.) Experiments on stochastic block model graphs. Left and center: Fraction of nodes correctly assigned to the core--periphery ground truth versus model parameter $k$,  for three methods: our Nonlinear Spectral Method (red circles), the Degree vector (black plus symbols) and the Simulated Annealing technique of Rombach et al. \cite{rombach2017core} (blue crosses). Each value is the mean over 20 random instances. Each network has 100 nodes and $k$ ranges in $\{1,1.05,1.1,\dots,2\}$. Left: Nodes within the periphery and between core and periphery are connected with probability $k/4$, nodes within the core are connected with probability $k^2/4$. Center: Nodes within the core and between core and periphery are connected with probability $k^2/4$, nodes within the periphery are connected with probability $k/4$. Right: Median execution time of the three methods over 20 instances, when the number of nodes varies within $\{10,20,\dots,100\}$.}\label{fig:SBM}
\end{figure}

\subsubsection{Stochastic Block Model}\label{sec:SBM}
We consider synthetic networks that have a planted core--periphery structure, arising 
from a stochastic block model. For the sake of consistency with previous works, we denote this ensemble of unweighted networks by $\mathrm{CP}(n,\delta,p,k)$. Each network drawn has $\delta n$ core nodes and $(1-\delta)n$ periphery nodes, with $\delta \in [0,1]$. We consider two parameter settings. The first reproduces the case analyzed in \cite[Sec.\ 5.1]{rombach2017core}: for $p\in[0,1]$ and $k\in [1,1/\sqrt{p}]$, each edge between nodes $i$ and $j$ is assigned independently at random with probability $kp$, if either $i$ or $j$ (or both) are in the periphery and with probability $k^2p$ if both $i$ and $j$ are in the core. In the second setting, edges between nodes $i$ and $j$ have probability $kp$ only if both $i$ and $j$ are in the periphery and have probability $k^2p$ otherwise.

In our experiment we fix $n=100$, $\delta=1/2$, $p=1/4$ and, for each $k=\{1,1.05,1.1,\dots,2\}$, we compute the core--periphery assignment for a network drawn from $\mathrm{CP}(n,\delta,p,k)$.  Figure~\ref{fig:SBM}  shows the percentage of nodes correctly assigned to the ground-truth core--periphery structure in the two settings described above (left and central figures) by NSM (red circles),
Sim-Ann (blue crosses) and 
Degree (black plus symbols). Each plot 
shows the mean over 20 random instances of 
$\mathrm{CP}(n,\delta,p,k)$, for each fixed value of $k$. The right plot in the figure, instead,  shows the median execution time of the three methods over 20 runs. 
We see that all three approaches give similar results in the first parameter regime, whereas
Degree and NSM outperform Sim-Ann in the second. Using the degree gives the cheapest method, and Sim-Ann is around two orders of magnitude more expensive than NSM.

\begin{figure}[t]
\centering
\includegraphics[width=.9\textwidth]{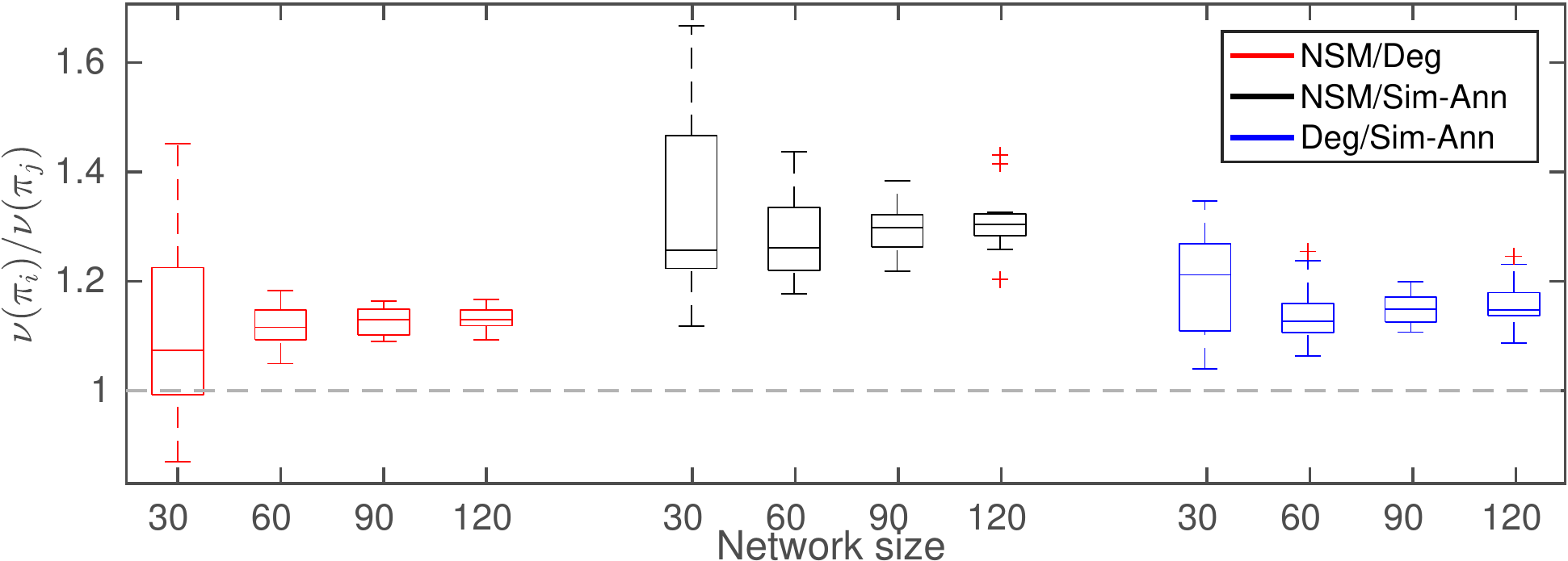}
\caption{(Color online.) Boxplots of the ratio of the likelihood $\nu(\b \pi)$ over 20 trials for different sizes $n$ of the random network, ranging within $\{30,60,90,120\}$. Three permutation vectors $\b \pi_1$, $\b \pi_2$, $\b \pi_3$ are obtained by sorting the entries of the score vectors obtained with: (1) the proposed Algorithm \ref{alg:1} (``NSM'' in the legend), (2) the degree vector of the graph (``Deg'' in the legend) and (3) the simulated--annealing method of \cite{rombach2014core} (``Sim-Ann'' in the legend), respectively. The three boxplot groups, with different colors, show (from left to right) the ratios $\nu(\b \pi_1)/\nu(\b \pi_2)$ (in red), $\nu(\b \pi_1)/\nu(\b \pi_3)$ (in black) and $\nu(\b \pi_3)/\nu(\b \pi_2)$ (in blue).}\label{fig:boxplot}
\end{figure}
\subsubsection{Logistic Core--periphery Random Model}
We now consider the unweighted logistic core--periphery random model described in 
subsection~\ref{sec:logistic_random_model}. More precisely, given $n$, $s$ and $t$, we
sample from the family $\mathrm{LCP}(n,s,t)$ of random graphs with $n$ nodes 
such that an edge between any pair of nodes $i$ and $j$ is assigned independently at random with probability 
$$
\textstyle{\mathbb P(i\sim j) = \sigma_{s,t}\left(\frac 1 n \max\{n-i,n-j\}\right), \quad \text{where}\quad \sigma_{s,t}(x)=\frac{1}{1+e^{-s(x-t)}}\, .}
$$

Unlike the stochastic block model discussed in Section \ref{sec:SBM}, if $s$ is not too large the logistic core--periphery model does not give rise to a binary core--periphery structure. Instead, 
it uses a sliding scale for the nodes where node $n$ is at the center of the core and node $1$ is the most peripheral. We therefore look at the ability of the algorithms to recover a suitable ordering. 

In our experiment we fix $s=7$, $t=2/3$ and let the dimension $n$ vary within $\{30,60,90,120\}$. For each $n$ we draw an instance from the ensemble $\mathrm{LCP}(n,s,t)$
and compute the core--periphery score from
each of the three methods.
We sort each score vector into descending order and consider the associated permutations $\b \pi_1$, $\b \pi_2$ and $\b \pi_3$ 
for NSM, Degree and Sim-Ann, respectively.
We then evaluate the likelihoods $\nu(\b \pi_i)$, as defined in \eqref{eq:LCP_probability}. 
In Figure~\ref{fig:boxplot} we show medians and quartiles of the three 
likelihood 
ratios $\nu(\b \pi_1)/\nu(\b \pi_2)$ (in red), $\nu(\b \pi_1)/\nu(\b \pi_3)$ (in black) and $\nu(\b \pi_3)/\nu(\b \pi_2)$ (in blue).
We see that in this test NSM outperforms Degree, which itself 
 outperforms Sim-Ann.

\subsection{Real-world Datasets}
\begin{rev}
In this subsection we show results 
for several real-world networks taken from different fields: social interaction,
academic collaboration, transportation, internet structure, neural connections, 
and protein-protein interaction. These networks are freely available online;
below we
describe their key features and give references for further details. 
\end{rev}

	\textbf{Cardiff tweets.} An unweighted network of reciprocated Twitter mentions among users whose bibliographical information indicates they are associated with the city of Cardiff (UK).
Data refers to the period October 1--28, 2014.  There is a single connected component of $2685$ nodes and $4444$ edges. The mean degree of the network is $3.31$, with a variance of $21.24$ and diameter $29$. This dataset is part of a larger collection of
geolocated reciprocated Twitter mentions within UK cities in \cite{grindrod2016comparison}.

	\textbf{Network scientists.} A weighted co--authorship network among scholars who study network science. This network, compiled in 2006, involves $1589$ authors.
We use the largest connected component, which has $379$ nodes and coincides with the network considered in \cite{newman2006finding}. This component contains $914$ edges. Its mean degree is $4.82$, with variance $15.46$ and diameter $17$.

	\textbf{Erd\H{o}s.} An instance of the Erd\H{o}s collaboration unweighted network with $472$ nodes representing authors. We use the largest connected component, which contains $429$ nodes and $1312$ edges. Its mean degree is $6.12$ with variance $45.98$ and diameter $11$. This dataset is one of the seven Erd\H{o}s  collaboration networks made available by Batagelj and Mrvar in the Pajek datasets collection \cite{pajek} and therein referred to as ``Erdos971''.  

	\textbf{Yeast.} An unweighted protein-protein interaction network described and analyzed in \cite{bu2003topological}. As for the Erd\H{o}s dataset, this network is available through the datasets collection \cite{pajek}. The whole network consists of $2361$ nodes. We use the 
largest connected component, consisting of $2224$ nodes and $6829$ edges. Its mean degree is $6.14$ with variance $65.76$ and diameter $11$. 

\begin{rev}
\textbf{Internet 2006.} A symmetrized snapshot of the structure of the Internet at the level of autonomous systems, reconstructed from BGP tables posted by the University of Oregon Route Views Project. This snapshot was created by Mark Newman from data for July 22, 2006 and is available via \cite{Internet} and \cite{davis2011university}. The network is connected, with $22963$ nodes and $48436$ edges. Its mean degree is $4.2186$ with variance $108.5$ and diameter $11$.

\textbf{Jazz.} Network of Jazz bands that performed between 1912 and 1940 obtained from ``The Red Hot Jazz Archive'' digital database \cite{gleiser2003community}. It consists of $198$ nodes, being jazz bands, and $2742$ edges representing common musicians. Mean degree is $27.7$, with variance $304.6$ and diameter $12$. 

\textbf{Drugs.} Social network of injecting drug users (IDUs) that have shared a needle
in a six months time-window. This is a connected network made of $616$ nodes and  $2012$ edges. The average degree is $6.5$ with variance $59.17$ and diameter $13$. See e.g.\ \cite{neaigus1998network,tudisco2017community}.

\textbf{C. elegans.} This is a neural network of neurons and synapses in Caenorhabditis elegans, a type of worm. It contains $277$ nodes and $2105$ edges.
Mean degree is $7.6$ with variance $48$ and diameter $6$. The network was created in \cite{Choe-2004-connectivity}. The data we used is collected from \cite{AustinData}.
\end{rev}

	\textbf{London trains.} A transportation network representing connections between train stations of the city of London. The undirected weighted network that we 
consider here is the aggregated version of the original multi-layer network. It consists of a single connected component with $369$ nodes, each corresponding to a train station. 
Direct connections between stations form a set of $430$ edges with nonzero weights.
Each such weight takes an integer value of 
$1$, $2$, or $3$ according to the number of different types of connection,  from the three possibilities of 
underground, overground and Docklands Light Railway (DLR). 
The average degree is $2.33$ with variance $1.04$ and diameter $39$. This network is studied in \cite{DDSBA14} and the data we used was collected from \cite{DataDD}.

\begin{figure}
\centering
\begin{tikzpicture}
\node at (0, 2) {\sf Degree};
\node at (4, 2) {\sf Sim-Ann};
\node at (8, 2) {\sf NSM};
\node[rotate=90] at (-2.5,0) {\sf Cardiff Tweets};
\node at (0,0) {\includegraphics[width=.25\textwidth,trim=0 0 0 .5cm,clip]{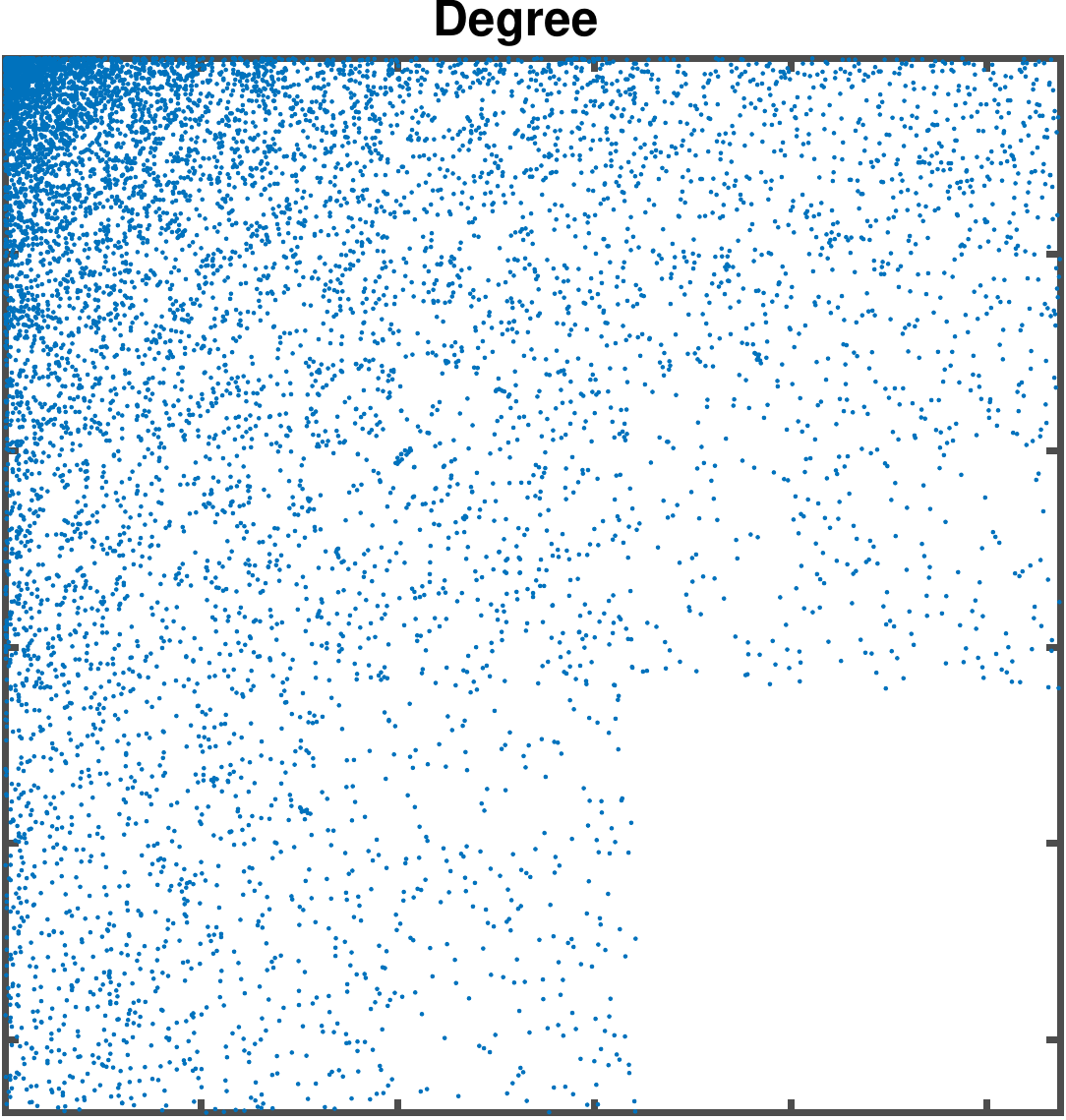}};
\node at (4,0) {\includegraphics[width=.25\textwidth,trim=1.3cm 1.2cm 0 .5cm,clip]{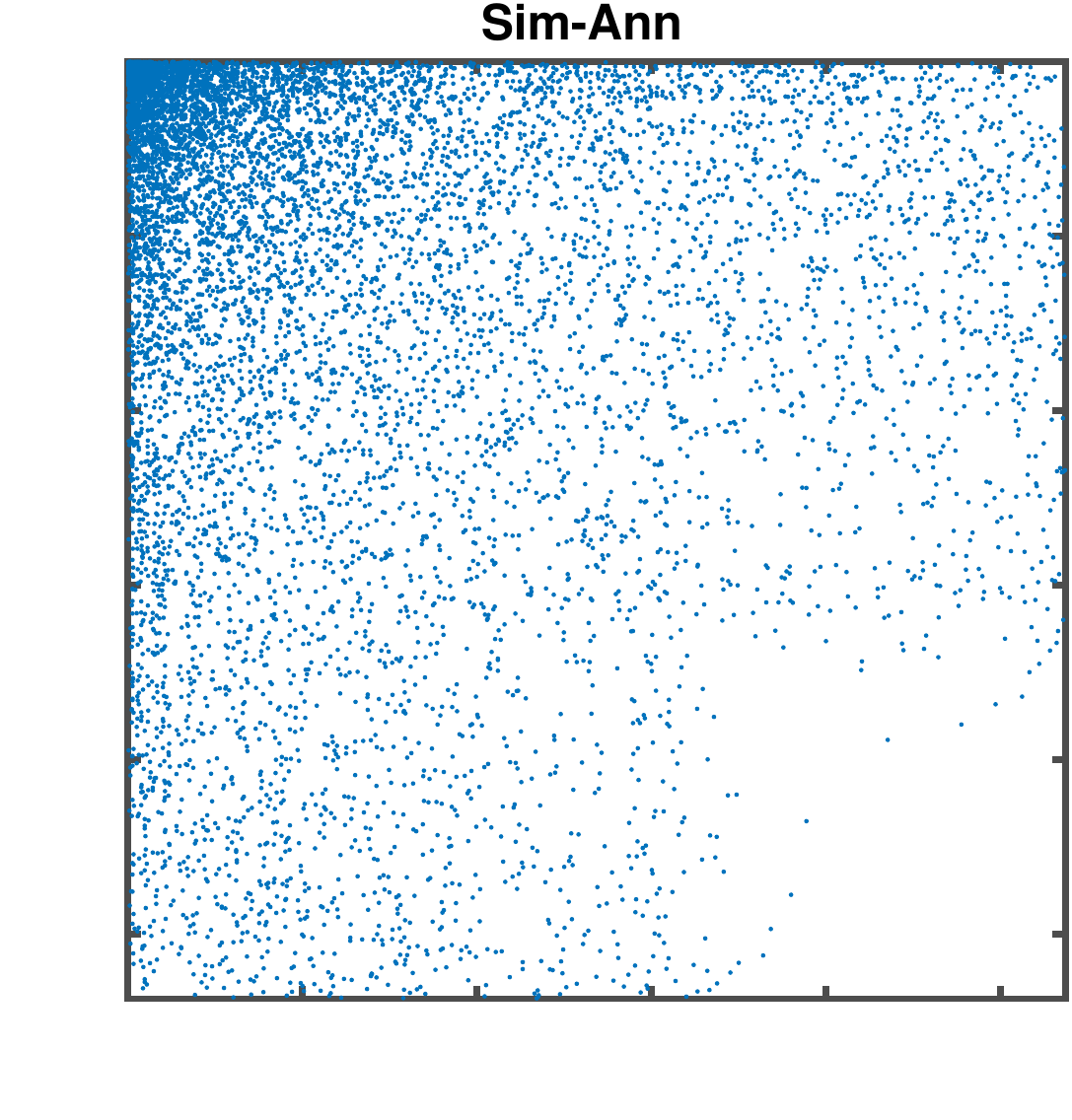}};
\node at (8,0) {\includegraphics[width=.25\textwidth,trim=0 0 0 .5cm,clip]{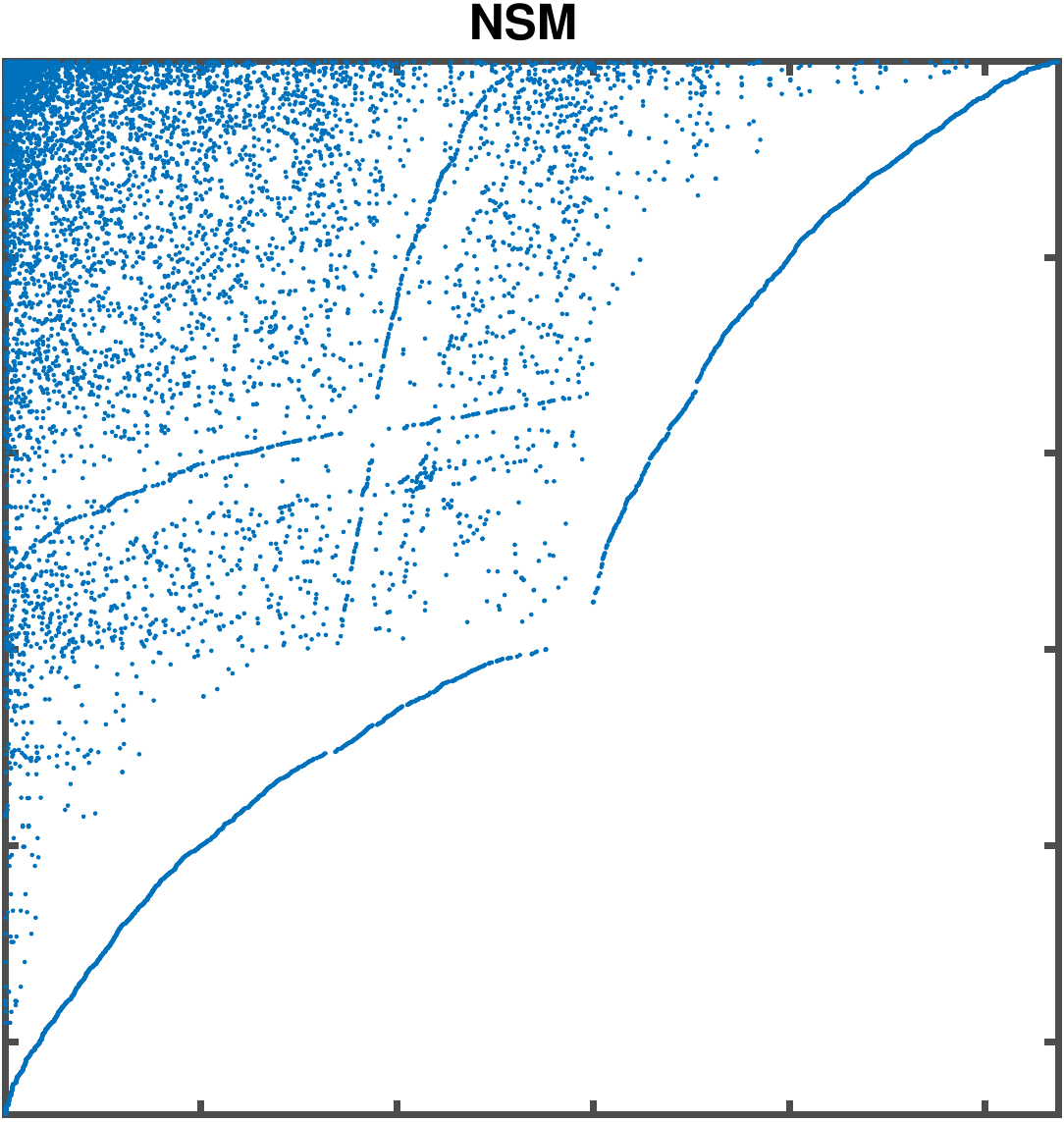}};
\node[rotate=90] at (-2.5,-3.5) {\sf Net.\ Scientists};
\node at (0,-3.5) {\includegraphics[width=.25\textwidth,trim=0 0 0 .5cm,clip]{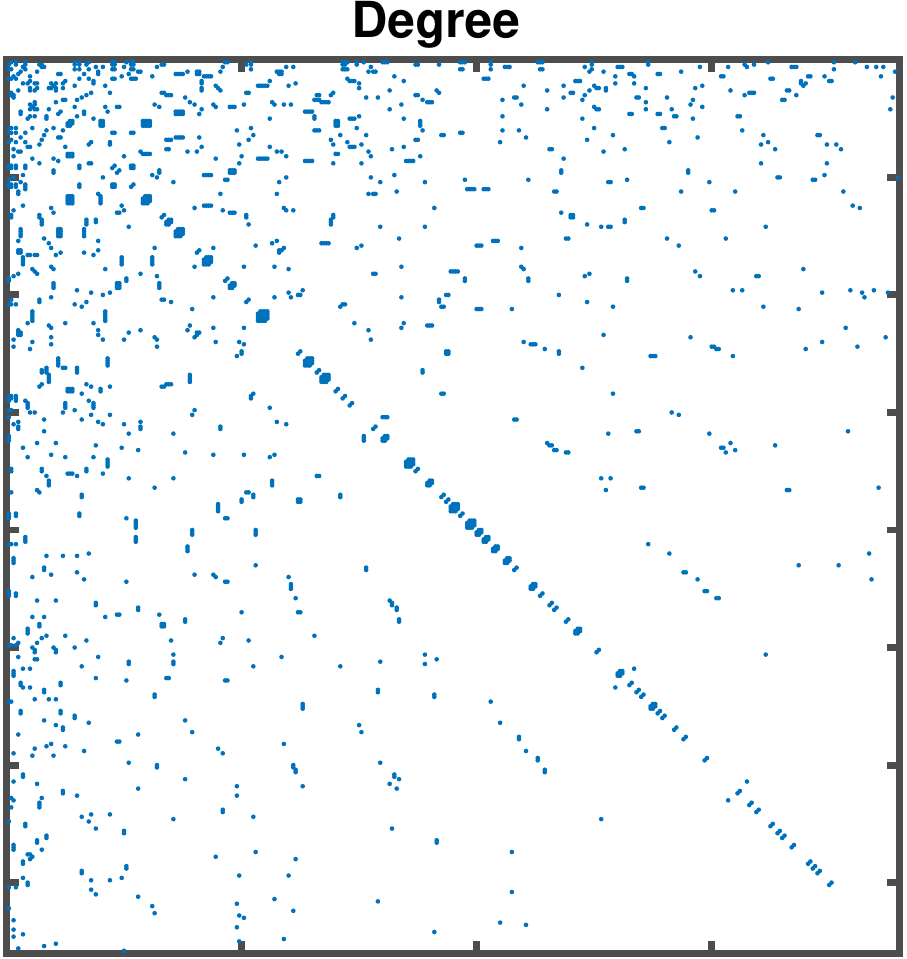}};
\node at (4,-3.5) {\includegraphics[width=.25\textwidth,trim=0 0 0 .5cm,clip]{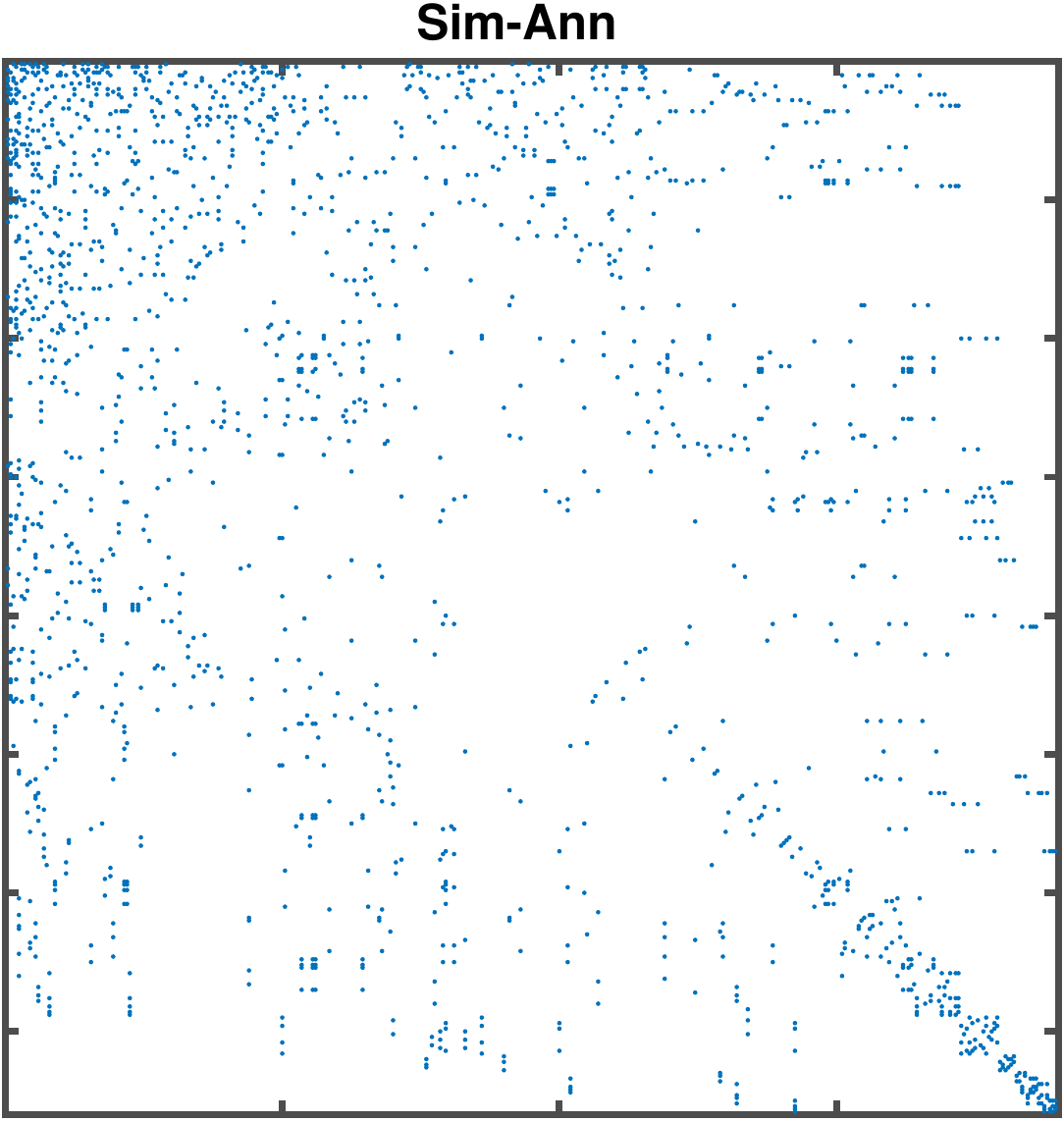}};
\node at (8,-3.5) {\includegraphics[width=.25\textwidth,trim=0 0 0 .5cm,clip]{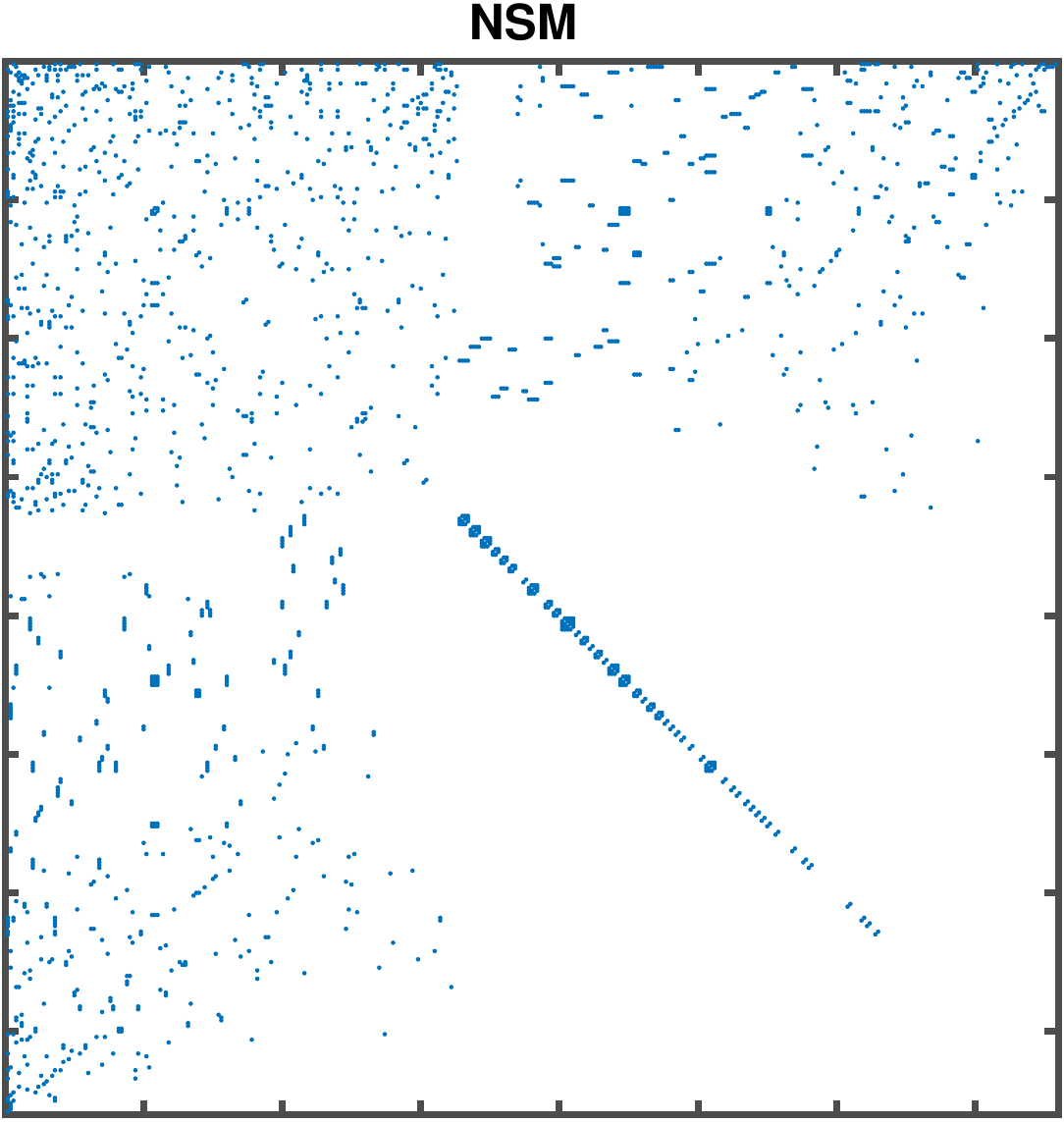}};
\node[rotate=90] at (-2.5,-7) {\sf Erd\H{o}s};
\node at (0,-7) {\includegraphics[width=.25\textwidth,trim=0 0 0 .5cm,clip]{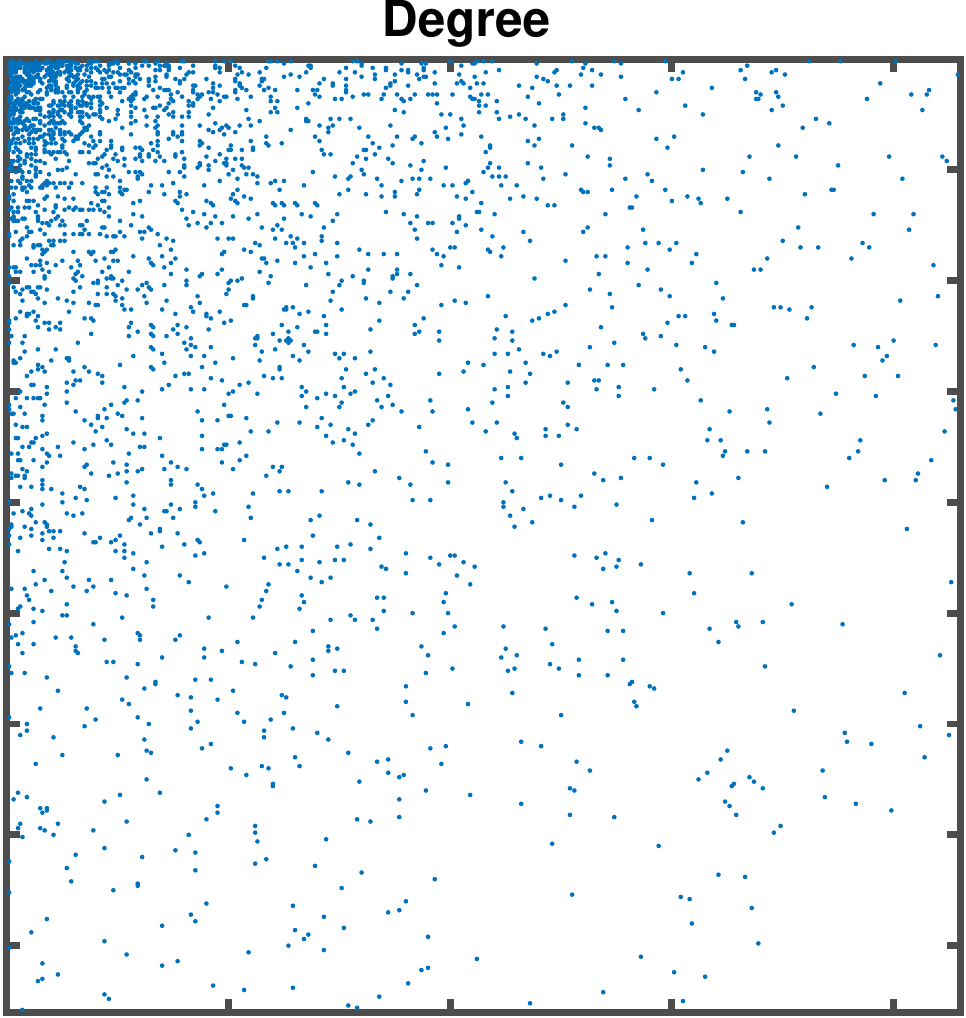}};
\node at (4,-7) {\includegraphics[width=.25\textwidth,trim=0 0 0 .5cm,clip]{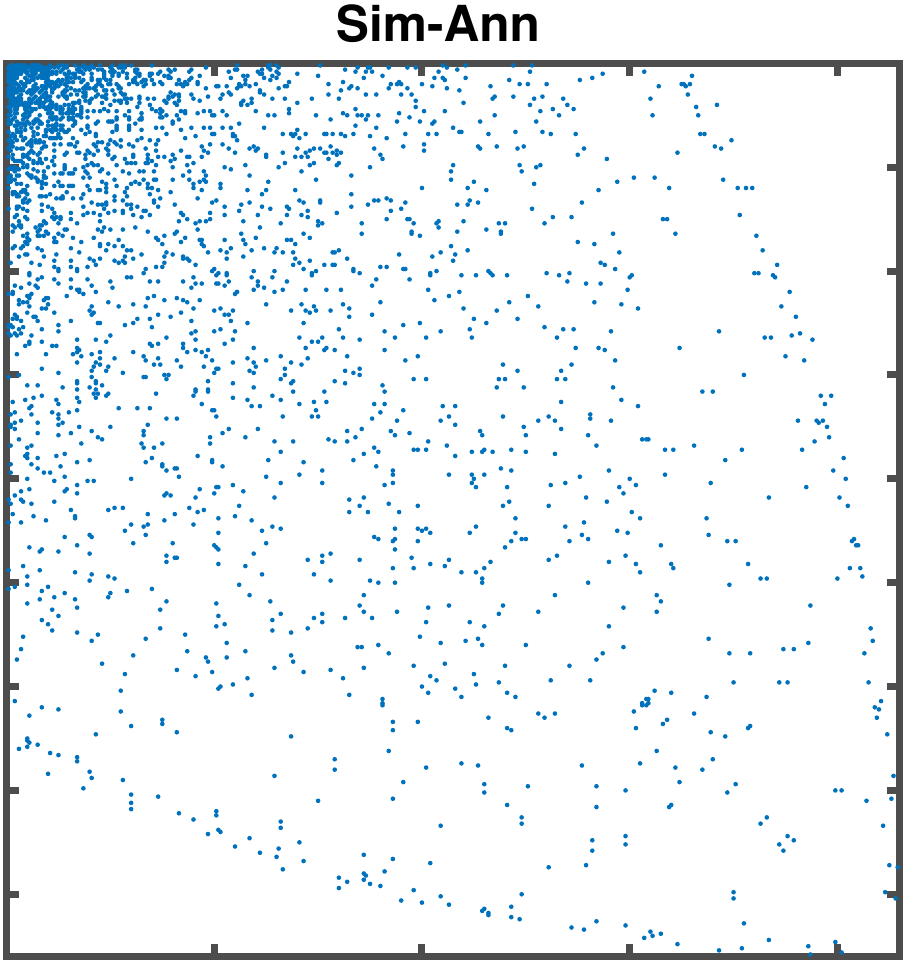}};
\node at (8,-7) {\includegraphics[width=.25\textwidth,trim=0 0 0 .5cm,clip]{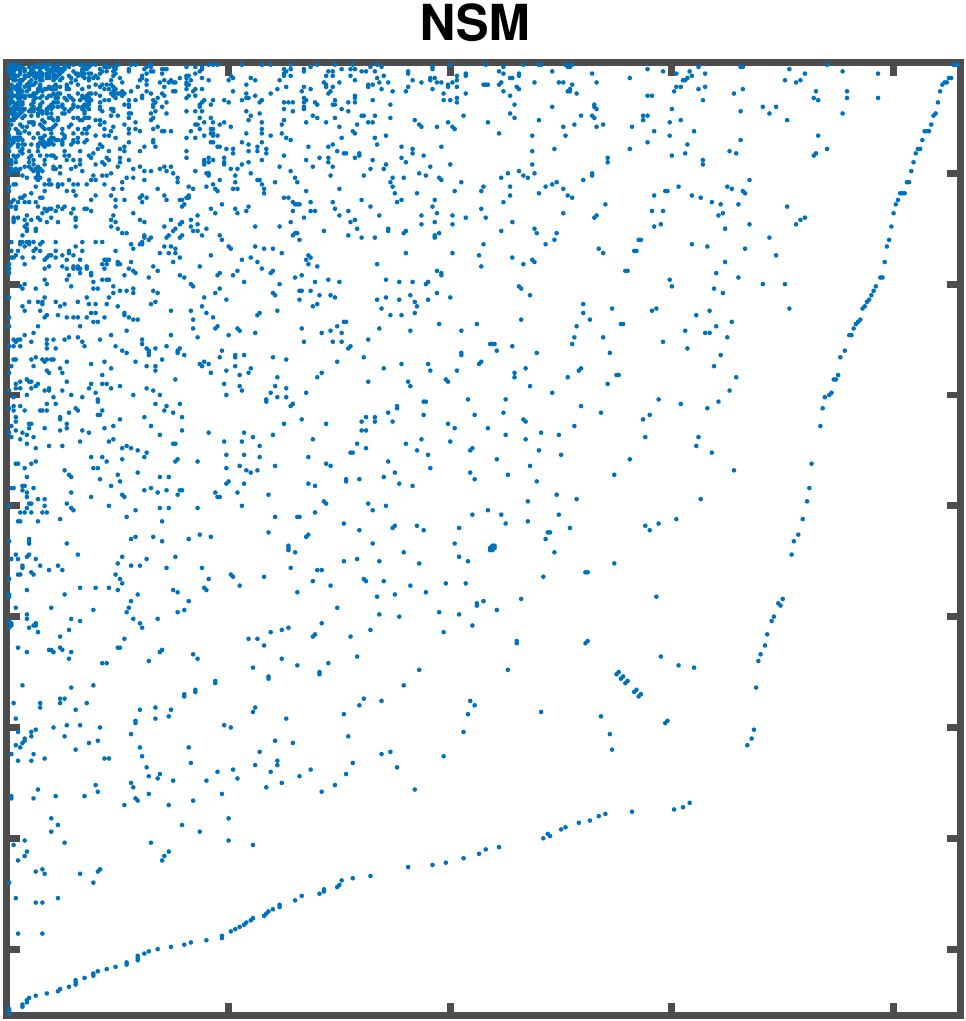}};
\node[rotate=90] at (-2.5,-10.5) {\sf Yeast PPI};
\node at (0,-10.5) {\includegraphics[width=.25\textwidth,trim=0 0 0 .5cm,clip]{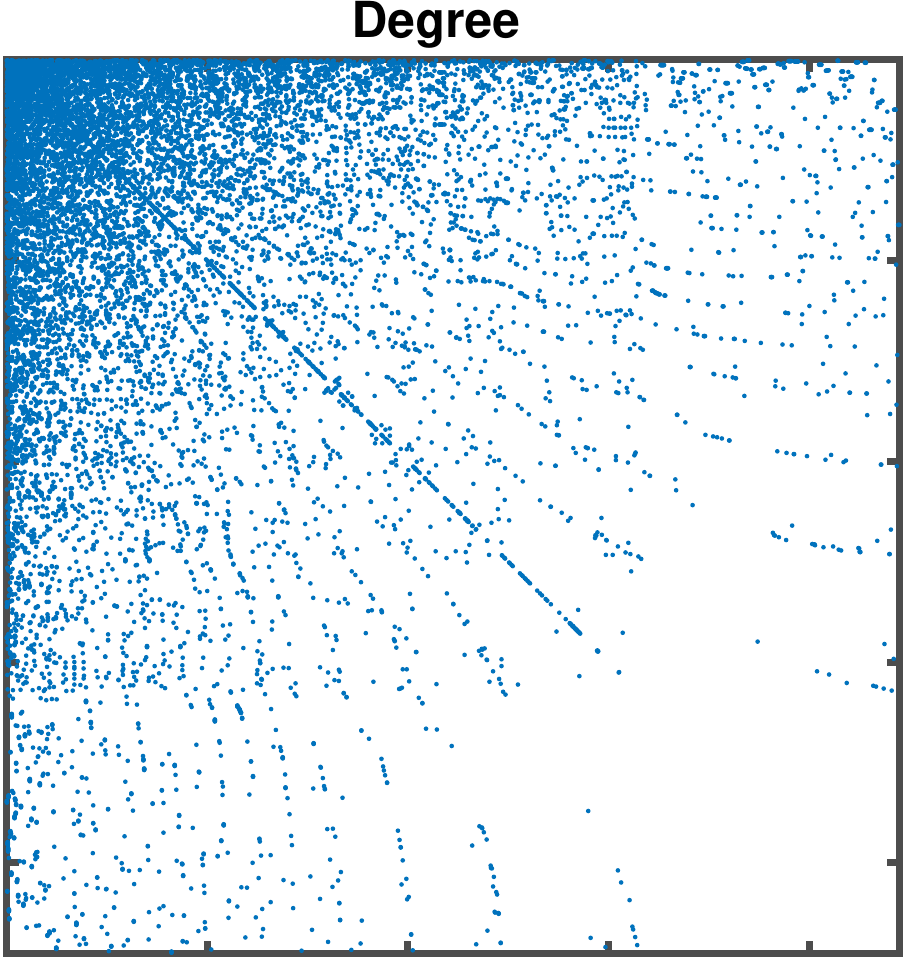}};
\node at (4,-10.5) {\includegraphics[width=.25\textwidth,trim=0 0 0 .5cm,clip]{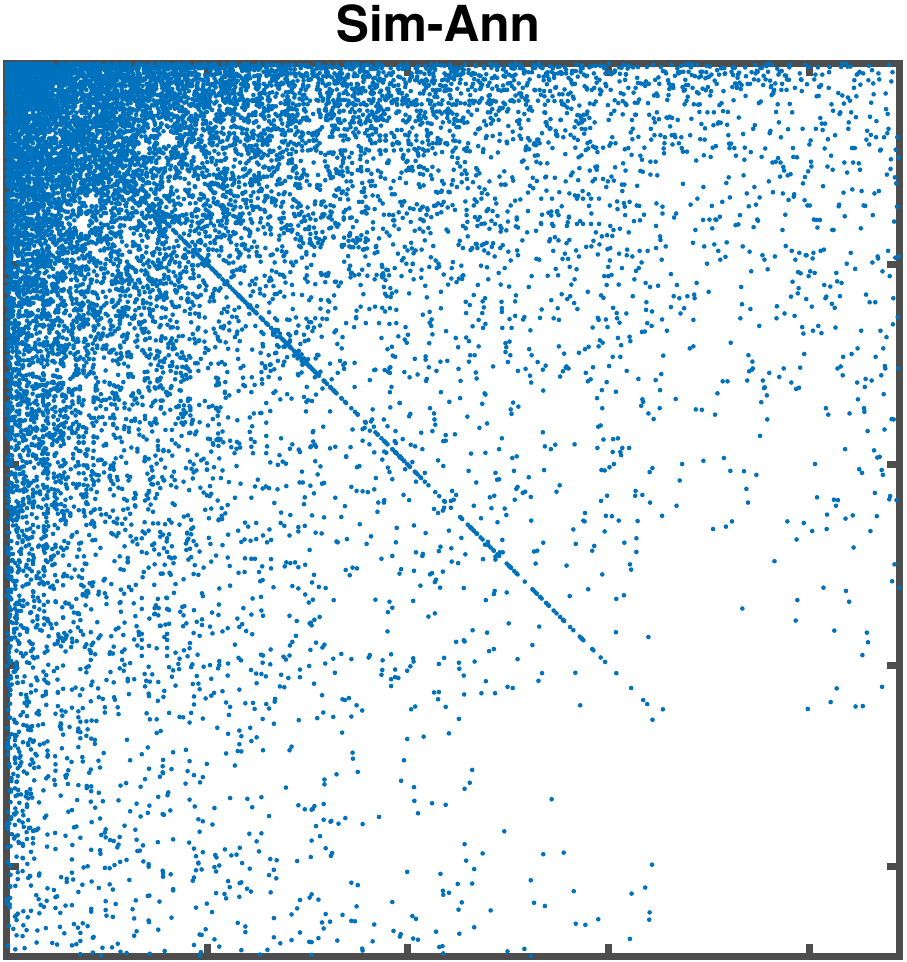}};
\node at (8,-10.5) {\includegraphics[width=.25\textwidth,trim=0 0 0 .5cm,clip]{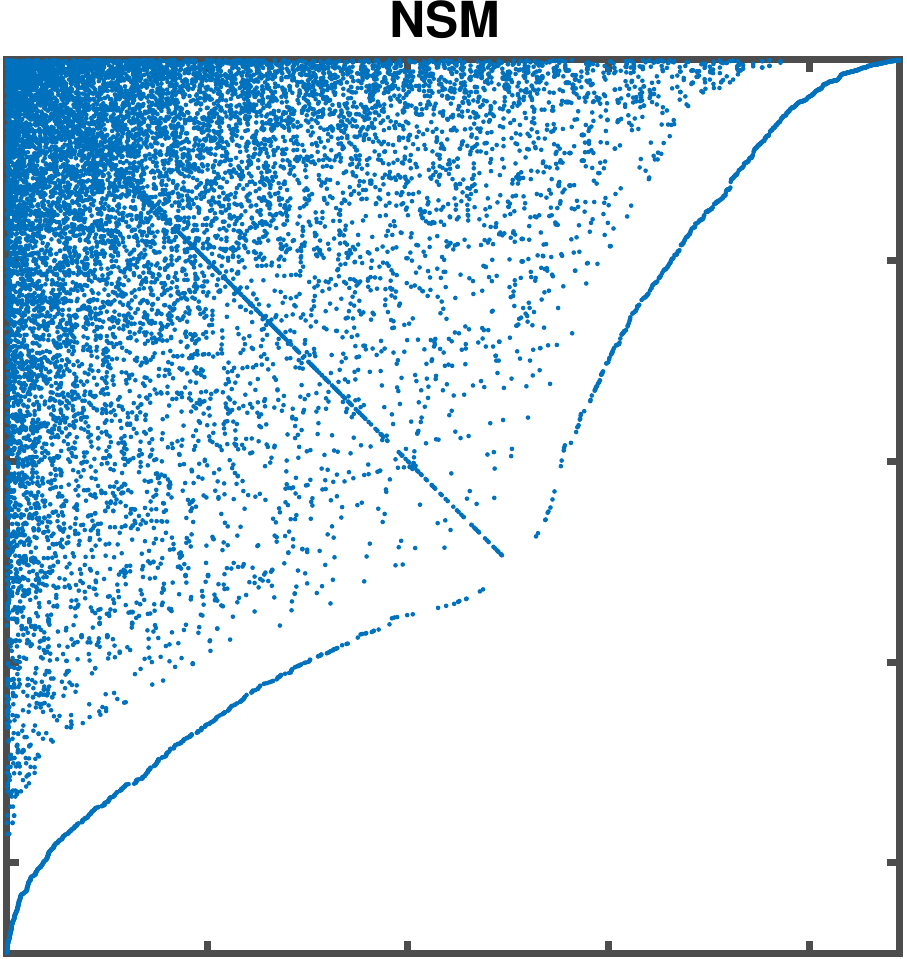}};
\node[rotate=90] at (-2.5,-14.3) {\sf Internet 2006};
\node at (0,-14.3) {\includegraphics[width=.25\textwidth,trim=3.05cm 0 2.6cm .6cm,clip]{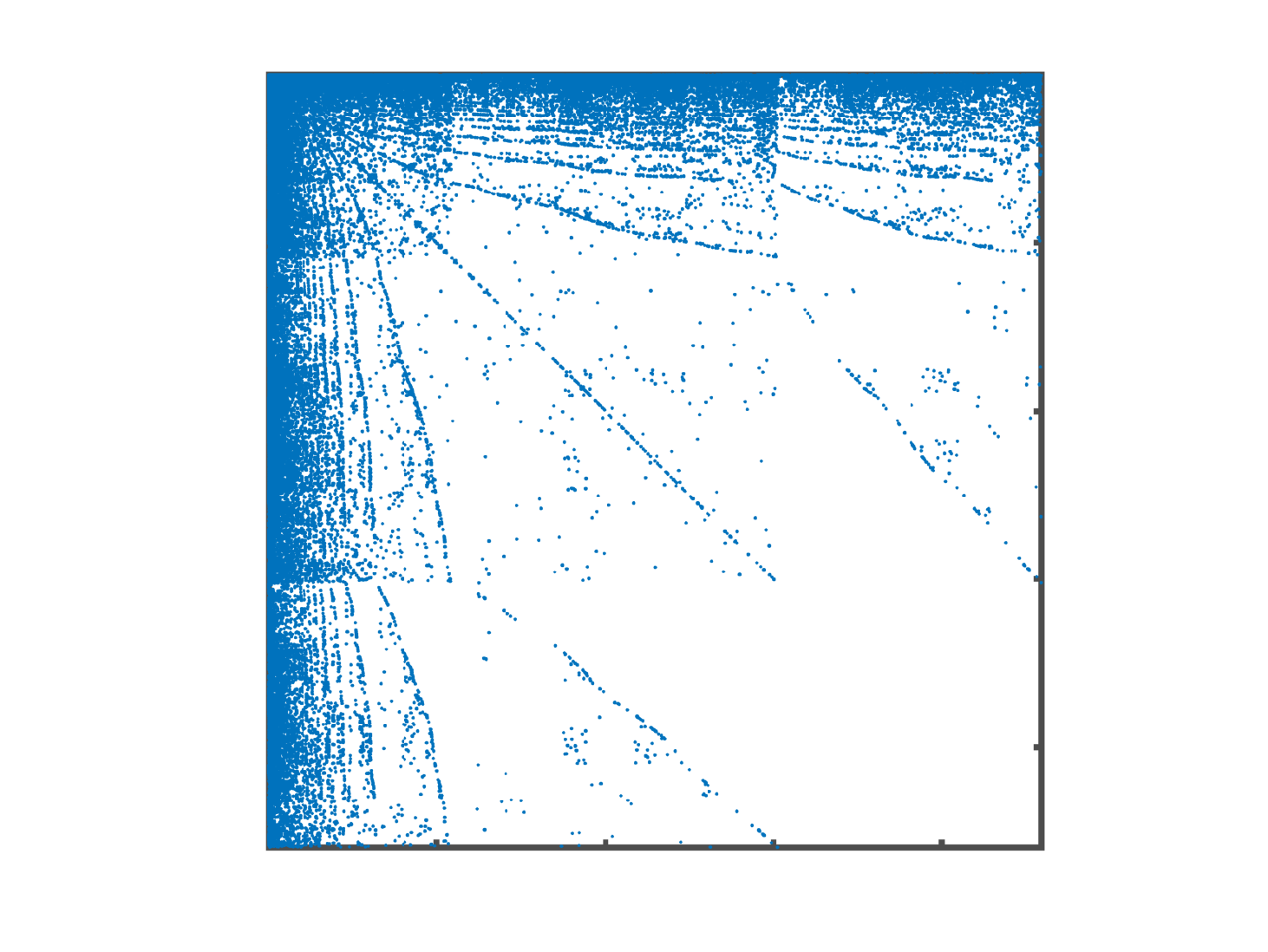}};
\node at (4,-14.3) {\includegraphics[width=.25\textwidth,trim=3.5cm 0 3cm .6cm,clip]{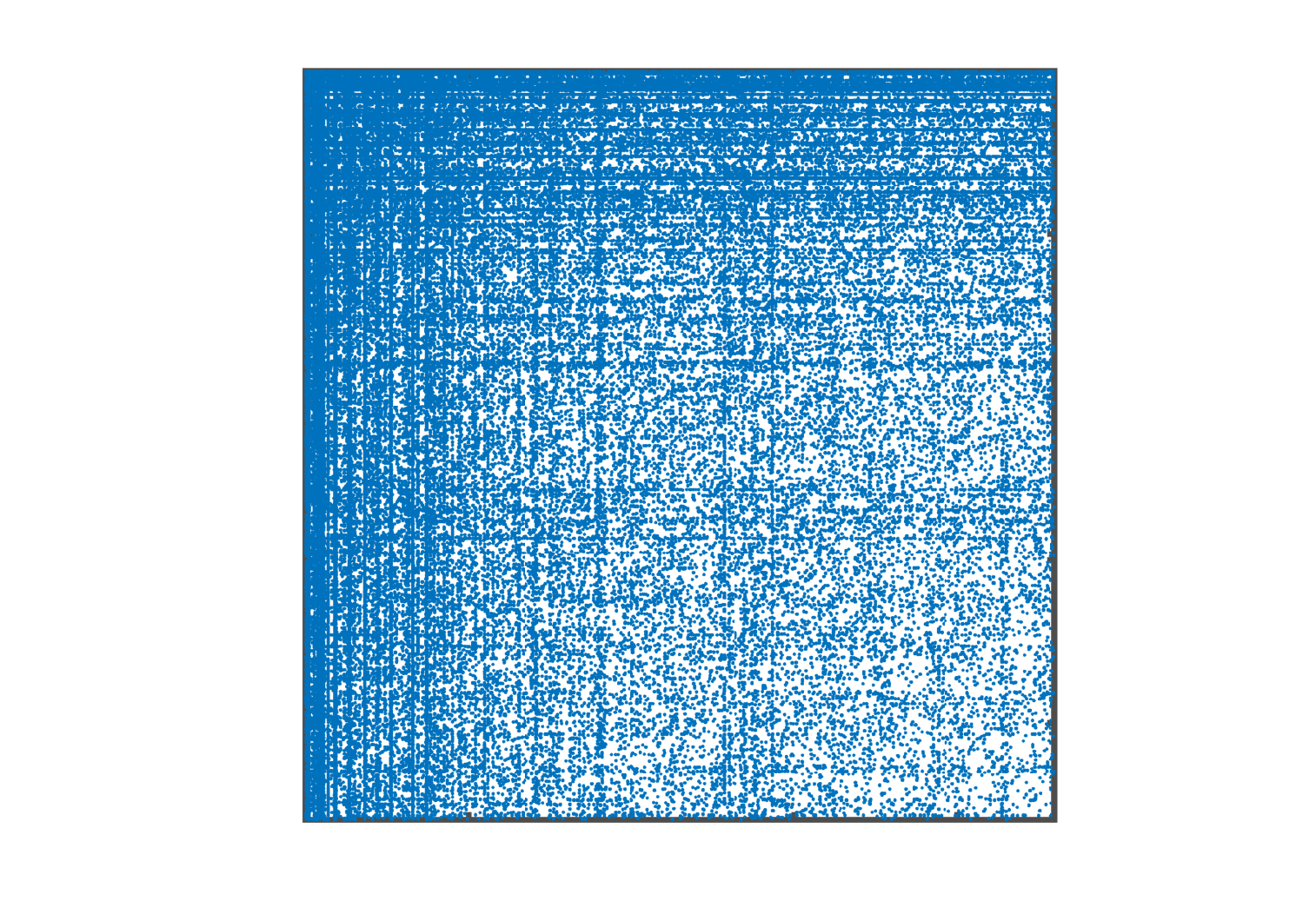}};
\node at (8,-14.3) {\includegraphics[width=.25\textwidth,trim=3.5cm 0 3cm .6cm,clip]{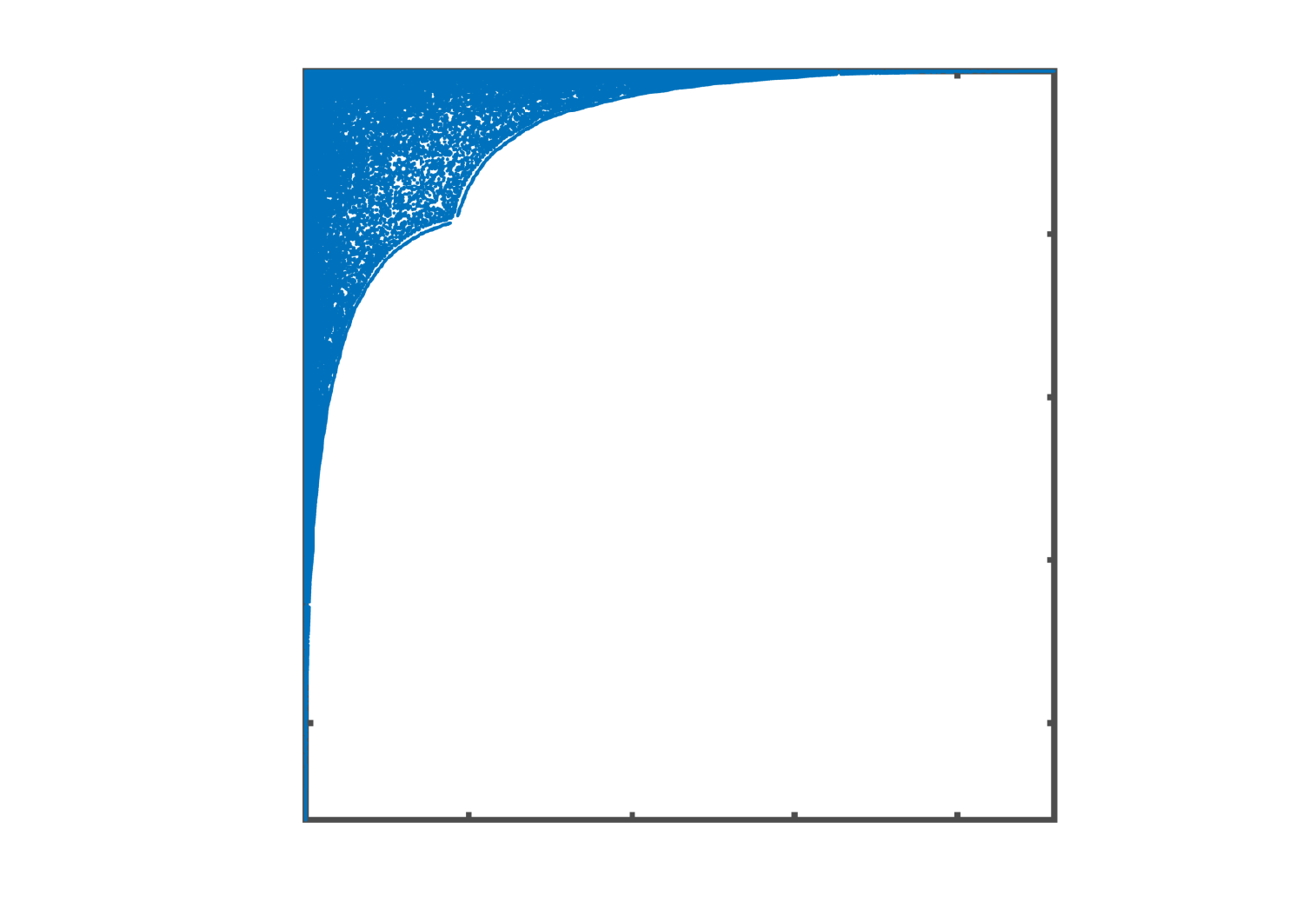}};
\end{tikzpicture}
\caption{Sparsity plots of adjacency matrices for various real-world networks. Each row
of three plots corresponds to a different dataset. Each column corresponds to a different ordering of the network nodes. Left column: nodes ordered by decreasing degree. 
Middle column: nodes ordered by aggregate core score of \cite{rombach2017core}. Right column: nodes ordered by nonlinear spectral method in Algorithm~\ref{alg:1}.}\label{fig:spy_plots}
\end{figure}

\subsubsection*{Analysis}
\begin{rev}
In Figure~\ref{fig:spy_plots} we use adjacency matrix sparsity plots to show how the three algorithms Degree, Sim-Ann and NSM compare on five networks of different size. 
\end{rev}
In each case, the nodes are reordered in descending magnitude of core--periphery score. 
We see that the three methods give very different visual representations of the data, with 
NSM generally finding a more convincing core--periphery structure.
\begin{rev}
On the Cardiff, 
Erd\H{o}s and Yeast networks, 
NSM gives a well-defined ``anti-diagonal contour'' that essentially separates the
reordered matrix into two regions. This type of behavior 
has been observed for other spectral reordering methods
\cite{TH10}, but does not seem to be fully understood.
\end{rev}

We note that the reciprocated Twitter mentions for the city of Cardiff show a strong 
core--periphery structure in all three orderings. 
Very similar results were observed for all ten city-based networks of reciprocated Twitter mentions collected in \cite{grindrod2016comparison}, which however we refrain from showing here for the sake of brevity. 

\begin{figure}[!t]
\centering
\includegraphics[width=.45\textwidth,clip,trim=0 0 1.5cm 0]{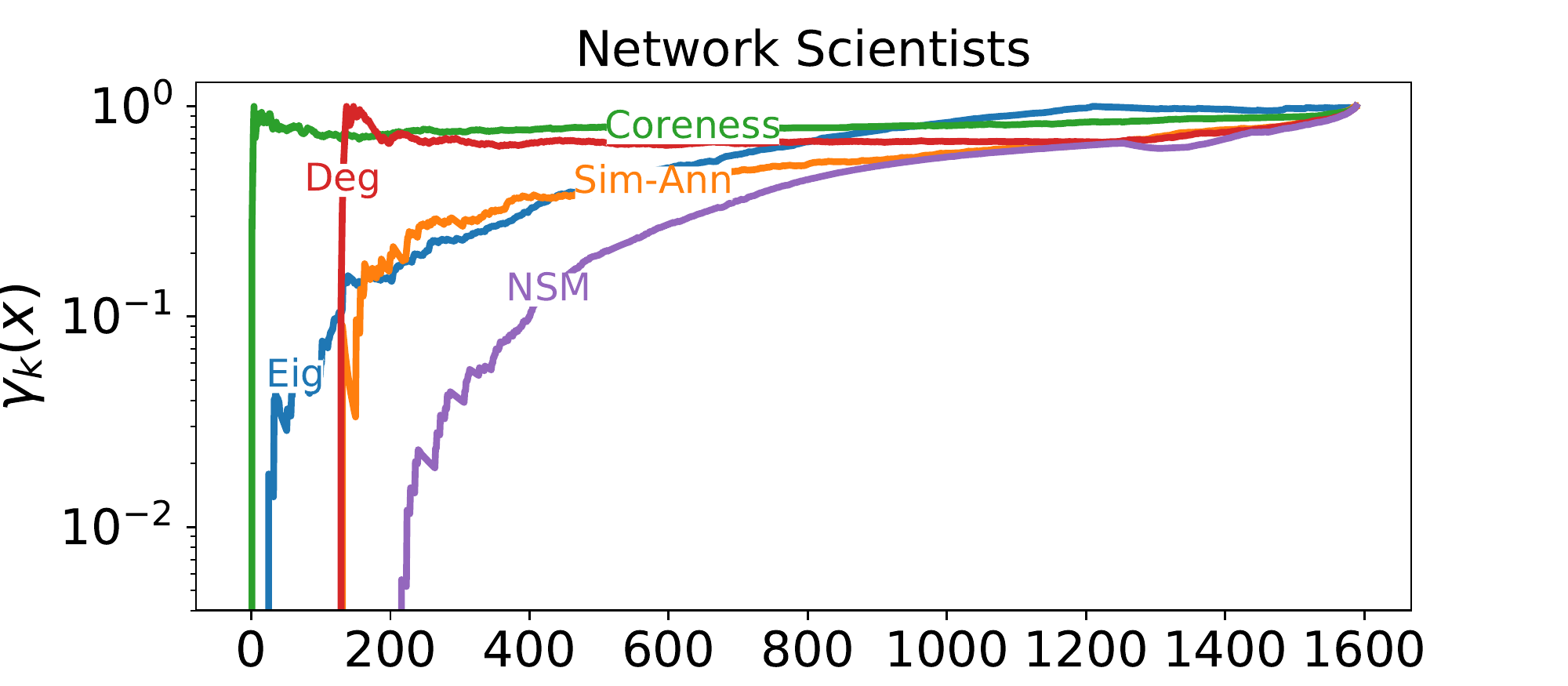}
\includegraphics[width=.45\textwidth,clip,trim=0 0 1.5cm 0]{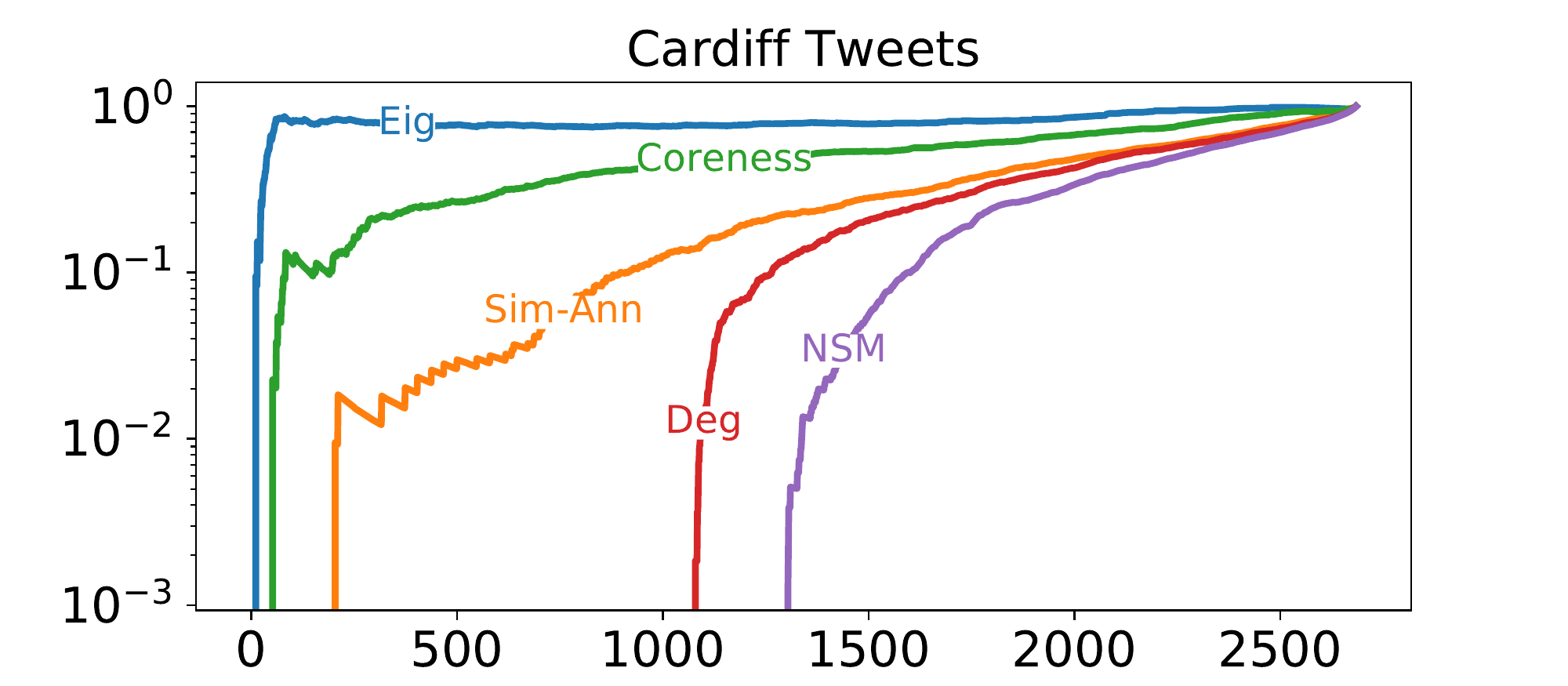}\\
\includegraphics[width=.45\textwidth,clip,trim=0 0 1.5cm 0]{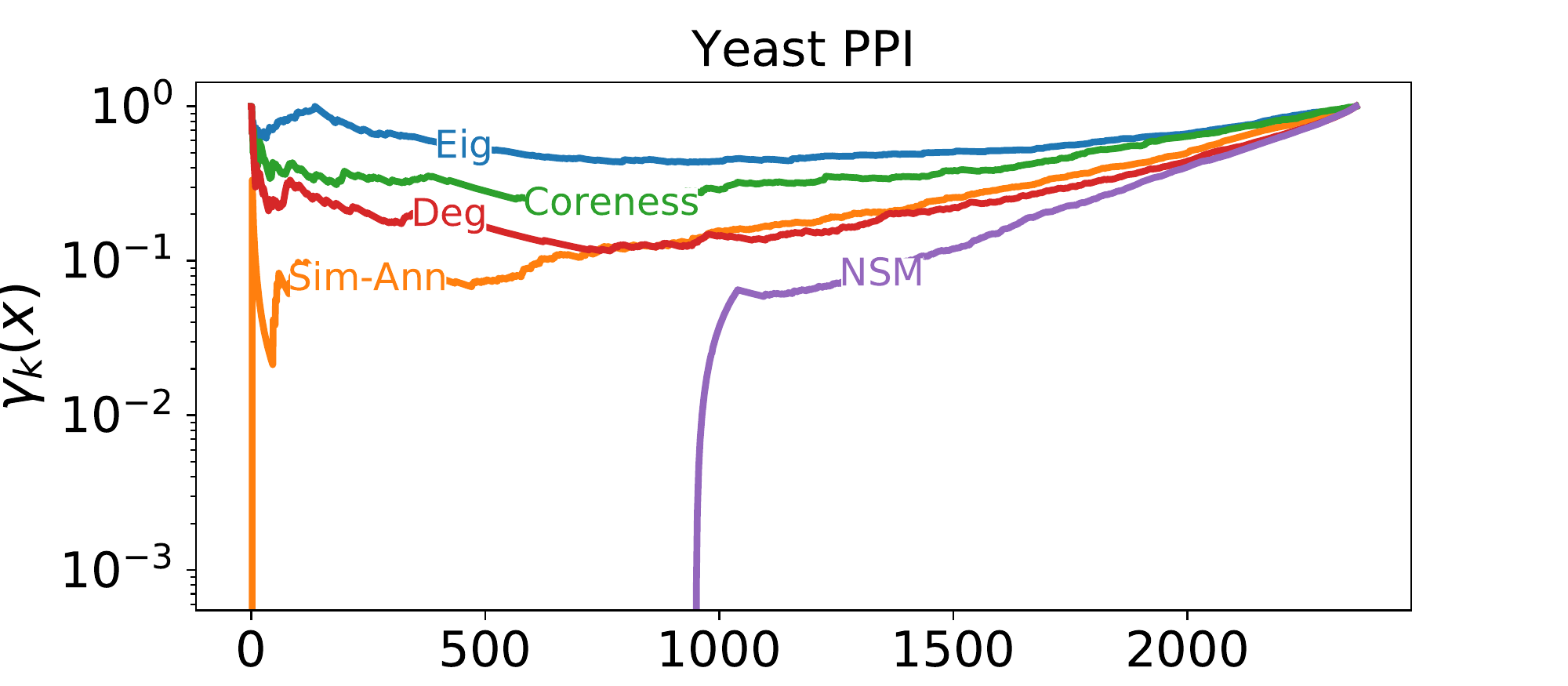}
\includegraphics[width=.45\textwidth,clip,trim=0 0 1.5cm 0]{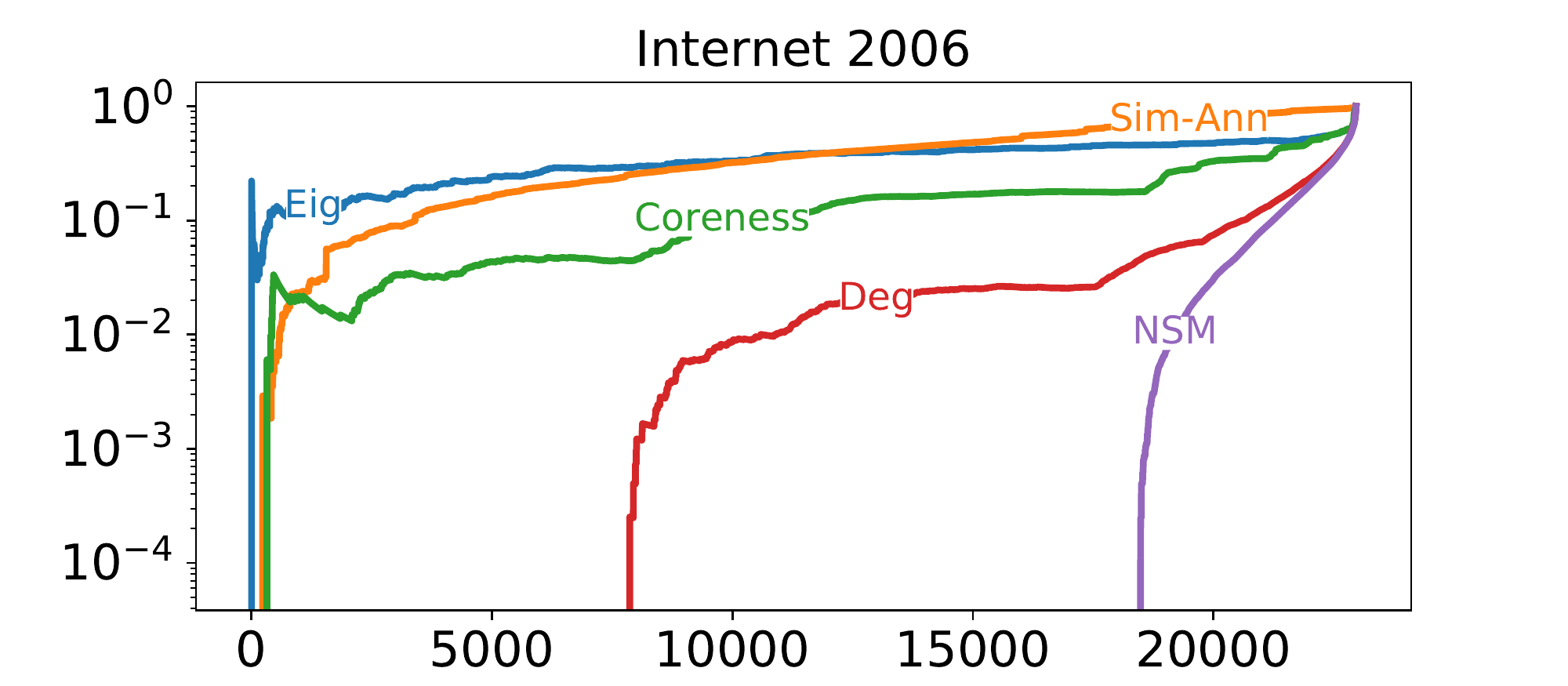}\\
\includegraphics[width=.45\textwidth,clip,trim=0 0 1.5cm 0]{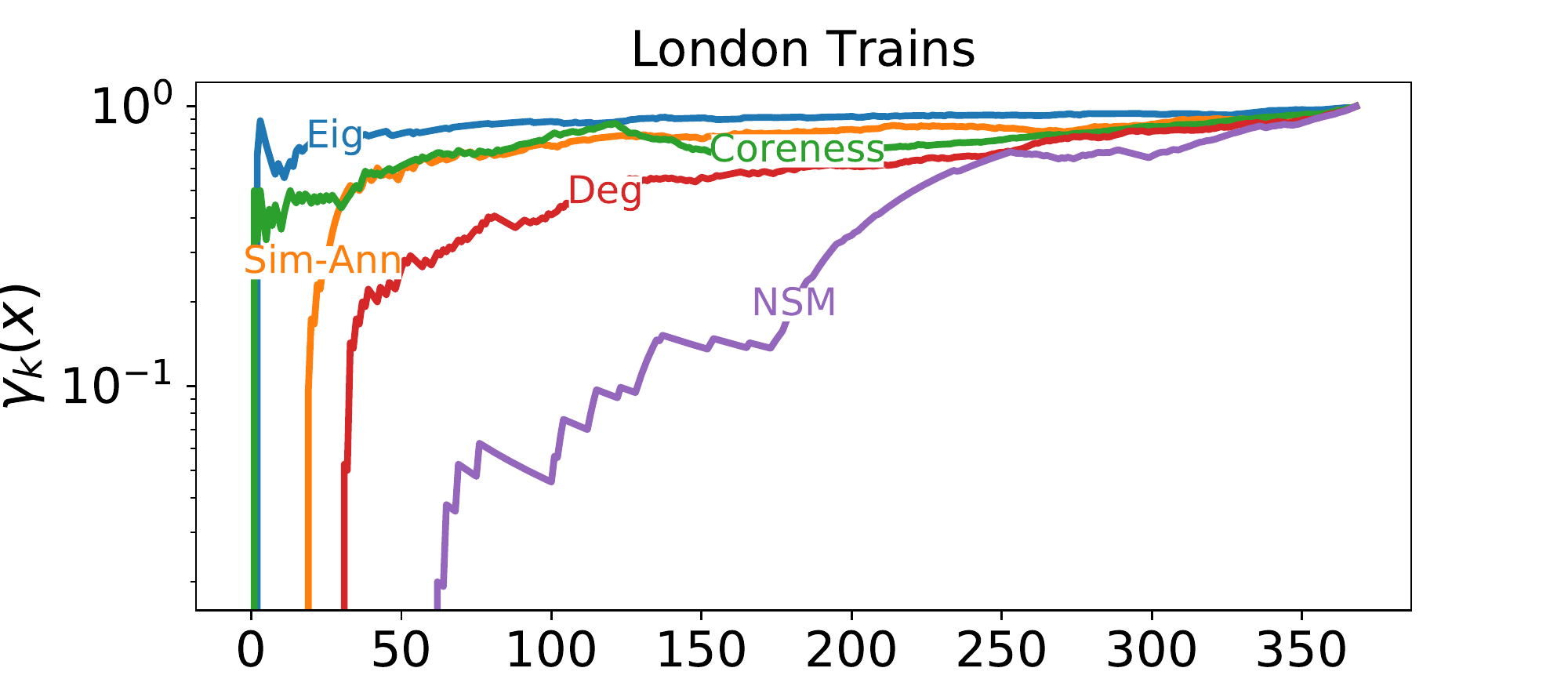}
\includegraphics[width=.45\textwidth,clip,trim=0 0 1.5cm 0]{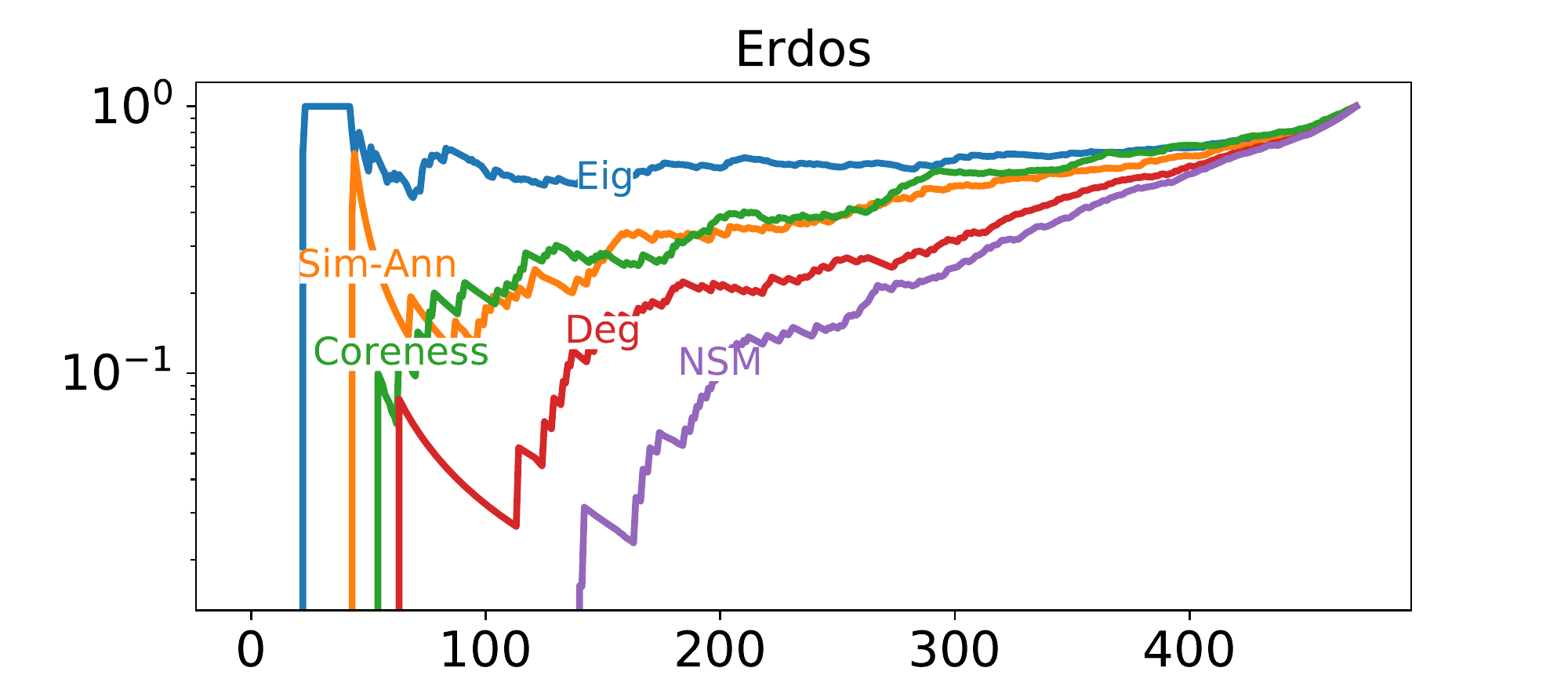}
\caption{(Color online.) Core--periphery profile $\gamma(\b x)$ of three networks where $\b x$ is the core--score vector obtained with five different methods. }\label{fig:cp_profile}
\end{figure}

\begin{rev}
To quantify the quality of the core--periphery assignments and to compare different methods on all the datasets, we perform two further tests. 

In Figure \ref{fig:cp_profile} we show the \emph{core--periphery profile} 
of five networks obtained with different methods. This analysis is inspired by the core--periphery profiling approach proposed in \cite{della2013profiling} and consists of evaluating the core--periphery profile function 
$\gamma(\b x)$ associated with a given core--periphery quality vector $\b x>0$, defined as 
\begin{equation}
\gamma(\b x)_k  = \frac{\sum_{i,j=1}^k A_{\pi_i, \pi_j}}{\sum_{i=1}^k\sum_{j=1}^n A_{\pi_i,j}}, 
\label{eq:cpqv}
\end{equation}
where $\b \pi$ is a permutation such that 
$x_{\pi_1}\leq \cdots \leq x_{\pi_n}$. 
In words, for each $k$ if we regard 
$\pi_1, \pi_2, \ldots, \pi_k$ as peripheral nodes and 
$\pi_{k+1}, \pi_{k+2}, \ldots, \pi_n$ as core nodes then
$\gamma(\b x)_k$ in (\ref{eq:cpqv}) measures the ratio of 
periphery-periphery links to
periphery-all links.
Hence, $\b x$ reveals a strong core-periphery structure 
if $\gamma(\b x)_k$ remains small for large $k$.

The quantity $\gamma(\b x)_k$ also has an interesting random walk interpretation. Given $\b x>0$ let  $S_k = \{\pi_1,\dots,\pi_k\}$ and consider the standard random walk on $G$ with transition matrix $T=(t_{ij})$ defined by $t_{ij} = a_{ij}/\sum_k a_{ik}$. As the graph is undirected, the stationary distribution of the chain $\b y>0$ is the (normalized) degree vector $\b y = \b d/\sum_i d_i$. Therefore, 
$$\gamma(\b x)_k = \frac{\sum_{i,j\in S_k}y_i t_{ij}}{\sum_{i\in S_k}y_i}\, ,$$ which corresponds to the persistance probability of $S_k$, 
i.e., the probability that a random walker who is currently in any 
of the nodes of $S_k$, remains in $S_k$ at the next time step. 
Clearly $\gamma(\b x)_k\leq \gamma(\b x)_{h}$ if $k\leq h$ and 
$\gamma(\b x)_n = 1$, for any $\b x$. 
This further justifies why having small values of $\gamma(\b x)_k$ for 
large values of $k$ is a good indication of the presence of a core and periphery \cite{della2013profiling}. Figure \ref{fig:cp_profile} shows that the smallest core--periphery profile $\gamma(\b x)$ is obtained when $\b x$ is the output of 
Algorithm~\ref{alg:1}. This confirms the behavior shown in 
Figure~\ref{fig:spy_plots}---Algorithm~\ref{alg:1} is the most effective 
at transforming each network into core-periphery form.
\end{rev}

\begin{figure}[t!]
\centering
\includegraphics[width=.9\textwidth,clip,trim=0cm 0cm 0cm 0]{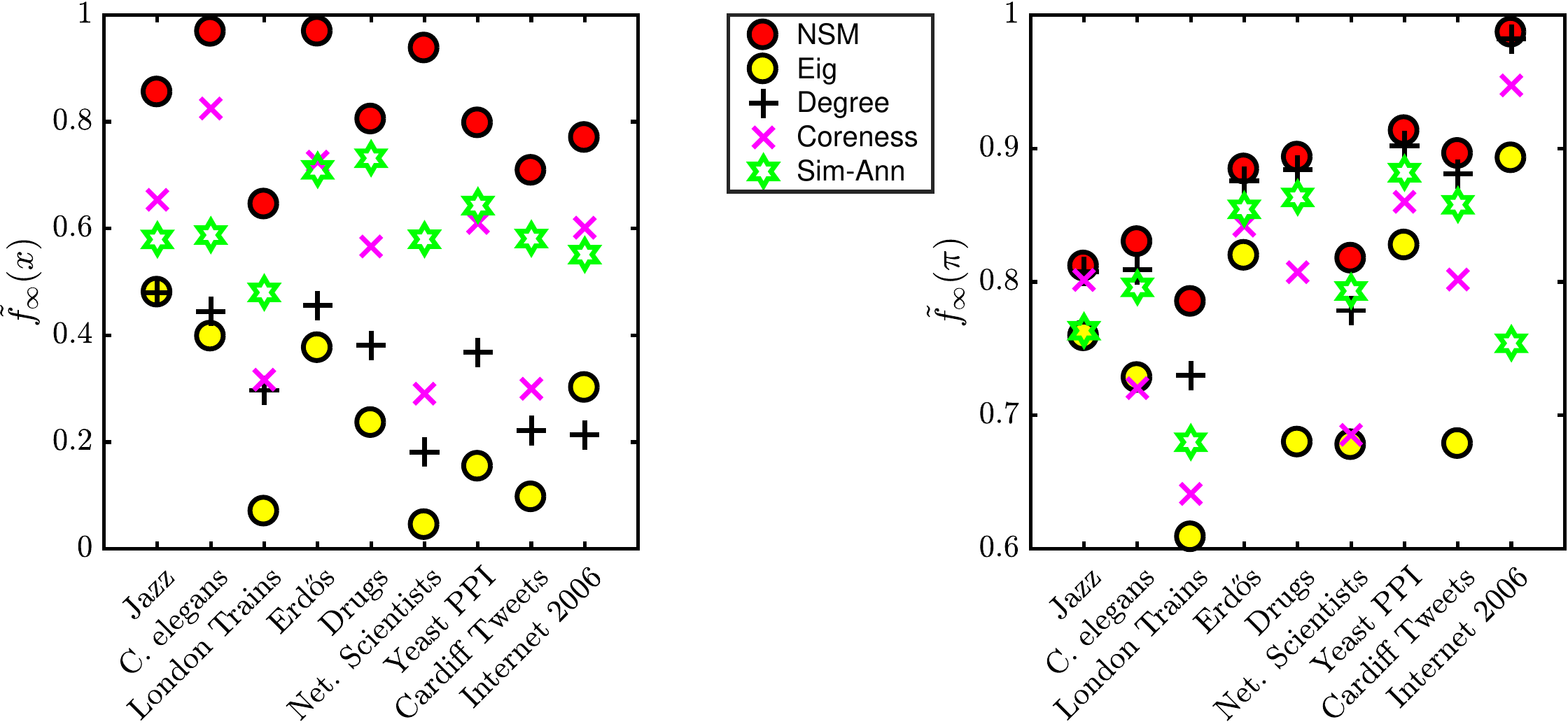}
\caption{(Color online.) Normalized core--periphery quality measure $\tilde f_\infty$ for different methods. Left: Value of $\tilde f_\infty(\b x)$ for different core--score vectors $\b x$ normalized so that $\max(\b x) = 1$ and $\min(\b x) = 0$. Right: Value of $\tilde f_\infty(\b \pi)$ where $\b \pi$ is the permutation  that sorts the entries of $\b x$ in increasing order.}\label{fig:cp_quality}
\end{figure}

\begin{rev}
Finally, in Figure~\ref{fig:cp_quality} we compare the value of the 
core--periphery quality function $f_\infty$ on all the 
datasets and all the methods. 
To cover networks of different sizes we plot the normalized value 
$$
\tilde f_\infty(\b x) = \frac{f_\infty(\b x)}{(\max_i x_i)\sum_{ij}a_{ij} }\, .
$$ 
Precisely, the figure shows two plots: On the left we evaluate $\tilde f_\infty(\b x)$ on core--score vectors $\b x$ obtained by the methods, rescaled so that $\max(\b x) = 1$ and $\min(\b x) = 0$, whereas on the right we evaluate $\tilde f_\infty$ on  the corresponding permutation vector $\b \pi$ such that $x_{\pi_1}\leq \dots\leq x_{\pi_n}$. The NSM is designed to optimize $f_\alpha$ (recall  $\alpha = 10$ in our experiments) 
so the value of $\tilde f_\infty$ is significantly larger than the 
value obtained with other methods. 
We see that NSM continues to give the best results 
when $\tilde f_\infty$ is evaluated on the 
associated permutation.
\end{rev}

\begin{figure}[!t]
\centering
\includegraphics[width=.63\textwidth,clip,trim=1.5cm 3.5cm 1cm 3.5cm]{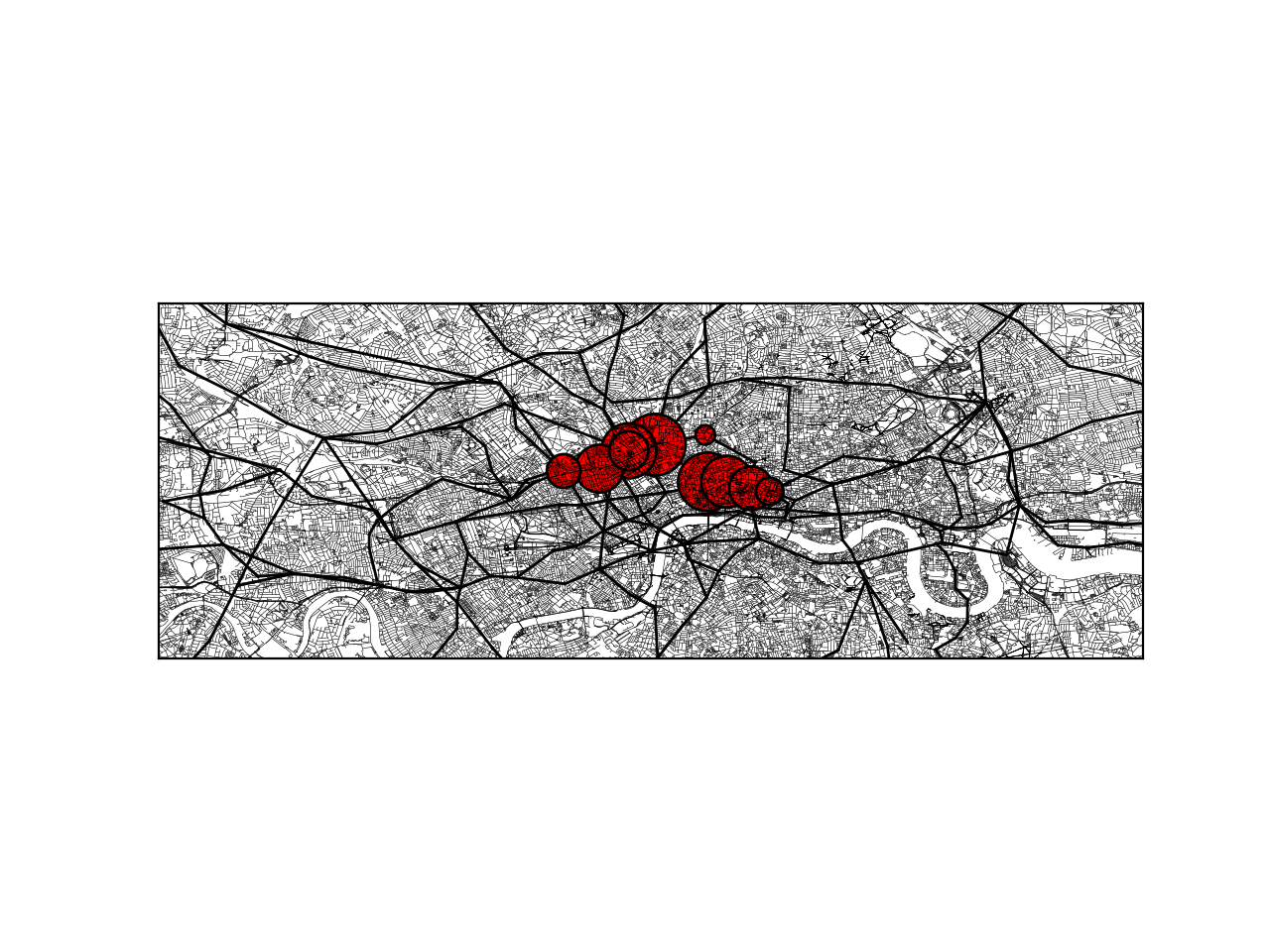}\\
\includegraphics[width=.63\textwidth,clip,trim=1.5cm 3.5cm 1cm 3.5cm]{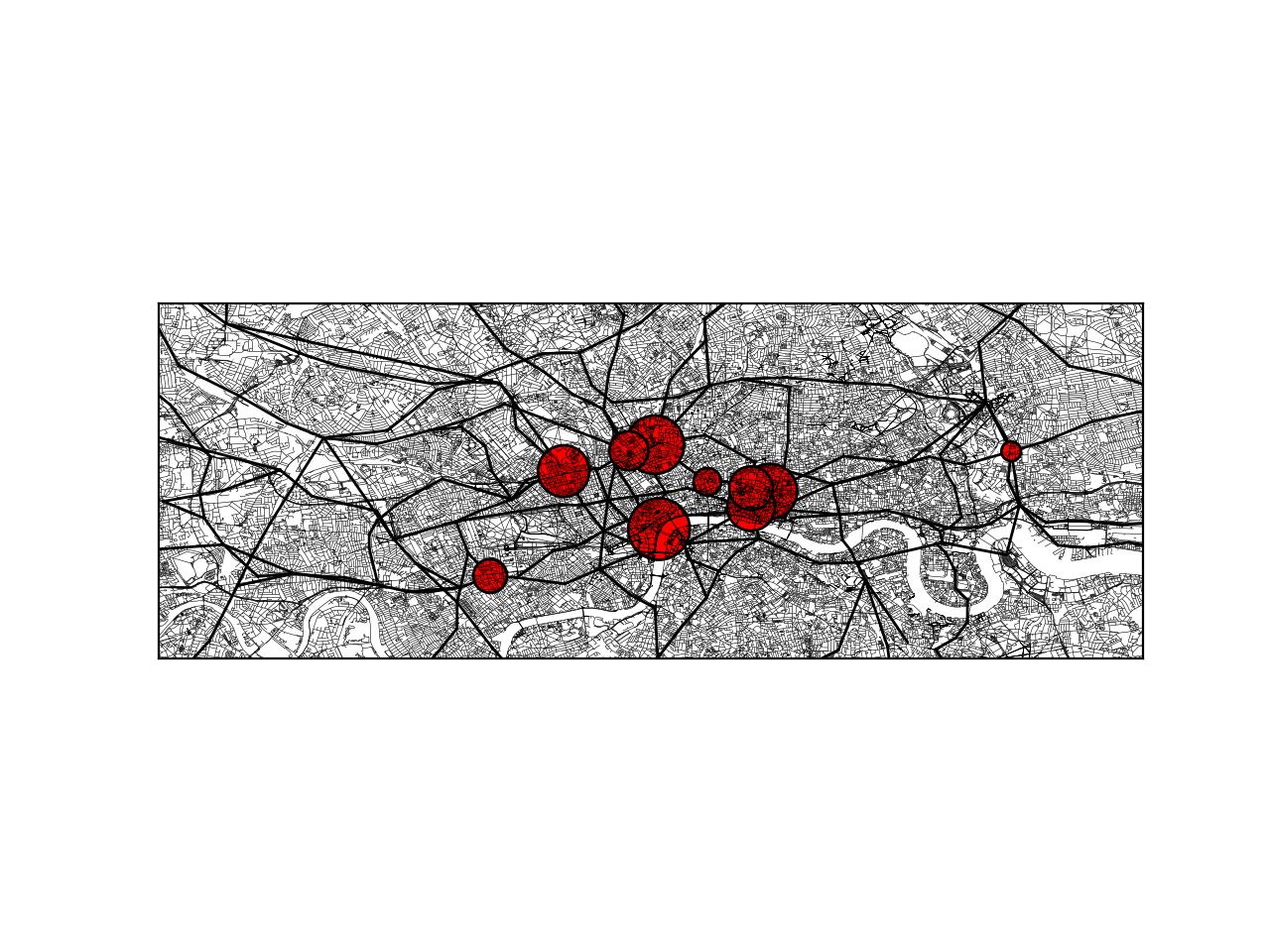}\\
\includegraphics[width=.63\textwidth,clip,trim=1.5cm 3.5cm 1cm 3.5cm]{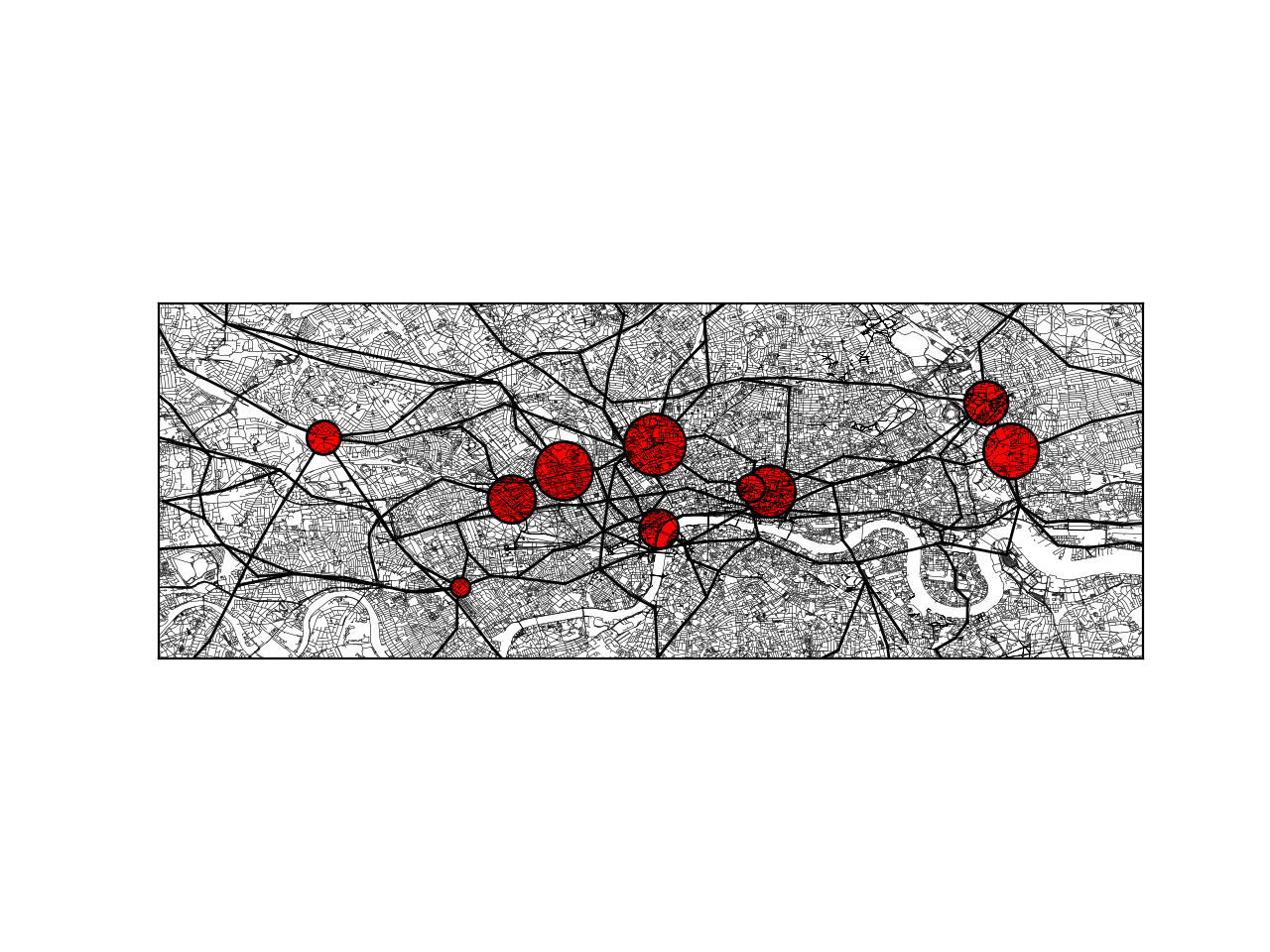}
\caption{(Color online.) Physical layout of the aggregate London transportation network. Red circles indicate the ten nodes with highest core--periphery score for the three algorithms, with node size
proportional to score. 
From top to bottom: Eigenvector centrality, Sim-Ann and NSM.}\label{fig:drawing_london}
\end{figure}

\subsubsection*{London Transportation Network}
\begin{rev}
In a final experiment we look in more detail at the 
London transportation network, where further nodal information is 
available,
using the Perron--Frobenius eigenvector of the adjacency matrix as a baseline for comparison. As discussed in Section \ref{sec:methods}, this vector can be viewed as both a
centrality measure and a core--periphery score, and it corresponds to a linear counterpart of our approach, retrieved when $\alpha\to 0$.  
We compare central nodes in the London train network obtained 
from Eig, NSM and Sim-Ann.
Note that important nodes for both Eig and Sim-Ann are somewhat related with 
the concept of centrality, as both methods aim to maximize the 
same core--quality function $\sum_{ij}A_{ij}x_ix_j$ but force 
different constraints sets, $\mathit \Omega = \{\b x: \|\b x\|_2=1\}$ for Eig and $\mathit \Omega = C_{\alpha, \beta}$ for Sim-Ann. 
On the London train network we find that the core assignments of these two techniques highly correlate. The importance of nodes captured by the NSM is, instead, more directly related with core  and periphery features and significantly differ from  Eig and Sim-Ann. For the sake of brevity we do not compare with other methods here. 
\end{rev}

In Figure~\ref{fig:drawing_london} 
we display the edges (underground, overground and DLR connections)
in physical space, 
with darker linetype indicating a larger weight. 
The top ten stations are highlighted for the three measures, with node size
proportional to the value.
Although four stations are highlighted in all three plots, 
there are clear differences in the results.
Eigenvector centrality and Sim-Ann produce similar results, 
focusing on a set of stations that are geographically close, 
whereas 
NSM assigns higher core scores to some stations at 
key intersections that are further from the city centre. 

To underscore the differences that are apparent in 
Figure~\ref{fig:drawing_london}, 
in 
Table~\ref{tab:data_london} we list the names of the top ten stations drawn in Figure~\ref{fig:drawing_london},
for each of the three rankings. 
Whereas four major stations, namely Baker Street,  King's Cross, Liverpool Street and Moorgate, are shared by 
all three methods, four stations appearing in the 
NSM top ten do not appear in the top ten of the other two methods.
Table~\ref{tab:data_london} 
also gives 
the overall number of passengers entering or exiting each 
station.
A station may play more than one role 
(underground, overground or DLR)
and we list the most recently reported total annual usage.
More precisely, we sum the records for 
\begin{itemize}
\item London Underground annual entry and exit 2016, 
\item National Rail annual entry and exit, 2016--2017,
\item DLR annual boardings and alightings, 2016,
\end{itemize}
as reported in Wikipedia in April 2018.  
Numbers indicate millions of passengers. The last row shows the overall number of passengers using the top ten stations identified by each method. We note that none of the rankings orders the
stations strictly by passenger usage. 
However, while the top ten stations selected by both Eigenvector and Sim-Ann involve around 600 million passengers per year, 
the top ten NSM stations 
involve almost 800 million passengers.

\begin{table}[!t]
\centering
\sf \footnotesize
\begin{tabular}{|llllll|}
\hline
\textbf{Eigenvector}  		& &  \textbf{Sim-Ann}  & & \textbf{NSM} & \\
\hline
King's Cross     &128.85	& Embankment	  &26.84	& King's Cross     &128.85	\\
Farringdon       &29.75    	& King's Cross	  &128.85	& Baker Str.       &29.75 	\\
Euston Square    &14.40 	& Liverpool  Str. &138.95   & West Ham         &77.10 	\\
Barbican         &11.97     & Baker Str.      &29.75    & Liverpool Str.   &138.95	\\
Gt Port.\ Str. &86.60   	& Bank            &96.52    & Paddington       &85.32   \\
Moorgate         &38.40     & Moorgate        &38.40    & Stratford        &129.01 	\\
Euston           &87.16     & Euston Square   &14.40    & Embankment       &26.84 	\\
Baker Str.	     &29.75     & Gloucester Road &13.98    & Willesden Junct. &109.27	\\
Liverpool Str.   &138.95 	& Farringdon      &27.92    & Moorgate         &38.40   \\
Angel            &22.10     & West Ham        &77.10    & Earls Court 	   &20.00	\\
\hline
\hline
\textit{\textbf{Total}} & 586.09 & & 592.70 & & 783.48\\
\hline
\end{tabular}
%\end{minipage}
\caption{Ten London train stations with highest core value, 
according to eigenvector centrality (left column), Sim-Ann (middle column), and NSM (right column), applied to the weighted London trains network. The numbers beside each station show overall 
(underground, overground, DLR) annual usage in millions of passengers. Numbers in the bottom row show the sum of annual usage across the top 10 stations selected by each method.
King's Cross refers to a combination of 
King's Cross and St Pancras main-line stations and the 
King's Cross St Pancras underground station.
}\label{tab:data_london}
\end{table}
\begin{figure}[!t]
\begin{tabular}{lrrr}
 & \includegraphics[width=.27\textwidth]{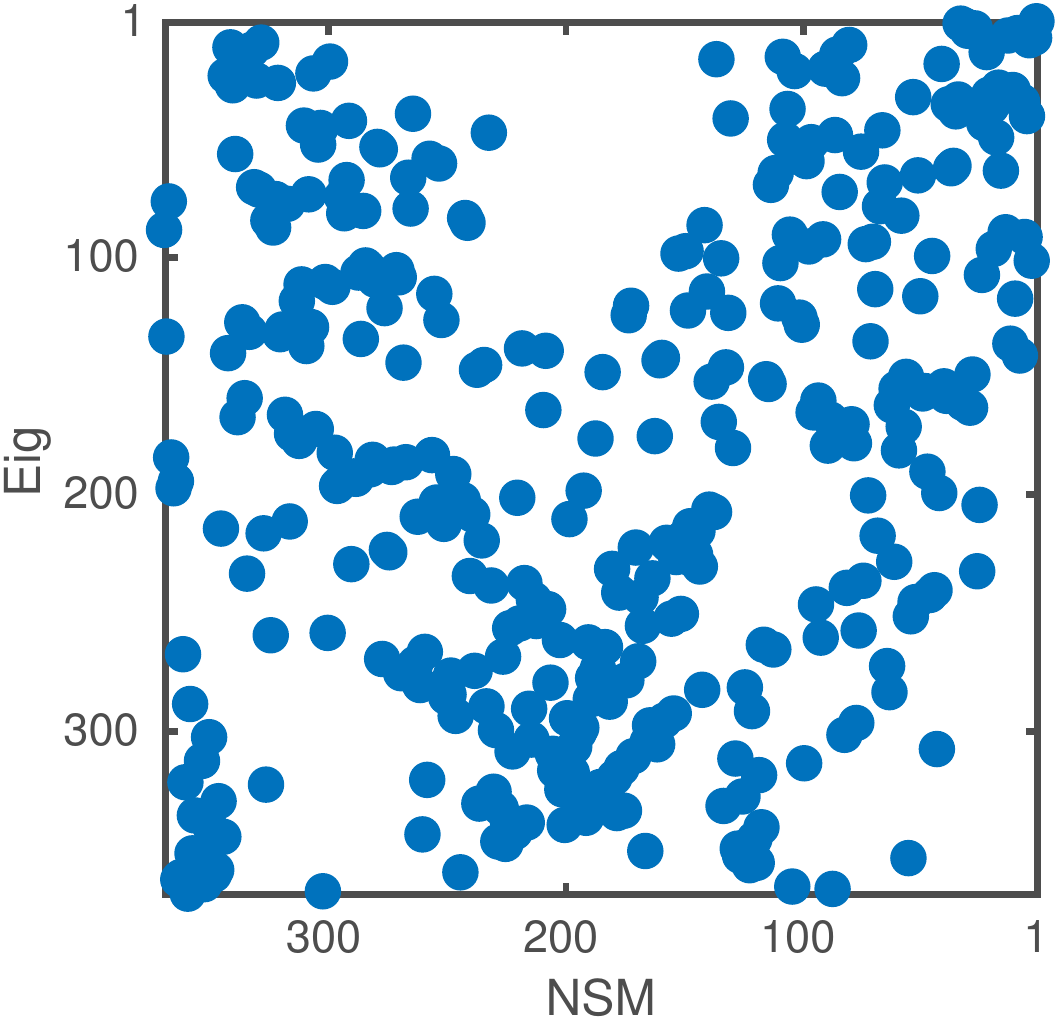}
 & \includegraphics[width=.27\textwidth]{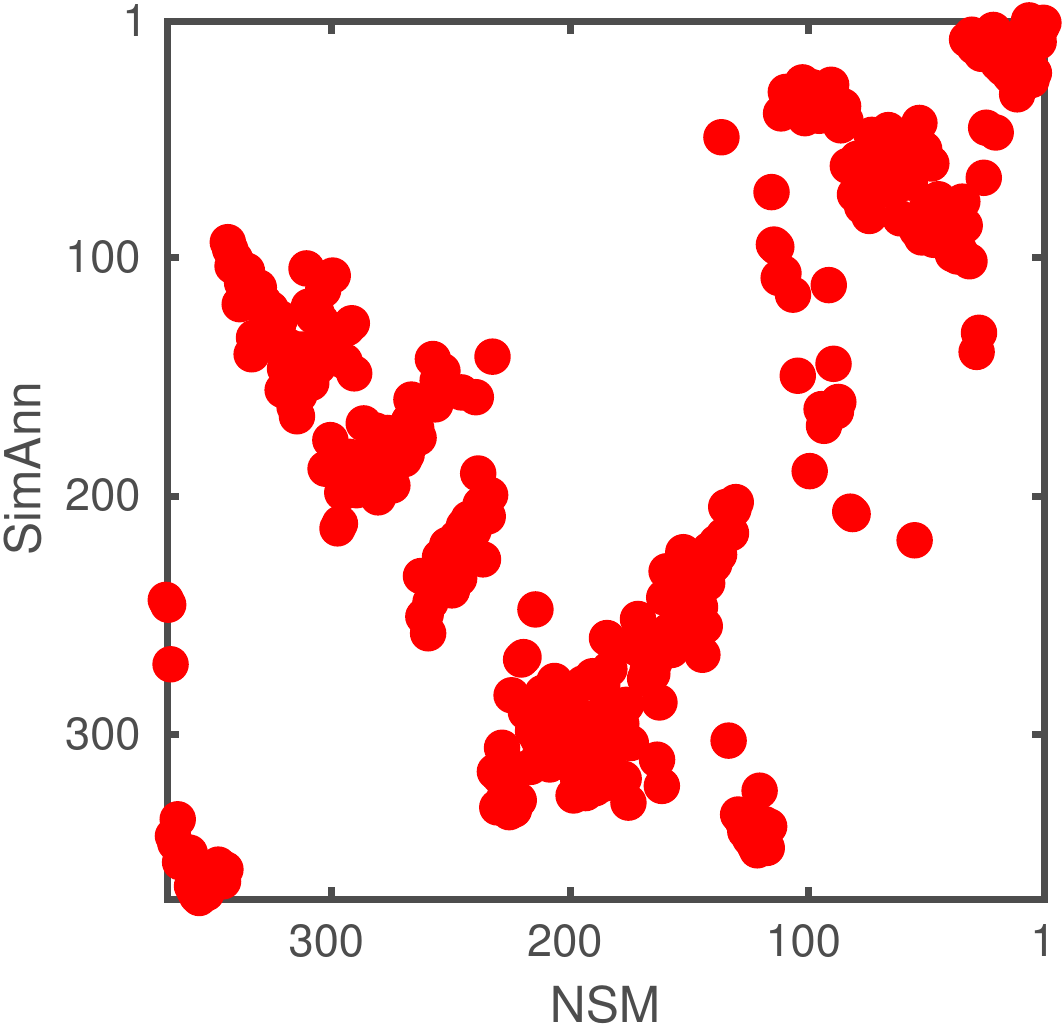}
 & \includegraphics[width=.27\textwidth]{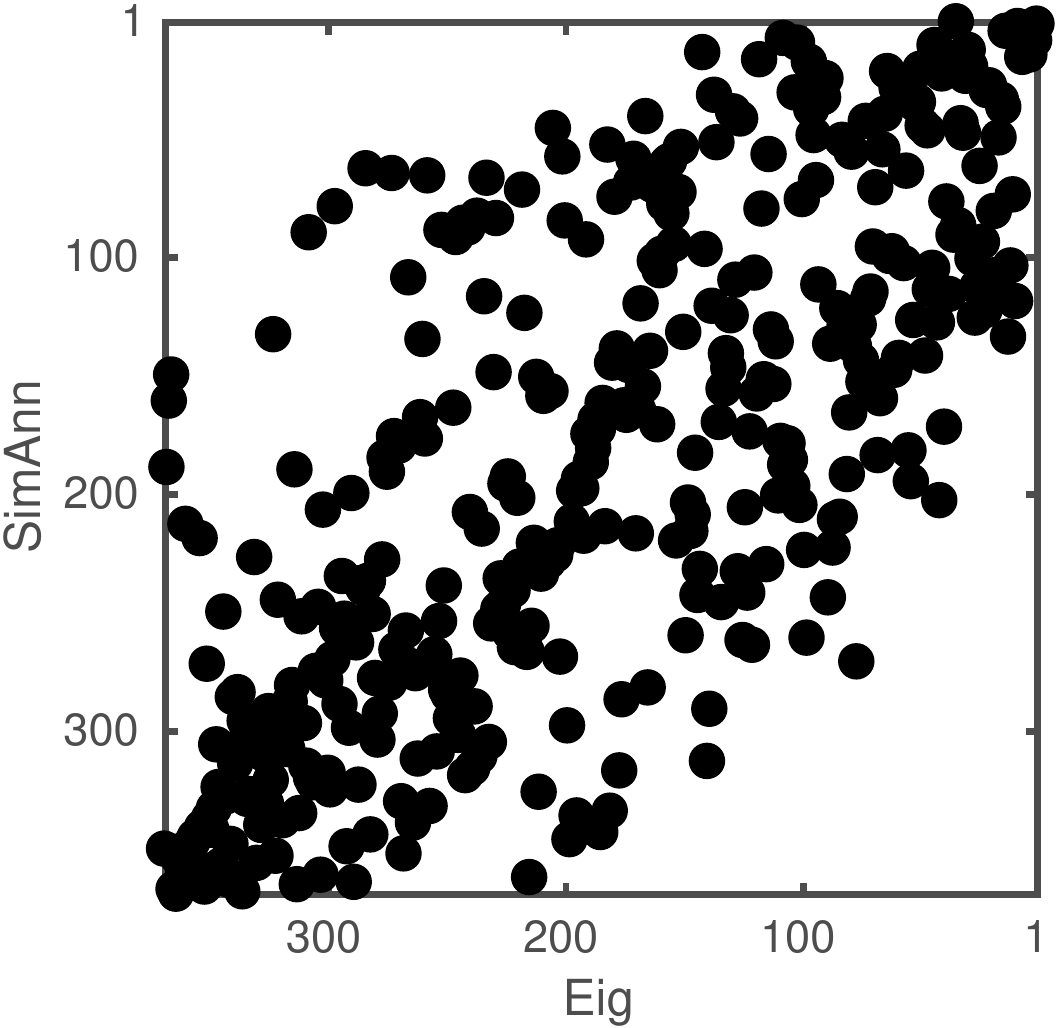} \\
 \hline
 \textit{Kendal $\tau$} \!\!\!\!\!\!\!\!\!\!\!\!\!& 0.1442 & 0.2930 & 0.5455
\end{tabular}
\caption{Top: Scatter plots comparing the ranking associated with the three core score functions: Eigenvector centrality, Sim-Ann and NSM. Bottom: Kendal $\tau$ correlation coefficient between the three pairs of rankings shown in the corresponding scatter plot.
}\label{fig:scatter_london}
\end{figure}

For a comparison across all $369$ stations,  
Figure~\ref{fig:scatter_london} scatter plots the 
rankings for the three methods in a pairwise manner. 
We see that the left and middle plots, 
NSM versus Eig and NSM versus Sim-Ann, 
show much less agreement 
than the third, Eig versus Sim-Ann. This is confirmed by the Kendal $\tau$ correlation coefficients between the different rankings, shown at the bottom of  Figure~\ref{fig:scatter_london}.

\section{Discussion}\label{sec:disc}
The approach in \cite{borgatti2000models,rombach2014core,Zhang15}
was to set up a discrete optimization problem and then apply heuristic algorithms that are not 
guaranteed to find a global minimum.
Our work differs by 
relaxing the problem before addressing the 
computational task.
We showed that a relaxed analogue of a natural 
discrete optimization problem 
allows for a globally convergent iteration that 
is feasible for large, sparse, networks.
This philosophy is in line with classical
and widely used reordering and clustering methods that make use of the 
Fiedler or the Perron--Frobenius eigenvectors \cite{EH10}. However,  in the core--periphery setting 
considered here, the resulting relaxed problem is equivalent to an eigenvalue problem that is inherently nonlinear and is reminiscent of more recent clustering and reordering techniques that exploit nonlinear eigenvectors \cite{buhler2009spectral,tudisco2017node,tudisco2018nodal,tudisco2017community}.
Hence, we developed new results in nonlinear Perron--Frobenius theory 
in order to derive and analyze the algorithm. 

As with all clustering, partitioning and reordering methods in network science, there is no 
absolute gold standard against which to judge results---the underlying problems may be defined
in many different ways.
In this work we introduced a new random graph model that 
(a) gives further justification for our algorithm, and 
(b) provides one basis for systematic comparison of methods.
Maximum likelihood results on synthetic networks with 
planted structure 
showed the effectiveness of the new method, as did 
qualitative visualizations 
and 
quantitative tests  
across a range of application areas.

\bigskip

\textbf{Acknowledgments}
We thank Mason A. Porter  
for supplying code 
that implements the algorithm in 
\cite{rombach2014core,rombach2017core}. 
EPSRC Data statement: 
all data and code related to this work are 
publicly available, 
and may be obtained by following the 
links in the text or by consulting the associated references.

% \bibliographystyle{siamplain}
% \bibliography{references_core_periphery}

\end{document}